\documentclass[runningheads]{llncs}
\usepackage[T1]{fontenc}
\usepackage{graphicx}

\usepackage{url}
\usepackage[hidelinks,breaklinks=true]{hyperref}
\usepackage{breakcites}
\usepackage[depth=4]{bookmark}
\usepackage{amssymb,amsmath,amsfonts}
\usepackage{stmaryrd}
\usepackage{mathtools}
\usepackage{galois}
\usepackage{xspace}
\usepackage{algorithmic, listings}
\usepackage{array}
\usepackage{orcidlink}             %
\usepackage{enumitem}
\usepackage{longtable, multirow, multicol, makecell}

\usepackage{trfrac} %

\usepackage{tikz-cd}
\usetikzlibrary{calc,decorations.pathmorphing}
\usetikzlibrary{arrows.meta,positioning,shapes,shadows}

\usepackage{changepage} %
\usepackage{framed} %

\usepackage{color}

\urlstyle{rm}

\usepackage[misc,geometry]{ifsym}  %
\newcommand{\corresauthor}{$^{(\textrm{\Letter})}$} %

\newcommand{\figpath}{./Figures}
\newcommand{\bibpath}{../../../bibtex/clip}

\newcommand{\rr}{\mathbb{R}}
\newcommand{\qq}{\mathbb{Q}}
\newcommand{\zz}{\mathbb{Z}}
\newcommand{\nn}{\mathbb{N}}
\newcommand{\dd}{\mathcal{D}}
\newcommand{\pp}{\mathcal{P}}

\newcommand{\rri}{{\overline{\mathbb{R}}}}
\newcommand{\vn}{\vec{n}}
\newcommand{\va}{\vec{\alpha}}
\newcommand{\vy}{\vec{y}}
\newcommand{\vx}{\vec{x}}
\newcommand{\vyi}{\vec{y_{i}}}
\newcommand{\vyip}{\vec{y'_{i}}} %
\newcommand{\vysol}{\vec{y_{sol}}}
\newcommand{\vphib}{\vec{b}'}
\newcommand{\vb}{\vec{b}}
\newcommand{\vzero}{\vec{0}} %
\newcommand{\ymat}{\mathcal{Y}}
\newcommand{\ypmat}{\mathcal{Y}'} %

\newcommand{\ciao}{\texttt{Ciao}\xspace}
\newcommand{\ciaopp}{\texttt{CiaoPP}\xspace}

\newcommand{\lfp}{\mathrm{lfp}\,}
\newcommand{\gfp}{\mathrm{gfp}\,}
\newcommand{\fsol}{f_{sol}}
\newcommand{\Fix}{\mathrm{Fix}}
\newcommand{\PreFix}{\mathrm{Prefp}}
\newcommand{\PostFix}{\mathrm{Postfp}}

\NewCommandCopy{\rawPhi}{\Phi}
\renewcommand{\Phi}{\mathrm{\rawPhi}}
\NewCommandCopy{\rawPsi}{\Psi}
\renewcommand{\Psi}{\mathrm{\rawPsi}}
\NewCommandCopy{\rawDelta}{\Delta}
\renewcommand{\Delta}{\mathrm{\rawDelta}}

\newcommand{\Phil}{\Phi_{\mathrm{lin}}}

\newcommand{\sgn}{\mathrm{sgn}}
\newcommand{\id}{\mathrm{id}}
\newcommand{\ite}{\mathrm{ite}}

\newcommand{\Vect}{\mathrm{Vec}}
\newcommand{\Card}{\mathrm{Card}}

\newcommand{\dimdom}{d}
\newcommand{\inputset}{\mathcal{I}}
\newcommand{\trainingset}{\mathcal{T}}
\newcommand{\basefuns}{\mathcal{F}}
\newcommand{\candf}{\hat{f}}
\newcommand{\modelspace}{\mathcal{M}} %
\newcommand{\numfeatures}{p} %
\newcommand{\numtraining}{n} %

\newcommand{\sexact}{\checkmark}
\newcommand{\sexactnum}{\widetilde{\checkmark}}
\newcommand{\ssim}{\sim}
\newcommand{\ssimnum}{\approx}
\newcommand{\sTheta}{\mathrm{\Theta}}

\newcommand{\sbadlb}{\mathrm{\Omega}}
\newcommand{\sbadub}{\mathrm{O}}
\newcommand{\snone}{\texttt{None}}
\newcommand{\snotabound}{\times}

\newcommand{\rO}{\mathrm{O}} %

\newcommand{\Prog}{\texttt{Prog}}
\newcommand{\States}{\mathrm{\Sigma}}
\newcommand{\Labels}{\mathcal{L}}

\newcommand{\Env}{\texttt{Env}}
\newcommand{\Mem}{\texttt{Mem}}
\newcommand{\Vars}{\mathcal{V}}
\newcommand{\Vals}{\texttt{Data}}

\newcommand{\CostModel}{\texttt{CostModel}}
\newcommand{\sem}[1]{\llbracket#1\rrbracket}
\newcommand{\semc}[1]{\sem{#1}_c}
\newcommand{\semio}[1]{\sem{#1}_{io}}
\newcommand{\semic}[1]{\sem{#1}_{ic}}
\newcommand{\semioc}[1]{\sem{#1}_{ioc}}
\newcommand{\traces}{\mathcal{T}}

\newcommand{\ttc}{\mathcal{T}^c}
\newcommand{\ttcf}{\mathcal{T}^c_\mathrm{fin}}
\newcommand{\alphaioc}{\alpha_{ioc}}

\newcommand{\predicate}{\texttt{p}}
\newcommand{\predicatebis}{\texttt{q}}
\newcommand{\VarsInput}{{\Vars_{i}}}
\newcommand{\VarsOutput}{{\Vars_{o}}}

\newcommand{\Size}{\texttt{Size}}
\newcommand{\intervals}{\mathcal{I}}

\newcommand{\alphafun}{\overset{\rightarrow}{\alpha}}
\newcommand{\gammafun}{\overset{\rightarrow}{\gamma}}
\newcommand{\alphaB}{\alpha_\boundaries}
\newcommand{\gammaB}{\gamma_\boundaries}

\newcommand{\eqleq}{\sqsubseteq} %

\newcommand{\concretedom}{\mathfrak{D}}
\newcommand{\abstractdom}{\concretedom^\sharp}

\newcommand{\Eqs}{\texttt{Eqs}}
\newcommand{\EqsFull}{\texttt{EqsM}}
\newcommand{\EqsRestr}{\texttt{EqsA}}
\newcommand{\MaxOp}{{\textrm{Max}}}
\newcommand{\MinOp}{{\textrm{Min}}}
\newcommand{\ChoiceRestrOp}{{\textrm{Choice}}}

\newcommand{\SumOp}{{\textrm{Sum}}}
\newcommand{\ScaleOp}{{\textrm{Scale}}}
\newcommand{\ComposeOp}{{\textrm{Comp}}}
\newcommand{\PrecompOp}{{\textrm{Rec-Call}}}
\newcommand{\GroundOp}{{\textrm{Base-Case}}}

\newcommand{\rkfun}{r}

\newcommand{\Brs}{\texttt{Brs}}
\newcommand{\Bb}{\texttt{Bb}}
\newcommand{\Nrs}{\texttt{Nrs}}
\newcommand{\Nb}{\texttt{Nb}}

\newcommand{\hcir}{HC IR\xspace}

\newcommand{\bb}{\mathbb{B}}
\newcommand{\boundaries}{\mathfrak{B}}

\newcommand{\benchid}[2]{\ifthenelse{\equal{\nameorkey}{name}}{#1}{#2}}
\newcommand{\nameorkey}{key}
\newcommand{\catwithnoise}{with\_noise}
\newcommand{\runningexname}{runn\_ex} %
\newcommand{\runningexkey}{1}
\newcommand{\runningex}{\benchid{\runningexname}{\runningexkey}}
\newcommand{\noiseddtwotwoname}{noisy2} %
\newcommand{\noiseddtwotwokey}{2}
\newcommand{\noiseddtwotwo}{\benchid{\noiseddtwotwoname}{\noiseddtwotwokey}}
\newcommand{\fillandfreename}{ffree} %
\newcommand{\fillandfreekey}{3}
\newcommand{\fillandfree}{\benchid{\fillandfreename}{\fillandfreekey}}
\newcommand{\noiseddthreeonename}{noisy3} %
\newcommand{\noiseddthreeonekey}{4}
\newcommand{\noiseddthreeone}{\benchid{\noiseddthreeonename}{\noiseddthreeonekey}}
\newcommand{\catnonmonot}{nonmonot}
\newcommand{\incronename}{incr1} %
\newcommand{\incronekey}{5}
\newcommand{\incrone}{\benchid{\incronename}{\incronekey}}
\newcommand{\noisystrtonename}{noisy1} %
\newcommand{\noisystrtonekey}{6}
\newcommand{\noisystrtone}{\benchid{\noisystrtonename}{\noisystrtonekey}}
\newcommand{\memorythreename}{memory} %
\newcommand{\memorythreekey}{7}
\newcommand{\memorythree}{\benchid{\memorythreename}{\memorythreekey}}
\newcommand{\multiphaseonename}{mphase} %
\newcommand{\multiphaseonekey}{8}
\newcommand{\multiphaseone}{\benchid{\multiphaseonename}{\multiphaseonekey}}
\newcommand{\catdivandconq}{div\_and\_conq}
\newcommand{\binsearchname}{b\_search}  %
\newcommand{\binsearchkey}{9}
\newcommand{\binsearch}{\benchid{\binsearchname}{\binsearchkey}}
\newcommand{\qsortonename}{qsort1} %
\newcommand{\qsortonekey}{10}
\newcommand{\qsortone}{\benchid{\qsortonename}{\qsortonekey}}
\newcommand{\divandconqmultioneonename}{divcon\_m} %
\newcommand{\divandconqmultioneonekey}{11}
\newcommand{\divandconqmultioneone}{\benchid{\divandconqmultioneonename}{\divandconqmultioneonekey}}
\newcommand{\catsuperpoly}{superpoly}
\newcommand{\exponename}{exp1}
\newcommand{\exponekey}{12}
\newcommand{\expone}{\benchid{\exponename}{\exponekey}}
\newcommand{\exptwoname}{exp2}
\newcommand{\exptwokey}{13}
\newcommand{\exptwo}{\benchid{\exptwoname}{\exptwokey}}
\newcommand{\factname}{fact}
\newcommand{\factkey}{14}
\newcommand{\fact}{\benchid{\factname}{\factkey}}
\newcommand{\catmisc}{misc} %
\newcommand{\basiconename}{basic1}
\newcommand{\basiconekey}{15}
\newcommand{\basicone}{\benchid{\basiconename}{\basiconekey}}
\newcommand{\sumoscname}{sum-osc}
\newcommand{\sumosckey}{16}
\newcommand{\sumosc}{\benchid{\sumoscname}{\sumosckey}}
\newcommand{\divname}{div}
\newcommand{\divkey}{17}
\newcommand{\divben}{\benchid{\divname}{\divkey}}
\newcommand{\nestedname}{nested}
\newcommand{\nestedkey}{18}
\newcommand{\nested}{\benchid{\nestedname}{\nestedkey}}
\newcommand{\openzipname}{open-zip}
\newcommand{\openzipkey}{19}
\newcommand{\openzip}{\benchid{\openzipname}{\openzipkey}}
\newcommand{\mergename}{merge}
\newcommand{\mergekey}{20}
\newcommand{\mergeben}{\benchid{\mergename}{\mergekey}}
\newcommand{\loopustarjanname}{loopus\_t} %
\newcommand{\loopustarjankey}{21}
\newcommand{\loopustarjan}{\benchid{\loopustarjanname}{\loopustarjankey}}
\newcommand{\prototypename}{Prototype}
\newcommand{\linregonlyname}{LinReg only}
\newcommand{\srname}{SymbReg only}
\newcommand{\ramlname}{RaML}
\newcommand{\cofloconame}{Cofloco}
\newcommand{\mathematicaname}{Mathematica}

\setlist{nolistsep}

\usepackage[colorinlistoftodos,textwidth=2.2cm]{todonotes}
\newcommand{\lrnote}[1]{\todo[backgroundcolor=cyan!20]{{\bf LR:}~#1}}
\newcommand{\plgnote}[1]{\todo[backgroundcolor=orange!20]{{\bf PLG:}~#1}}
\newcommand{\mhnote}[1]{\todo[backgroundcolor=blue!20]{{\bf MH:}~#1}}
\newcommand{\jfmnote}[1]{\todo[backgroundcolor=green!20]{{\bf JM:}~#1}}
\renewcommand{\lrnote}[1]{}
\renewcommand{\plgnote}[1]{}
\renewcommand{\mhnote}[1]{}
\renewcommand{\jfmnote}[1]{}

\begin{document}
\pgfdeclarelayer{marx}
\pgfsetlayers{main,marx}
\providecommand{\cmark}[2][]{%
  \begin{pgfonlayer}{marx}
    \node [nmark] at (c#2#1) {#2};
  \end{pgfonlayer}{marx}
  }
\providecommand{\cmark}[2][]{\relax}

\title{An Order Theory Framework of Recurrence Equations
        for Static Cost Analysis
       -- Dynamic Inference of %
          Non-Linear Inequality Invariants
       }

\titlerunning{An Order Theory of Recurrence Equations for Cost Analysis}
\author{
  Louis Rustenholz\inst{1,3}\corresauthor\orcidlink{0000-0002-1599-2431}
  \and
  Pedro L\'{o}pez-Garc\'{\i}a\inst{2,3}\orcidlink{0000-0002-1092-2071}
  \and
  Jos\'{e} F. Morales\inst{1,3}\orcidlink{0000-0001-9782-8135}
  \and
  Manuel V. Hermenegildo\inst{1,3}\orcidlink{0000-0002-7583-323X}
}
\authorrunning{L. Rustenholz et al.}
%
\institute{
  Universidad Polit\'{e}cnica de Madrid (UPM), Spain
  \and
  Spanish Council for Scientific Research (CSIC), Spain\\
  \email{pedro.lopez@csic.es}
  \and
  IMDEA Software Institute, Spain\\
  \email{\{louis.rustenholz, josef.morales, manuel.hermenegildo\}@imdea.org}
}

\maketitle
\vspace{-2em}
\begin{abstract}

Recurrence
equations have played a central role in static cost analysis, where
they can be viewed as abstractions of programs and used to infer
resource usage information without actually running the programs with
concrete data. Such information is typically represented as functions
of input data sizes.
More generally, recurrence
equations have been increasingly used
to automatically obtain \emph{non-linear} numerical invariants.
However, state-of-the-art recurrence solvers and cost analysers
suffer from
serious limitations when dealing with the (complex) features of
recurrences arising from cost
analyses.
We address this challenge by developing a novel order-theoretical
framework where recurrences are viewed as operators and their
solutions as fixpoints, which allows leveraging powerful
pre/postfixpoint search techniques. We prove useful properties and
provide
principles and insights that enable us to develop
techniques
and combine them to
design new
solvers.
We have also implemented and experimentally evaluated an
optimisation-based instantiation of the proposed approach.
The results are quite promising: our prototype outperforms
state-of-the-art cost analysers and recurrence solvers, and can infer
tight non-linear lower/upper bounds, in a reasonable time, for complex
recurrences representing diverse program behaviours.

\keywords{Static Cost Analysis \and Resource Usage Analysis \and Static Analysis \and Recurrence Equations.}
\vspace{-1em}
\end{abstract}

\section{Introduction} %
\vspace{-1em}

Recurrence equations
have played a central role in static cost
analysis and verification
for decades, having been used in many automated 
tools~\cite{granularity-short,caslog-short,low-bounds-ilps97-short,resource-iclp07-short,ciaopp-sas03-journal-scp-shortest,AlbertAGP11a-short,plai-resources-iclp14-short,gen-staticprofiling-iclp16-short,montoya-phdthesis,resource-verification-tplp18-shortest,LommenGiesl23}
ever since Wegbreit's seminal work in 1975~\cite{Wegbreit75}.
The goal of static cost analysis
is to infer information about the resources used by programs
(e.g. \emph{execution time}, \emph{memory}, \emph{energy consumption})
without actually running them with concrete data.
Since program resource
consumption generally depends on input data values, the inferred
resource usage information is
typically presented as functions of some abstraction of the program
input (e.g. size) and possibly other (environmental) parameters. The resulting
cost functions are
generally desired
to be correct everywhere (in the sense of safe approximations,
i.e. lower/upper bounds).
This is a requirement in some applications, such as
verification of resource-related
specifications~\cite{resource-verification-tplp18-shortest},
automated parallelisation with
guarantees~\cite{granularity-short,DBLP:journals/pacmpl/WestrickFRA24},
security (e.g. detection of side channel attacks), and blockchain
(e.g. smart-contract gas analysis)~\cite{resources-blockchain-sas20-short,gasol-tacas-2020-short}.
Moreover, in these and other applications,
such as automatic program optimisation, asymptotic approximations
(i.e. just complexity orders) are insufficient, as the techniques
employed need functions that yield concrete, accurately approximated
values.
\mhnote{This paragraph could be skipped here if we really need the
  space --otherwise it is useful of course.}
Recent works in cost analysis have come to the realisation that, more deeply,
\emph{recurrences can be viewed as an abstraction of programs},
stripped from information irrelevant to cost,
or simply as programs themselves.
Recurrences can be executed, and we are
interested in overapproximating their semantics, i.e. in finding
bounds on their solutions. Various properties of programs can be
understood by focusing on their recurrence representations.

In a broader program analysis sense, recurrence
equations have been
increasingly used in the last few years
to automatically obtain \emph{non-linear} numerical
invariants~\cite{kovacs-phdthesis,kovacs17-short,kincaid2018,LommenGiesl23,Wang-PRS23-short}.
\plgnote{The following sentence can be removed/summarised if needed.}
Indeed, recurrence solving allows deducing highly non-linear (global)
invariants from simple (local) relations on program variables.

The challenge we address in this paper originates from the limitations
of state-of-the-art recurrence solving and cost analysis approaches.
First, general-purpose techniques are insufficient.
Since the numerical properties relevant to static cost analysis are
highly non-linear, we cannot simply use a general-purpose
abstract domain such as polyhedra, which would not directly support
programs with even only quadratic complexity.
Similarly, Computer Algebra Systems (CAS) typically focus on
monovariate \emph{difference} equations~\cite{Karr81} with very simple
recursive calls but complex %
coefficients, whereas the
opposite occurs in cost analysis, where coefficients are generally
simple but we are faced with multivariate equations, complex recursive
calls, conditional/non-linear/non-deterministic behaviour, etc.
Second, despite significant progress on specialised solvers, they
too do not support the breadth of
behaviours arising from programs.
As reported in~\cite{mlrec-tplp2024-preprint}, tools such as \texttt{RaML},
\texttt{\ciaopp}, %
\texttt{PUBS}, \texttt{Cofloco}, \texttt{KoAT}/\texttt{LoAT},
\texttt{Loopus}, or
\texttt{Duet}~\cite{DBLP:journals/toplas/0002AH12-short,ciaopp-sas03-journal-scp-shortest,AlbertAGP11a-short,montoya-phdthesis,GieslKoat22,loat-descr-22,Sinn17-loopus,kincaid2018}
all present important limitations and are each unable to infer tight
bounds for some classes of programs, that exhibit, e.g.
\lrnote{But this takes more space. Can we compress?}
non-linear recursive, amortised, non-monotonic and/or multiphase
behaviour, even when a simple closed-form solution exists.
Recently, approaches in the spirit of dynamic invariant generation
have been proposed to circumvent the difficulties of the equation
itself and focus instead on its solution~\cite{mlrec-tplp2024-preprint} (syntax versus
semantics), but are currently limited to solutions that can be
expressed \emph{exactly} using some templates, and are not able to
obtain (safe) bounds for equations that do not admit an exact solution
easily expressed in closed form.

Motivated by these limitations, %
we introduce a novel
approach to recurrence solving, which
leverages 
classical results of order theory,
and makes it possible to infer tight non-linear bounds for complex
recurrence equations
arising from
programs that exhibit diverse behaviour (e.g. divide-and-conquer,
multiphase, hardware-related noise, non-monotone resources), that do
not admit simple solutions.
More specifically, our contributions are as follows:
first, we introduce a novel order-theoretical viewpoint on recurrence solving,
that allows us to leverage powerful pre/postfixpoint search
techniques.
Indeed, we introduce a simple but powerful proof
principle: for \emph{monotone} equations, pre/postfixpoints
provide lower/upper bounds on the solution respectively.
Thanks to the theoretical framework we develop, we prove properties
and obtain a new understanding of the bound search problem in
recurrence solving.
We witness underlying algebraic and geometric structures that can
be utilised to design novel, more accurate %
solving methods. %
In particular, we investigate the completeness of our proof
principle. This unveils %
cones of postfixpoints, with rays related to ranking functions for
the equation, that provide strategies to repair and fine-tune
candidate bounds.%

These theoretical contributions enable the development of new and
accurate algorithms. We propose several instantiations of our approach,
and implement one of them as a proof-of-concept, that uses
dynamic invariant generation and constrained optimisation
techniques. %
Using a machine-learning vocabulary, its results suggest that,
when searching for models, enforcing ``local verifiability''
beyond local correctness can improve accuracy.
This paper extends previous work in several directions. In
particular, it improves the methods presented
in~\cite{mlrec-tplp2024-preprint} with the ability to infer \emph{safe}
approximations, i.e. bounds of the solution of difficult equations.
We perform an experimental evaluation on recurrences arising
from programs displaying diverse properties, and show that our
proof-of-concept solver performs better than the state-of-the-art
on various categories of benchmarks.

Finally, additional results are presented %
to facilitate and foster discussion.
The ``recurrences as (abstractions of) programs''
approach to cost analysis is described in a more
order-theoretical language than classical in the literature, and
Galois connections between abstract domains of functions
relevant to our viewpoint on recurrences are introduced, so that
they may be extended to new, non-discrete applications.

The rest of this paper is organised as follows.
Sections~\ref{sec:running-example}--\ref{sec:context} introduce the
setting with an illustrative example (Section~\ref{sec:running-example}),
background information and notations (Section~\ref{sec:preliminaries}), as
well as motivation and context from cost analysis presented in an
order-theoretical language (Section~\ref{sec:context}).
The main theoretical contributions are then described:
Section~\ref{sec:eqs-as-ops} introduces the
equations as operators viewpoint and our proof principle,
Section~\ref{sec:grammars-of-eqs} presents a syntax of equations
and inductively proves properties on them,
after which Section~\ref{sec:cones} investigates robustness and
extensions of our proof principle, unveiling algebraic and geometrical
structures.
We then move on to algorithmic contributions: candidate generation
(Section~\ref{sec:cand-generation}) and repair
(Section~\ref{sec:repair}) strategies are introduced.
In particular, Section~\ref{subsec:optim-approach} describes a
constrained optimisation-based approach, whose implementation and
experimental evaluation are presented in
Section~\ref{sec:implementation-and-evaluation}.
Finally, Section~\ref{sec:related-work} discusses related work, and
Section~\ref{sec:conclusion} summarises conclusions and lines for
future work.
Additionally, Appendix~\ref{sec:A-bounds} sketches a relevant abstract
domain framework with the aim to encourage discussion.

\newpage

\section{Illustrative Example}
\label{sec:running-example}

Consider a program with quadratic time complexity, performing a linear
time action at each step, and additionally interacting with a buffer,
slowly filled over time and flushed when it is full, so that rare but
costly events can occur.
For example,
imagine a program processing a list of initial length $n$, along
with a buffer containing $c$ elements and of maximal allowed size $100$.
At each step, until the list is empty, it goes through all of it (cost
$n$), selects the largest element, pops it, and pushes it to a buffer.
When the buffer is full, it flushes it and prints it to screen (cost
$300$), a rare but costly event. Finally, when the list is empty, all
unprinted elements are freed from the buffer (cost $c$).
Such a program, independently of the programming language it originates from, can
be represented as the following equation.
{\footnotesize
\begin{equation}\label{eq:running-ex-1}
     f(n, c) = \begin{cases}
       f(n-1, 0)   + n + 300 & \text{ if } n>0 \text{ and } c\geq100,\\
       f(n-1, c+1) + n       & \text{ if } n>0 \text{ and } c<100,\\
       c                     & \text{ if } n=0,\\
      \end{cases}
\end{equation}}%
We denote its solution by $\fsol$. If the buffer is initially empty,
the total cost as a function of $n$ is $f_0(n):=\fsol(n, 0)$.
While the quadratic cost component is simply
$\sum_{k\leq n}k=n(n+1)/2$, the noisy behaviour introduced by
the buffer interaction leads to an exact solution that is
difficult to express with a simple closed-form expression.
Current tools do not obtain tight bounds for this example but
coarse overapproximations. For example,
\texttt{\ramlname}~\cite{DBLP:journals/toplas/0002AH12-short} only infers
that
$\forall n,\,\frac{1}{2}n^2+\frac{1}{2}n \leq f_0(n)\leq\frac{1}{2}n^2+\frac{1}{2}n+300n,$ %
i.e. it overapproximates by triggering the bad event at either every
or no iteration. \texttt{\cofloconame}~\cite{montoya-phdthesis}
obtains even coarser overapproximations: for the
recursive case $n\geq 1$, $n \leq f_0(n) \leq 2n^2+300n+100$. Similar
or more serious difficulties can be observed with cost analysers such as
\texttt{PUBS}, \texttt{KoAT}/\texttt{LoAT}, \texttt{\ciaopp}'s builtin
solver, \texttt{Loopus} and \texttt{Duet}. CAS such as \texttt{Sympy},
\texttt{PURRS} or \texttt{Mathematica} simply do not handle such
equations.
The method of~\cite{mlrec-tplp2024-preprint}, purely based on
regression applied to input-output samples, produces intermediate
approximations such as $f_0(n)\approx 0.5 n^2 + 3.435n + 210.3$ for
$n\geq 1$, and hence witnesses that bad events are expected
to occur around once every $101$ steps ($(3.435-0.5)/300\approx
102.2$), but cannot prove this result, since this is not an exact
solution. Thus, it does not obtain any \emph{safe} approximation.

Our novel method views
$(\ref{eq:running-ex-1})$ as the following operator.
{\footnotesize
\begin{equation*}\label{eq:running-ex-operator}
  \begin{aligned}
  \Phi : (\nn^2 \to \rr) &\to (\nn^2 \to \rr)\\
         f &\mapsto (n,c) \mapsto \begin{cases}
       f(n-1, 0)   + n + 300 & \text{ if } n>0 \text{ and } c\geq100,\\
       f(n-1, c+1) + n       & \text{ if } n>0 \text{ and } c<100,\\
       c                     & \text{ if } n=0.
      \end{cases}
  \end{aligned}
\end{equation*}}

With this viewpoint, Theorem~\ref{th:proof-principle} and monotonicity
of $\Phi$ in $f$ make it possible to deduce lower and upper bounds on
$\fsol$ from pre- and postfixpoints of $\Phi$, which can be obtained
by leveraging a variety of classical techniques.
For example, we could discover a simple candidate upper bound
$\candf$, given by $\candf(n,c)=c$ if $n=0$, $n(n+1)/2+3n+3c$ if
$n\neq0$ and $c<100$; and $n(n+1)/2+3n+297$ otherwise. Then, check with a CAS
that $\Phi\candf\leq\candf$, which guarantees that $\fsol\leq\candf$.
In fact, besides other algorithmic ideas and theoretical framework,
we present in this paper a prototype that implements
a simple approach based on constrained optimisation, which obtains the
following bounds for all $n\geq1$.
{\footnotesize
\begin{equation*}\frac{n(n+1)}{2}+\frac{300}{101}n+100 \leq f_0(n) \leq
  \frac{n(n+1)}{2}+\frac{300}{101}n+100\cdot\frac{300}{101}
\end{equation*}}

This models cost precisely, in its quadratic and linear terms, without
simply assuming that the bad event always happens.
In fact, those are the best possible polynomial bounds on the exact
solution $\frac{n(n+1)}{2}+n+299+199\lfloor\frac{n-1}{101}\rfloor$.
More complete information is available in
Table~\ref{table:runnex_results} of Appendix~\ref{sec:benchmark-tables}.

Beyond this simple example, exact solutions can become even harder to
express with simple closed-form expressions, but we can still hope to
obtain \emph{bounds} on the solution, without losing too much
accuracy.
Our approach allows us to do so. Similarly
to~\cite{mlrec-tplp2024-preprint}, it has few restrictions on equation
syntax, focusing instead on its solution, but
unlike~\cite{mlrec-tplp2024-preprint}, it is not limited to guessing
exact solutions, and can search through the full space of (inductive)
bounds, which is much larger.

While our approach supports more difficult benchmarks, and is not
limited to polynomial bounds (in fact, any user-defined template can
be used), this example shows that tight bounds not discovered by the
state-of-the-art methods can be found, even for benchmarks restricted to
linear arithmetic. %

An additional motivation for this example is that, in future work, we
plan to use these techniques to improve interaction between cost/size
abstract domains, and abstract domains for low-level properties such as
cache~\cite{wilhelm-cache-analysis-survey-2016-short}, that allow
obtaining partial information on such ``rare bad events'' relevant to
cost properties. We believe this is a promising direction to bridge
the gap between two families of automated cost analysis techniques:
those that support general programs but typically use very
coarse-grained cost models, with those common in WCET communities that
support fine-grained models but only limited programming patterns,
often requiring user annotations to handle loops and recursion.

\section{Preliminaries}
\label{sec:preliminaries}

\subsection{Notation}

Boldface characters represent vectors. Depending on the context,
$\vzero$ is a zero finite-dimensional vector, a zero matrix, or the
constant function $\vn\mapsto 0$. Given two sets $A$ and $B$, $B^A$
and $A\to B$ both refer to the set of functions from A to B, the
former insisting on its natural vector space structure. $\nn$ and
$\rr_+$ both include $0$. Unless explicitly stated otherwise,
functions are ordered pointwise and vectors are ordered elementwise.
$f$ and $\candf$ both refer to arbitrary functions, but the latter is
used to convey that it is thought of as a candidate bound. $\fsol$
denotes an exact solution. Operators on functions are usually denoted
by $\Phi$ when they are interpreted as recurrence equations, and by
$\Psi$ otherwise. The function $\ite$ is defined by $\ite(\top,x,y)=x$
and $\ite(\bot,x,y)=y$.
Given $f:A\to B$ and $A'\subseteq A$, $f_{|A'}:A'\to B$ is the
restriction of $f$ to $A'$.

\subsection{Order Theory, Prefixpoints and Postfixpoints Conventions}

We recall some classical definitions of order-theory in cases where it may
be helpful to disambiguate. We refer the reader to \cite{davey-order-book,Cousot21-book} for more
background. %

\begin{definition}
  We say that a function $f:(X, \sqsubseteq) \to (Y, \leq)$ is
  \emph{monotone} whenever
  $\forall x,y\in X,\,  x\sqsubseteq y \implies f(x)\leq f(y)$,
  and that it is \emph{antimonotone} when
  $\forall x,y\in X,\,  x\sqsubseteq y \implies f(x)\geq f(y)$.
\end{definition}
\vspace{-0.1em}
We adopt the following convention.
\vspace{-0.1em}
\begin{definition}[Prefixpoints and postfixpoints]\-

  \noindent Given a function $f:(X, \sqsubseteq) \to (X, \sqsubseteq)$
  and $x\in X$, we say that
  \begin{itemize}
    \item $x$ is a \emph{fixpoint} of $f$ whenever $f(x) = x$,
    \item $x$ is a \emph{prefixpoint} of $f$ whenever $x \sqsubseteq f(x)$,
    \item $x$ is a \emph{postfixpoint} of $f$ whenever $f(x) \sqsubseteq x$.
  \end{itemize}
  The sets of fixpoints, prefixpoints, and postfixpoints of $f$ are
  respectively written as $\Fix(f)$, $\PreFix(f)$, and $\PostFix(f)$.
\end{definition}
Note that the opposite convention can be found, reversing prefixpoints and
postfixpoints~\cite{SmythPlotkin82,davey-order-book}.
The convention we choose is more common in abstract interpretation
literature~\cite{Cousot79}
because of its interpretation in the context of ascending/descending
Kleene sequences.
\begin{definition}
  The minimum element of $\Fix(f)$, if it exists, is called the
  \emph{least fixpoint} (lfp) of $f$, written $\lfp f$.
  Similarly, the maximum element of $\Fix(f)$, if it exists, is called the
  \emph{greatest fixpoint} (gfp) of $f$, written $\gfp f$.
\end{definition}

\noindent We will make use of the following classical result.
\begin{theorem}[Knaster-Tarski, \cite{Knaster28,Tarski55}]\-
  \label{th:knaster-tarski}

  \noindent
  Let $(L,\sqsubseteq)$ be a \emph{complete lattice} and $f:L\to L$ be monotone.
  \begin{itemize}
    \item $\Fix(f)$ is a complete lattice. In particular, it is not empty.
    \item $\lfp f = \bigsqcap \PostFix(f)$. In particular,
      $\forall x\in X,\, f(x) \sqsubseteq x \implies \lfp f \sqsubseteq x.$
    \item $\gfp f = \bigsqcup \PreFix(f)$. In particular,
      $\forall x\in X,\, x \sqsubseteq f(x) \implies x \sqsubseteq \gfp f.$
  \end{itemize}
\end{theorem}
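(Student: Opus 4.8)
The plan is to establish the two fixpoint characterisations first (second and third bullets), and then bootstrap them to obtain the complete-lattice structure of $\Fix(f)$ (first bullet). Throughout I would use only the meets $\bigsqcap$ and joins $\bigsqcup$ guaranteed by completeness of $L$, together with monotonicity of $f$, which is the sole structural fact available.

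For the least fixpoint I would set $a := \bigsqcap \PostFix(f)$ and prove $a = \lfp f$ in three moves. First, for every postfixpoint $x$ (i.e.\ $f(x)\sqsubseteq x$) we have $a\sqsubseteq x$, so monotonicity gives $f(a)\sqsubseteq f(x)\sqsubseteq x$; taking the meet over all such $x$ yields $f(a)\sqsubseteq a$, so $a$ is itself a postfixpoint. Second, applying $f$ to this inequality gives $f(f(a))\sqsubseteq f(a)$, so $f(a)$ is also a postfixpoint and hence $a\sqsubseteq f(a)$; combined with the previous inequality this forces $f(a)=a$. Third, every fixpoint is trivially a postfixpoint, so $a$ lies below every fixpoint and is therefore the least one. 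The dual computation with $\bigsqcup \PreFix(f)$ produces the greatest fixpoint. The two ``in particular'' clauses then follow immediately: if $f(x)\sqsubseteq x$ then $x\in\PostFix(f)$, so $\lfp f\sqsubseteq x$, and dually for prefixpoints.

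For the first bullet I would bootstrap the above; non-emptiness is already settled since $\lfp f$ exists. Given an arbitrary $S\subseteq\Fix(f)$, I would produce its supremum inside $\Fix(f)$ (infima being dual). The key device is to relativise $f$ to the principal up-set $L_u := \{x\in L : u\sqsubseteq x\}$, where $u:=\bigsqcup S$ is the join computed in $L$. This $L_u$ is again a complete lattice, and $f$ maps it into itself: for $x\in L_u$ and any $s\in S$ we have $s\sqsubseteq u\sqsubseteq x$, hence $s=f(s)\sqsubseteq f(x)$ by monotonicity, and taking the join over $s\in S$ gives $u\sqsubseteq f(x)$, i.e.\ $f(x)\in L_u$. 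Applying the least-fixpoint result to $f_{|L_u}$ yields a fixpoint $w$ of $f$ lying above $u$ (hence above every element of $S$) which is the least fixpoint with that property; since any upper bound of $S$ in $\Fix(f)$ must dominate $u=\bigsqcup S$ and therefore lie in $L_u$, the element $w$ is exactly the supremum of $S$ in $\Fix(f)$.

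The main obstacle is this first bullet: unlike the $\lfp$/$\gfp$ statements, it cannot be read off from a single application of the construction. The delicate points are recognising that the naive join $\bigsqcup S$ need not be a fixpoint, and that the remedy is to re-run the least-fixpoint argument inside the sub-complete-lattice $L_u$; verifying that $L_u$ is closed under $f$ is where both monotonicity and the specific choice of $u$ as the join of $S$ are essential.
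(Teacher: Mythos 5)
The paper states Knaster--Tarski as a cited classical result and gives no proof of its own, so there is nothing to diverge from; your argument is the standard Tarski proof and is correct. Both halves are sound: the $\bigsqcap\PostFix(f)$ computation correctly yields $\lfp f$ (with the gfp case by duality), and the relativisation to the principal up-set $L_u$ above $u=\bigsqcup S$ is exactly the right device for showing $\Fix(f)$ is a complete lattice, since as you note the naive join of fixpoints need not itself be a fixpoint.
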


\vspace{-1em}
\subsection{Function Comparison}
\label{subsec:prelim-funcomp}

The approach presented in this paper reduces the problem of checking
whether a candidate function is a bound of the (unknown) solution to a
(monotone) equation to the problem of comparing two functions. This
can be formulated as follows: given two functions $f,g:\dd\to\rr$,
where $\dd\subseteq\nn^\dimdom$, we want to decide the
universally quantified formula $\forall\vn\in\dd,\,f(\vn)\leq^{(?)}g(\vn)$.

This is a much simpler, but still non-trivial problem, whose
difficulty obviously depends on the class of functions under
comparison. In general, it is an undecidable problem, remaining
undecidable even when restricting ourselves to global multivariate
polynomial functions with integer
domains~\cite{Mat70}. Nevertheless, complete
algorithms exist for comparing piecewise multivariate polynomials with
real number domains, providing a decidable sufficient condition for
the comparison of polynomials on integers.

The related work on function comparison in the context of cost
analysis~\cite{resource-verification-tplp18-shortest,resource-verif-2012-short,Albert2015-cost-func-comp}
provides sufficient conditions and techniques
for dealing with large classes of functions, including
polynomials and expressions with exponentials, logarithms, summations,
$\max$/$\min$, etc. These methods perform well in practice, despite being
incomplete in general.

If we are only concerned with whether the global inequality $f\leq g$
holds, the techniques employed by the work cited above are also
provided by contemporary general-purpose CAS. Since the function
comparison problem is beyond the scope of this paper, to facilitate
the implementation of our prototype, we chose to simply use
\texttt{Mathematica}~\cite{mathematica-v13-2}, for which we obtained
the best results compared to other off-the-shelf tools.
Further discussion is included
in a longer version of the paper.%
\footnote{Reference to the technical report skipped to respect
  lightweight double-blind process.}
\lrnote{Link to it?}\mhnote{Easiest thing is to make a technical
  report (an entry in clip.bib suffices) and cite it. We can also
  upload it to ArXiV, but that is only worth if of course if we think
  it is ready for people to read. MH: Done to TR.}

\vspace{-1em}
\subsection{Quadratic Programming, Sparse Constrained Linear Regression}
\label{subsec:prelim-qp}

Quadratic Programming~\cite{OptimML-11} (QP) refers to the process of
solving optimisation problems with quadratic objective functions and
linear constraints on the variables and is a generalisation of Linear
Programming (LP).
QP problems, with $n$ variables and $m$ constraints, can be formulated
as the optimisation problem
$\min_{\vx\in\rr^n}\vx^TP\vx + \vec{q}^T\vx$
under the constraint $G\vx\leq\vec{h}$,
where $P$ is a $n\times n$ \emph{symmetric} matrix, $\vec{q}$ is an
$n$-dimensional vector, $G$ is $m\times n$, and $\vec{h}$ is
an $m$-dimensional, all over $\rr$.
In other words, the quadratic objective function
$\vx\mapsto\vx^TP\vx+\vec{q}^T\vx$ is optimised over the convex
polyhedra $\{\vx\,|\,G\vx\leq\vec{h}\}$.
In many applications, including this paper, we are mostly interested
in the case where $P$ is positive definite (or at least semidefinite).
In this case, the problem is convex, efficient\footnote{In particular
when iterative or \texttt{float}-based. Exact methods for rational
inputs exist, but do not scale as well, similarly to the case of
Linear Programming.} algorithms and tools are available, and the
problem is equivalent to \emph{constrained least square}. Hence, QP is
the natural framework for (linearly) \emph{constrained} linear
regression.
We refer the reader to~\cite{Hastie-09a,Hastie-15a} for more background
on linear regression.

As described in~\cite{mlrec-tplp2024-preprint}, (sparse) linear
regression can be used to obtain non-linear invariants, by searching
for models $\hat{f}$ of
\emph{functions} $f:\dd\to\rr$ in an affine model space spanned by
base functions $f_i:\dd\to\rr$, using a set of observations
$\{(\langle\dots,f_i(\vn),\dots\rangle,\,f(\vn)\,|\,\vn\in\inputset\}$
which are produced from finite samples $\inputset\subseteq\dd$, but
the linear relationship is searched between the $f_i$'s and $f$.
In this context, lasso regularisation can be described as a way to
perform feature selection, so that base functions $f_i$ most relevant
to model $f$ are selected.  With this viewpoint, regression may first
be performed with lasso to select a subset of relevant functions, and
then without lasso but with only the selected features to achieve
better precision.
However, constrained linear regression has not been used before in
this context. In this work (Section~\ref{subsec:optim-approach}), we
describe how we use it to generate candidate \emph{bounds} on
solutions to recurrence equations, as well as
candidate \emph{inductive bounds}.

\section{Context and Motivation}
\label{sec:context}

As mentioned earlier, an important motivation for the work presented
in this paper stems from automated static cost analysis and
verification of programs.

In this context, classical operational semantics are decorated using
some cost model,
provided by the user or inferred, which may be either very generic or
contain detailed low-level architectural information.
For such applications it may be \emph{a priori} necessary and relevant
to consider very fine-grained set of states $\States$ and semantics,
in order to take into account low-level information, such as cache
behaviour, which can be very impactful on cost properties.  We
represent this as $\States = \Labels\times\Env\times\Mem$, where
$\Labels$ is the set of control states (program points, etc.),
$\Mem=\Vars\to\Vals$ is the set of memory states (where $\Vars$ is the
set of program variables, and $\Vals$ is the set of possible values,
e.g.  integers, data structures or Herbrand terms), and $\Env$ is the
set of states for the rest of the environment.

Cost models may be seen as a way to add weights\footnote{Typically
taken from $\rr$ or $\nn$, but generalisations are possible.}  to
transitions $\States\times\States$, i.e.
$\CostModel:\States\times\States\to\rr$, or non-deterministically
$\CostModel:\States\times\States\to\pp(\rr)$.
Usual concrete trace semantics $\sem{\Prog}\in\pp(\traces)$, where
$\traces=\States^*\cup\States^\omega$ are (finite or infinite) traces
$\sigma_0\rightarrow \sigma_1 \rightarrow\dots$,
are extended into concrete trace semantics with costs
$\semc{\Prog}\in\pp(\ttc)$ where elements of $\ttc$ are ``weighted
traces'' of
the form
$\sigma_0\overset{c_1}{\rightarrow}\sigma_1\overset{c_2}{\rightarrow}\dots$,
with $\sigma_i\in\States$ and $c_i\in\rr$.
For simplicity, we will only consider finite weighted traces $\ttcf$
with costs in $\rr$, and their
\emph{total
cost},%
\footnote{We could also consider
the \emph{maximal footprint} $\max_{k\leq n}\sum_{i=1}^k c_i$,
e.g. in the case of analysis of memory usage, or to more complex
aggregations of weights, e.g. when analysing non-deterministic logic
programs with tree semantics, or searching for cost hotspots when
analysing parallel programs.} i.e.
the function
$\ttcf \to \rr$,
$\big(\sigma_0\overset{c_1}{\rightarrow}\sigma_1\overset{c_2}{\rightarrow}\dots\overset{c_n}{\rightarrow}\sigma_n\big)\mapsto\sum_{i=1}^n c_i$.
We use this function to abstract concrete trace semantics (with cost)
into input-output semantics (with cost), via
\begin{align*}
  \alphaioc:
  \pp(\ttcf) & \to \pp(\States\times\rr\times\States)\\
  T &\mapsto
   \Big\{\Big(\sigma_0,\sum_{i=1}^n c_i,\sigma_n\Big)\,\Big|\,
   \big(\sigma_0\overset{c_1}{\rightarrow}\sigma_1\overset{c_2}{\rightarrow}\dots\overset{c_n}{\rightarrow}\sigma_n\big)\in T\Big\},
\end{align*}
which can be abstracted further into $\pp(\States\times\States)$ and
$\pp(\States\times\rr)$ by dropping the cost and output components
respectively.

After having defined this low-level cost semantics in a language most
appropriate to describe concrete execution (e.g. ``on bare metal'' --
the low level language is chosen for the precision of the
corresponding cost model), the recurrence-based approach to cost
analysis then suggests translating the program into an intermediate
language more appropriate for analysis.
This can be done for example by a translation
into \emph{Horn clauses},
while preserving/encoding the low level cost semantics presented above
(which may be done automatically,
c.f.~\cite{resource-verification-tplp18-shortest,isa-energy-lopstr13-final-short,isa-vs-llvm-fopara-short,resources-bytecode09-short,decomp-oo-prolog-lopstr07-short}).
This intermediate representation as Horn clauses (\texttt{\hcir}) may
be viewed as a simple language with \emph{pure} functions operating on
symbolic terms, so that the main challenge of program analyses,
recursion, can be dealt with uniformly, as recursive function calls
become the core control structure.
Recurrence equations
then arise naturally as a representation of numerical recursive
programs, or, more generally, of programs abstracted to numerical
programs via \emph{size} abstractions, as explained below.

In this new setting, the ``concrete'' denotational semantics,
decorated with cost, of a function $\predicate$ with input and output
variables $\Vars_i$ and $\Vars_o$, is an element
\begin{align*}
  \semioc{\predicate}
  &\in
  \pp\big((\VarsInput\to\Vals)\times\rr\times(\VarsOutput\to\Vals)\big)\\
  &\simeq
  (\VarsInput\to\Vals)\to\pp\big(\rr\times(\VarsOutput\to\Vals)\big)\\
  &=
  \Vals^\VarsInput\to\pp\big(\rr\times\Vals^\VarsOutput\big),
\end{align*}
where the equivalence is just curryfication
$(A\times B \to 2) \simeq (A \to (B \to 2))$,
and functions are (naturally) ordered pointwise.
Dropping the cost and output components respectively, we then obtain
$\semio{\predicate}:\Vals^\VarsInput\to\pp(\Vals^\VarsOutput)$
and
$\semic{\predicate}:\Vals^\VarsInput\to\pp(\rr)$,
which may be seen as the concrete \emph{data} semantics and concrete
\emph{cost} semantics of
function $\predicate$ respectively.
Notice that to obtain a compositional framework, it is necessary to
take data semantics into account, even if we are only interested in
cost:
in order to compute the cost semantics of a composition
$\predicate\circ\predicatebis$, we need to know the cost semantics of
the outer function, \emph{and the data semantics} of inner functions,
i.e.  we expect to have
$\semic{\predicate\circ\predicatebis}=\semic{\predicate}\circ\semio{\predicatebis}+\semic{\predicatebis}$
(where functions $A\to\pp(B)$ are implicitly extended into
$\pp(A)\to\pp(B)$ in the natural way).

To move into the realm of computability and finitely-representable objects,
more abstractions are introduced.
This is achieved by using \emph{size measures}, which are functions
$m:\Vals\to\Size$ that transform concrete data elements in $\Vals$
into data \emph{sizes} in $\Size$ (representing properties like length
of lists, number of nodes of trees, value of integers,
etc.). Typically $\Size\subseteq\nn$ as in~\cite{caslog-short,granularity-short}, but
more general and refined approaches exist, e.g. extracting tuples of
natural numbers using sized types~\cite{plai-resources-iclp14-short},
where $\Size\subseteq\nn^k$.
Any
\emph{size measure} naturally provides Galois connections
\lrnote{Key point}
$\pp(\Vals)\leftrightarrows\pp(\Size)$,
$\big(\Vals\to\pp(\rr)\big)\leftrightarrows\big(\Size\to\pp(\rr)\big)$
and
$\big(\Vals\to\pp(\Vals)\big)\leftrightarrows\big(\Size\to\pp(\Size)\big)$,
where the last two connections allow
for abstraction of cost and data semantics, respectively.

Note that $\Vals\to\Size$ allows viewing programs operating on $\Vals$,
after size abstractions, as purely numerical programs.
Ideally, size abstractions would be defined, for a particular program, via
some quotient of the semantics with observational (cost) equivalence, so
that as little information on the cost semantics is lost, but behaviour
irrelevant to cost is abstracted away.
In practice, current size analyses use type analyses, heuristics and user
annotations to transform data structures into numerical summaries that
appear relevant to cost properties.
We refer the reader to~\cite{caslog-short,plai-resources-iclp14-short}
for more details.

Hence, at this point, we are left with non-deterministic numerical
programs operating on sizes, which may be viewed as abstractions of
the original program. Any such recursive numerical program defined
with pure functions may also be viewed directly as \emph{a system of}
(non-deterministic) \emph{recurrence equations}.

However, non-deterministic systems of recurrence equations
on cost functions $\nn^\VarsInput\to\pp(\rr)$ and size functions
$\nn^\VarsInput\to\pp(\nn^\VarsOutput)$, arising from the diversity of
numerical programs are far beyond what can be effectively analysed by
recurrence solving methods.
The recurrence
equations corresponding to programs via size abstractions therefore
need to be simplified to be handled.

A first natural step is to focus on obtaining parametric \emph{bounds}
on the cost of functions: numerical sets of sizes and costs (in
$\pp(\nn)$ and $\pp(\rr)$) are abstracted using intervals (or
generalised Kaucher intervals~\cite{Kaucher80} -- we omit these
details for the sake of conciseness).
The problem is further simplified by abstracting away output-output
relations in data semantics, i.e. by looking at each output independently.
We are left with the following interval-valued functions, which
abstract data and cost semantics via size abstraction, and may
equivalently be viewed as \emph{intervals whose endpoints are
functions}.
\begin{align*}
  \texttt{Size}^{(k)}_\predicate:\nn^\VarsInput\to\intervals(\nn)\;\text{ for each }k\in\VarsOutput,
  &&
  \texttt{Cost}_\predicate:\nn^\VarsInput\to\intervals(\rr)
\end{align*}
\noindent Separating lower and upper bounds of the intervals, we obtain
functions $\texttt{Size}^{(k)}_{\predicate,lb}$,
$\texttt{Size}^{(k)}_{\predicate,ub}$,
$\texttt{Cost}_{\predicate, lb}$ and
$\texttt{Cost}_{\predicate, ub}$,
all of which may be viewed as elements of $\nn^\dimdom\to\rri$,
for each function $\predicate$ (of arity $\dimdom$) defined in the source
program.\footnote{In general, $\dimdom$ can be different from the
arity of $\predicate$; for example, when using more refined size
measures such as sized types~\cite{plai-resources-iclp14-short}, which
can represent sizes of structures and substructures at any position
and depth by using tuples of natural numbers. Additionally, some sizes
can be inferred to be irrelevant and thus removed. However, in this
section, we use simpler concepts for the sake of presentation.}
We can then extract (ideal) \emph{deterministic} recurrence equations
on such \texttt{Size} and \texttt{Cost} bound functions, abstracting the
non-deterministic equations
above.
Solving these equations
or finding bounds on their solutions, finally
allows us to obtain overapproximations of the cost semantics
$\semic{\predicate}:\Vals^\VarsInput\to\pp(\rr)$
via
\begin{align*}
  \gamma:\big(\Size^\VarsInput\to\rri\big)\times\big(\Size^\VarsInput\to\rri\big)&\to\big(\Vals^\VarsInput\to\pp(\rr)\big)\\
    (\texttt{Cost}_{lb},\;\texttt{Cost}_{ub})&\mapsto\Big(\vx\mapsto
       \big[\texttt{Cost}_{lb}(m(\vx)),\;\texttt{Cost}_{ub}(m(\vx))\big]\Big),
\end{align*}
where the order $\geq\!\times\!\leq$, with $\geq$ and $\leq$ pointwise, is
used on the left hand side.
In practice, cost/size analysers such as~\cite{caslog-short,granularity-short,plai-resources-iclp14-short} do not necessarily
obtain ideal recurrence equations
on $\texttt{Size}$ and $\texttt{Cost}$
bound functions, but instead focus their effort on setting up safely simplified
equations assumed to be supported by recurrence solvers. %

Without loss of generality, in the rest of the paper, we will only
refer to equations on a single function $f:\dd\to\rr$ with discrete
domain $\dd$, instead of \emph{systems} of equations. Note that any
system of equations on $\{f_i:\dd_i\to\rr\}$ may always be translated
into an equivalent system on $f:\dd\to\rr$ with $\dd=\sqcup \,\dd_i$.
This allows us to simplify the presentation of our novel approach
for recurrence solving, by applying
it to a recurrence equation on a single unknown $f:\dd\to\rr$.  We
will describe
how to obtain (good) bounds on its solution, even if this exact
solution cannot be computed.

\section{The Equations as Operators Viewpoint}
\label{sec:eqs-as-ops}

A simple but effective insight enabling the
techniques presented in
this paper is that recurrence equations may be viewed as higher-order
operators. In our setting, a \emph{recurrence equation} is simply
an equation on a function $f:\dd\to\rr$, that can be written
as: $$\forall \vn\in\dd,\;f(\vn) = \Phi(f)(\vn),$$ where
the higher-order operator $\Phi:(\dd\to\rr)\to(\dd\to\rr)$ is used to
define $f(\vn)$ in terms of other values of $f$.

We say that $\Phi$ is \emph{the operator corresponding to such a
functional equation}. Additionally, the operator may be viewed as a
\emph{definition} of the equation.  A function $f$ is then a
\emph{solution} to the equation if and only if it is a \emph{fixpoint}
of the $\Phi$ operator.
A key observation of this paper is that, for a large class of
recurrence equations, the bound search problem can be reduced to the
search of pre/postfixpoints for a \emph{monotone} operator $\Phi$
corresponding to
an equation.
This section is devoted to this result, proved as
Theorem~\ref{th:proof-principle}.
We also discuss the intuition
behind the equation monotonicity assumption in
Section~\ref{subsec:monotonicity-discussion} -- importantly, this does
not require the solutions themselves to be monotone. %
Fig.~\ref{fig:gcd-ex} illustrates Theorem~\ref{th:proof-principle}
and operator dynamics: a linear upper bound candidate $\candf$ is
compared against a non-trivial program computing the greatest common
divisor, and the ``blanket'' $\candf$ slowly falls onto the \emph{a
priori} unknown solution.
\begin{figure}[t]
  \centering
  \begin{minipage}[c]{0.45\linewidth}
    \includegraphics[width=4.5cm,height=4.5cm]{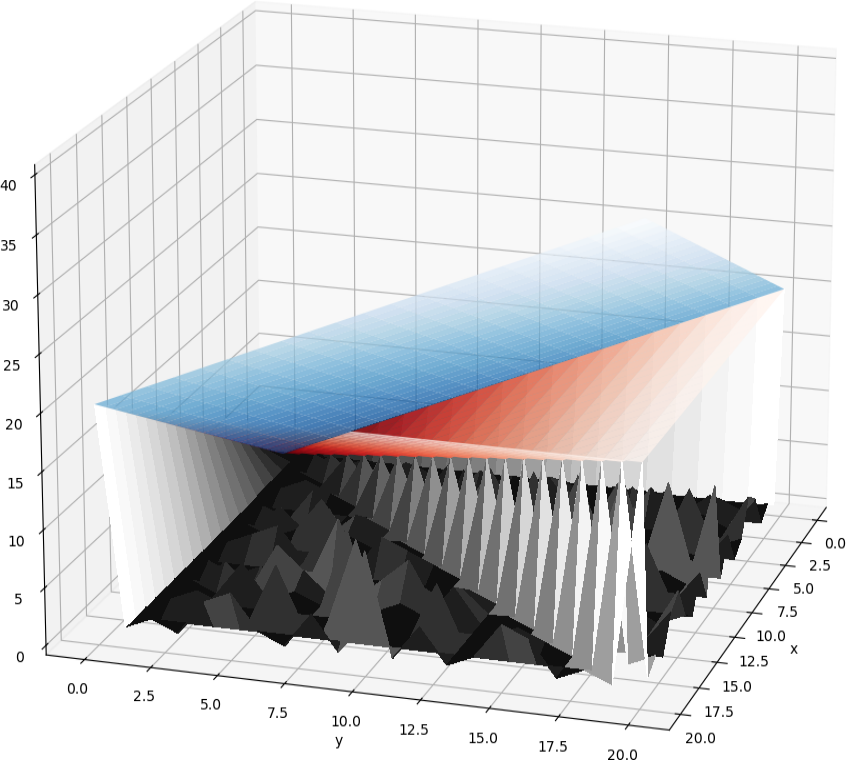}
  \end{minipage}
  \begin{minipage}[c]{0.45\linewidth}
    \includegraphics[width=4.5cm,height=4.5cm]{\figpath/gcd1\_1}
  \end{minipage}
  \caption{\footnotesize
  \textbf{Candidate and operator dynamics.}
  The operator $\Phi$ of a program computing the \texttt{gcd}
  interacts with a candidate upper bound $\candf(x,y)=x+y$.
  Plots represent the solution $\fsol$ (black), $\candf$ (blue left),
  $\Phi\candf$ (red left, blue right), and $\Phi^2\candf$ (red right).
  We observe that $\Phi\candf \leq\candf$, which is enough to prove that
  $\candf$ is an upper bound on $\fsol$. Iterates $\Phi^k\candf$
  progressively get closer to $\fsol$, but have more complex
  expressions.
  \label{fig:gcd-ex}
  }
  \vspace{-1em}
\end{figure}

\subsection{Equations as Operators}

The equations that arise in static cost analysis typically
correspond to \emph{monotone} operators for the pointwise order, hence
allowing the use of all insights and techniques coming from order
theory to study these equations and obtain bounds on their solutions.
\emph{Solutions} to equations naturally correspond to \emph{fixpoints}
on operators, %
and the idea of the ``equation-as-operator viewpoint'' is that we
can take this operator seriously: we can study the values taken by the
operator \emph{outside} of its fixpoints, and obtain information on
the solutions without having to compute them explicitly, which may be
impossible or impractical.
The equations we consider are typically defined piecewise as:
{\footnotesize
\begin{equation}\label{eq:piecewise-generic}
\begin{aligned}
  \Phi : (\dd \to \rr) &\to (\dd \to \rr)\\
         f &\mapsto \vn \mapsto \begin{cases}
           \;\dots&\\[-2pt]
           e_i(f,\vn) &\text{if } \varphi_i(f,\vn)\\[-2pt]
           \;\dots&
         \end{cases},
\end{aligned}
\end{equation}}%
as was illustrated in Section~\ref{sec:running-example}.
In our context, we usually take the domain $\dd$ to be discrete, e.g.
$\dd\subseteq\nn^\dimdom$ or $\dd\subseteq\zz^\dimdom$ with
$\dimdom\geq 1$.
In many cases, we will further suppose that the conditions $\varphi_i$
are independent of $f$ to simplify the problem. However, we plan to
take more general $\varphi_i$ into account in future work.

We bring order theory into the picture by equipping the set of
functions $\dd\to\rr$ with the \emph{pointwise order} derived from the
usual order on $\rr$: given two functions $f:\dd\to\rr$ and
$g:\dd\to\rr$, $f \leq_{\dd\to\rr} g$ if and only if for all inputs
$\vn\in\dd$ we have $f(\vn)\leq_\rr g(\vn)$. This provides the
expected definition of upper and lower bounds: if $f \leq_{\dd\to\rr}
g$, we say that $g$ is an upper bound of $f$ (or equivalently, that
$f$ is a lower bound of $g$).
In order to work with complete lattices, we extend $\rr$ with a top and a
bottom element, obtaining $\rri:=\rr\cup\{-\infty,+\infty\}$, where
$\leq_\rri$ is defined as expected. $\dd\to\rri$ equipped with the
pointwise order is then also a complete lattice, with meet and join
defined pointwise.

The rest of this paper focuses on the following class of recurrence
equations.
\begin{definition}[Monotone equation]
  \label{defi:monotone-equation}
  A \emph{monotone recurrence equation}, or simply \emph{monotone
  equation}, is a monotone function
  $$\Phi:(\dd\to\rri, \leq) \to (\dd\to\rri, \leq)$$
  We say that a function $f:\dd\to\rri$ is \emph{a} solution to the
  equation whenever $f$ is a fixpoint of $\Phi$.
  We respectively call
  $\lfp f$ and $\gfp f$ the \emph{least} and \emph{greatest solutions}
  to the equation.
  If $\lfp f = \gfp f$, the unique fixpoint of $\Phi$ is simply called
  \emph{the} solution to the equation.
\end{definition}
We argue that monotone equations are common in cost analysis
applications. Large classes of examples are given in
Section~\ref{sec:grammars-of-eqs}, e.g.
Example~\ref{example:piecewise-aff-sum}. We also refer the reader to
our benchmarks in Table~\ref{table:benchmarks} of
Appendix~\ref{sec:benchmark-tables}, all of which are monotone,
including some operating on non-monotone resources such as memory.

\begin{proposition}
  A monotone equation always admits a solution in $\dd\to\rri$.
\end{proposition}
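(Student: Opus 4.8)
The plan is to recognize this as an immediate corollary of the Knaster--Tarski theorem (Theorem~\ref{th:knaster-tarski}), which is stated in the excerpt precisely for this purpose. The definition of a monotone equation (Definition~\ref{defi:monotone-equation}) requires $\Phi:(\dd\to\rri,\leq)\to(\dd\to\rri,\leq)$ to be monotone, and a solution to the equation is by definition a fixpoint of $\Phi$. So the statement ``a monotone equation always admits a solution'' is exactly the claim that $\Fix(\Phi)\neq\emptyset$. Knaster--Tarski guarantees exactly this, provided the hypotheses are met: that the underlying poset is a complete lattice and that $\Phi$ is monotone.

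\begin{proof}[Proof proposal]
First I would verify the two hypotheses of Theorem~\ref{th:knaster-tarski}. Monotonicity of $\Phi$ holds by the very definition of a monotone equation (Definition~\ref{defi:monotone-equation}). For the complete lattice hypothesis, I would invoke the remark made just before the definition: $\dd\to\rri$ equipped with the pointwise order is a complete lattice, with meet and join computed pointwise. This follows because $\rri=\rr\cup\{-\infty,+\infty\}$ is a complete lattice (every subset has a supremum and infimum in the extended reals), and a pointwise product of complete lattices indexed by $\dd$ is again a complete lattice. Having established that $(\dd\to\rri,\leq)$ is a complete lattice and $\Phi$ is monotone, Knaster--Tarski applies directly and yields that $\Fix(\Phi)$ is itself a nonempty complete lattice. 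Since a solution to the equation is by definition a fixpoint of $\Phi$, the nonemptiness of $\Fix(\Phi)$ is precisely the existence of a solution, and in fact $\lfp\,\Phi=\bigsqcap\PostFix(\Phi)$ and $\gfp\,\Phi=\bigsqcup\PreFix(\Phi)$ both exist, giving the least and greatest solutions named in the definition.
\end{proof}

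I expect there to be essentially no obstacle here, as the result is a one-line consequence of the cited theorem; the only point requiring a sentence of care is confirming that $\dd\to\rri$ is genuinely a complete lattice, which is why the extension of $\rr$ to $\rri$ was introduced in the paragraph preceding the proposition. The subtlety worth flagging is that completeness would fail over the unextended $\rr$ (an increasing sequence need not converge), so the use of $\rri$ rather than $\rr$ in Definition~\ref{defi:monotone-equation} is load-bearing: it is exactly what makes this proposition, and more importantly the pre/postfixpoint proof principle to follow, hold unconditionally.
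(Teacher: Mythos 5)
Your proof is correct and follows exactly the paper's argument: the paper also proves this as a one-line corollary of Theorem~\ref{th:knaster-tarski}. The additional care you take in verifying that $(\dd\to\rri,\leq)$ is a complete lattice (and the observation that the extension from $\rr$ to $\rri$ is load-bearing) is consistent with, and a reasonable elaboration of, what the paper leaves implicit.
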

\begin{proof}
  This is a direct corollary of Theorem~\ref{th:knaster-tarski}.
\end{proof}
\noindent The following observation,
also a direct corollary of Theorem~\ref{th:knaster-tarski},
enables us to reformulate our bound search problem
as a question of pre/post-fixpoints of monotone operators.
\begin{theorem}\label{th:proof-principle}
  Let $\Phi:\rri^\dd \to \rri^\dd$ be a monotone equation.
  For any $f:\dd\to\rri$,
  \begin{itemize}
    \item if $\Phi f \leq f$, i.e. if $f$ is a postfixpoint of $\Phi$,
          then $\lfp \Phi \leq f$, i.e. $f$ is an upper bound on the least
          solution to the equation,
    \item if $f \leq \Phi f$, i.e. if $f$ is a prefixpoint of $\Phi$,
          then $f \leq \gfp \Phi$, i.e. $f$ is a lower bound on the greatest
          solution to the equation.
  \end{itemize}
\end{theorem}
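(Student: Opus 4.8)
The plan is to derive the statement as a direct corollary of the Knaster--Tarski theorem (Theorem~\ref{th:knaster-tarski}), exactly as the preceding proposition for the existence of a solution was obtained. First I would observe that the hypotheses of Knaster--Tarski are met: by Definition~\ref{defi:monotone-equation}, $\Phi$ is a monotone function on $\dd\to\rri$, and this poset is a complete lattice, since $\rri=\rr\cup\{-\infty,+\infty\}$ is a complete lattice and the pointwise order on functions into a complete lattice is again a complete lattice (meets and joins computed pointwise). Hence $\Fix(\Phi)$ is nonempty and $\lfp\Phi$, $\gfp\Phi$ both exist.

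The two bullets then follow immediately from the two ``in particular'' clauses of Theorem~\ref{th:knaster-tarski}, which are stated precisely in the form we need. For the upper-bound claim, suppose $\Phi f\leq f$, i.e. $f\in\PostFix(\Phi)$. Since $\lfp\Phi=\bigsqcap\PostFix(\Phi)$, the element $f$ lies above this meet, so $\lfp\Phi\leq f$; this is exactly the stated implication $\forall x,\ \Phi(x)\sqsubseteq x \implies \lfp\Phi\sqsubseteq x$ instantiated at $x=f$. Symmetrically, for the lower-bound claim, suppose $f\leq\Phi f$, i.e. $f\in\PreFix(\Phi)$. Since $\gfp\Phi=\bigsqcup\PreFix(\Phi)$, we get $f\leq\gfp\Phi$, which is the dual implication instantiated at $f$. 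No induction or Kleene iteration is required; the content is entirely packaged inside Knaster--Tarski.

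I do not expect a genuine obstacle here, since the theorem is essentially a restatement of a classical result in the paper's own notation. The only point that deserves a line of care is confirming that the completeness hypothesis genuinely holds for the lattice $\dd\to\rri$ with the pointwise order, so that $\lfp\Phi$ and $\gfp\Phi$ are guaranteed to exist; this is why the codomain was extended from $\rr$ to $\rri$ in the first place, and it is worth recalling explicitly that the extrema $\pm\infty$ supply the top and bottom elements that $\rr$ alone lacks. A second minor subtlety is purely interpretive rather than mathematical: one should note that the operator convention adopted (prefixpoint means $x\sqsubseteq f(x)$, postfixpoint means $f(x)\sqsubseteq x$) matches the convention under which Theorem~\ref{th:knaster-tarski} was stated, so that postfixpoints correctly give upper bounds on the \emph{least} fixpoint and prefixpoints give lower bounds on the \emph{greatest} fixpoint. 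With these conventions aligned, the proof is a two-line invocation of the cited result.
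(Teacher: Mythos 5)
Your proposal is correct and follows exactly the paper's own route: the text introducing the theorem states that it is ``also a direct corollary of Theorem~\ref{th:knaster-tarski},'' and the two bullets are precisely the two \emph{in particular} clauses of that theorem instantiated in the complete lattice $\dd\to\rri$. Your additional remarks on why $\rri$ (rather than $\rr$) is needed for completeness, and on the prefixpoint/postfixpoint convention, are accurate and consistent with the paper's conventions.
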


\begin{remark}
  In particular, notice that if $\Phi$ admits a unique solution
  $\fsol$, then any postfixpoint is an upper bound on $\fsol$, and any
  prefixpoint is a lower bound on $\fsol$.
  This is always the case when the equation is \emph{well-defined}, in
  the sense that it terminates everywhere as a program, as will be
  explained in Section~\ref{subsec:termination}.
\end{remark}
\noindent With this insight, we introduce the following synonyms.
\begin{definition}[Inductive bounds]
  \label{defi:inductive-bounds}
  Let $\Phi$ be a monotone equation.
  We say that $f:\dd\to\rri$ is an \emph{inductive upper bound} (resp.
  \emph{inductive lower bound}) whenever it is a postfixpoint (resp.
  prefixpoint) of $\Phi$.
  Additionally, we simply say that $f$ is an \emph{inductive bound}
  whenever it is either a prefixpoint or a postfixpoint.
\end{definition}
We compare below these notions with classical proofs by induction of
upper/lower bounds.
When the restriction $\Phi:(\dd\to\rr)\to(\dd\to\rr)$ is well-defined,
i.e. when infinities are unnecessary, we can also introduce the
following concept.
\begin{definition}[Inductivity measure]
  \label{defi:inductivity-measure}
  For any monotone equation $\Phi$ and $f:\dd\to\rr$, we call the
  function $\Phi f - f$ the (local) \emph{inductivity measure} of $f$,
  to reflect the intuition that local information on $\Phi f - f$
  gives partial information on whether $f$ is an inductive bound.
  Indeed, $f$ is an inductive upper (resp. lower) bound whenever
  $\sgn(\Phi f -f)$ is $(-)$ (resp. $(+)$) everywhere.
\end{definition}
This local viewpoint is leveraged in our algorithms, e.g. to
exploit dynamic invariant generation techniques.
We also later explain that quantitative information on
``offending'' values of the function $\Phi\candf - \candf$ may be used
to \emph{repair} candidates $\candf$, i.e. transform them into actual
inductive bounds with constant-sign inductivity measure, in accordance
with our motto of \emph{taking the operator seriously}.

\subsection{Termination of Equations}
\label{subsec:termination}

The exposition of multiple results and proofs in this paper can be
simplified if we assume that the equations we deal with always
terminate, in the viewpoint where equations are simply programs that can be
executed, with solution given by their semantics.
This is directly related to termination of derivations in the
operational semantics expressed in~\cite{mlrec-tplp2024-preprint}.
In particular, this will allow us to simply use the vector space
structure of $\dd\to\rr$ without extra technical care for infinities,
and to simply identify pre/postfixpoints with lower/upper bounds on
\emph{the} solution without having to distinguish between $\lfp$ and
$\gfp$.

Termination of equations is defined at the level of syntactical
representations. This can be done by introducing a boolean domain
$\bb=\{\bot,\top\}$, (corresponding to ``undefined'', ``defined'').
We can then abstract each operation in the syntax of expressions by
$\llbracket - \rrbracket^\flat(x_1,\dots,x_n)=\top\iff\forall i,x_i=\top$
(in particular, constants are abstracted to $\top$, and this
generalises trivially to infinite arities).
It is then easy to abstract $\Phi$ into $\Phi^\flat:(\dd\to\bb)\to(\dd\to\bb)$
(again, note that we are abusing notation: this depends \emph{a
priori} on the syntax of the definition of $\Phi$, and not only on its
intensive properties $(\dd\to\rr)\to(\dd\to\rr)$).
The domain of termination of $\Phi$ simply appears as
$(\lfp_{\leq_\flat}\Phi^\flat)^{-1}\{\top\}\subseteq\dd$.

\begin{definition}[Termination]\label{defi:eq-termination}
  We say that an equation \emph{terminates everywhere unconditionally}
  whenever $(\lfp_{\leq_\flat}\Phi^\flat)^{-1}\{\top\}=\dd$ for the
  above abstraction.
  For simplicity, we also require that
  $\Phi:(\dd\to\rri)\to(\dd\to\rri)$ restricts to
  $\Phi:(\dd\to\rr)\to(\dd\to\rr)$.
  When the context is clear, we will equivalently simply say that
  \emph{$\Phi$ terminates}, or even just that \emph{$\Phi$ is a
  well-defined equation}.

  Note that in the general case, the required number $\beta$ of
  iterations so that the execution $\Phi^\beta(\bot)(\vn)$ converges to
  $(\lfp\Phi)(\vn)$ may be transfinite.

  In this case, $\Phi$ admits a single solution solution
  $\lfp\Phi=\gfp\Phi=\fsol:\dd\to\rr$.
\end{definition}
Note that unconditional termination is implied by the existence of a
\emph{ranking function for the equation} (Definition~\ref{defi:rkfun}
later in the paper).
In the rest of this paper, unless explicitly said otherwise, we will
always assume that this is the case (reducing $\dd$ if required) for
the sake of simplicity and presentation.

\subsection{Discussion and Interpretation}
\label{subsec:monotonicity-discussion}

\subsubsection{Relation with classical proof by induction.}

Proofs by Theorem~\ref{th:proof-principle} may be seen as
easy-to-automate generalisations of proofs by induction,
which motivates the vocabulary introduced in
Definition~\ref{defi:inductive-bounds}.
For example, a proof by induction of a candidate upper bound $\candf$
for \texttt{\exptwoname} of Table~\ref{table:benchmarks} would start
by noticing that $\fsol(0)=3\leq\candf(0)$, and then observe that, for
$n\geq 1$, $\fsol(n)=2\fsol(n-1)+1\leq 2\candf(n-1)+1\leq\candf(n)$.
The direct connection with monotonicity of $\Phi$ is apparent.
However, automatically setting up and performing inductive proofs for
benchmarks such as \texttt{\multiphaseonename} or \texttt{\sumoscname}
is not easy, as it requires selecting an appropriate ``induction path''.
This difficulty is circumvented by Theorem~\ref{th:proof-principle}.

\subsubsection{Interpretation of \emph{equation} monotonicity.}

We informally discuss intuition to better visualise the main
hypothesis of Theorem~\ref{th:proof-principle}, which does \emph{not}
imply limitations to monotone $\fsol$ or monotone resources (e.g.
excluding memory that can be freed during execution), as shown by our
benchmarks.

Given an equation, we can picture its \emph{control-flow} by
organising $\dd$ as a graph, with arrows going from callers to callees
for each recursive call.
In fact, for cost properties, a better visualisation is to consider
the \emph{opposite} orientation, from callees to possible callers,
hence making base cases \emph{sources} (as opposite to \emph{sinks}),
from which cost \emph{flows}, and \emph{accumulates} towards further
recursive cases. This orientation, with logical content, coincides
with that of \emph{derivation trees} from logic programming, when
viewing equations as Horn clauses.

With this image, monotonicity can be interpreted dynamically, and
simply means that, if a cost value is increased \emph{upstream} (e.g.
an initial condition), then all values \emph{downstream} also increase.
This does \emph{not} imply monotonicity of $\lfp\Phi$ regarding an
underlying order of $\dd$, as it may not coincide with the direction
of recursive calls (consider \texttt{\incronename} in
Table~\ref{table:benchmarks}). %
This also does \emph{not} mean that values downstream are higher
than upstream: flow can be withdrawn along the stream without
affecting the monotonic response of values downstream -- this allows
to handle ``non-monotonic'' resources such as memory.

However, non-monotone equations do exist, as discussed in
Remark~\ref{rem:non-monotone-discussion-grammar} below. Examples
outside of pure program analysis appear in biochemical systems, where
an increase of population at some point in time may actually have a
negative impact on the population downstream. We can see that
monotonicity of $\Phi$ is connected with the possibility to compute
worst/best cases for an input size from worst/best cases of the
recursive calls.

\section{Grammars and Classes of Equations}
\label{sec:grammars-of-eqs}

We may break apart equations into simpler building blocks, introducing
grammar of operators. This allows us to easily define classes of
equations, and inductively prove properties on them. Additionally, this
viewpoint provides us with syntax of supported inputs for
Sections~\ref{sec:cand-generation}-\ref{sec:implementation-and-evaluation}
and enables the ``abstract equations'' viewpoint sketched in
Section~\ref{subsubsec:abstract-iteration}.

\subsection{Intrinsic definitions and properties}

We start by defining classes of equations through their properties.
Constructive subsets of these classes are obtained in
\ref{subsec:induct-defis} through grammar of equations.

The approach presented in this paper focuses on \emph{monotone
equations}, %
but other properties will be useful. We define them here.

\begin{definition}[Linear equation]
  \label{defi:linear-equation}
  We say that an equation $\Phil:(\dd\to\rri)\to(\dd\to\rri)$ is
  \emph{linear} whenever it restricts to a linear operator
  on the vector space $\dd\to\rr$, i.e. whenever
  $\forall f,g:\dd\to\rr,\,\Phil(f+g)=\Phil(f)+\Phil(g)$ and
  $\forall f:\dd\to\rr,\,\forall\alpha\in\rr,\,\Phil(\alpha f)=\alpha\cdot\Phil(f)$.
  In particular, in this case, $\Phil(\vzero)=\vzero$.
\end{definition}

\begin{definition}[Affine equation]
  \label{defi:affine-equation}
  Similarly, we say that $\Phi$ is \emph{affine} whenever it
  restricts to an affine operator on $\rr^\dd$,
  i.e. whenever there exists a linear $\Phil$ such that
  $\forall f\in\rr^\dd$, $\Phi(f)=\Phi(\vzero)+\Phil(f)$.
  In this case, we call $\Phil$ the \emph{linearisation} of $\Phi$,
  and we can observe that
  $\forall f,g\in\rr^\dd,\,\Phi(f-g)=\Phi(f)-\Phi(g)$.
\end{definition}

\noindent While our approach is only limited to monotone
\emph{operators}, without restrictions on candidates, the special case
of \emph{monotone candidates} is interesting.
Unless specified, we give $\dd$ the order induced by the product of
the usual order on $\nn$.
\begin{definition}[Monotone candidate]
  \label{defi:monotone-candidate}
  We call \emph{monotone candidate} any function
  $f:\dd\to\rri$ that is monotone, i.e. order-preserving.
\end{definition}

\begin{proposition}
  \label{prop:monotone-candidate-sublat}
  The set of monotone candidates is a complete sublattice of
  the set of functions $\dd\to\rri$ ordered pointwise.
\end{proposition}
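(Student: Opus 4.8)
The plan is to show that the set of monotone candidates, call it $S\subseteq(\dd\to\rri)$, is closed under the arbitrary meets and joins of the ambient complete lattice $(\dd\to\rri,\leq)$, which (as recalled earlier) are computed pointwise. The key conceptual point is that "complete sublattice" is stronger than "a subset that happens to be a complete lattice in its own right": I must verify that the sup and inf \emph{as computed in} $\dd\to\rri$ of any family of monotone candidates are themselves monotone, so that they coincide with the internal sup and inf of $S$. Thus the real content of the proposition is a single lemma: pointwise sups and infs preserve monotonicity.

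First I would fix an arbitrary family $\{f_i\}_{i\in I}$ of monotone candidates and form its pointwise join $g:=\bigsqcup_{i\in I}f_i$, i.e. $g(\vn)=\sup_{i\in I}f_i(\vn)$, which is well-defined since $\rri$ is a complete lattice. To see that $g$ is monotone, take $\vn\leq\vec{m}$ in $\dd$: for each $i$, monotonicity of $f_i$ gives $f_i(\vn)\leq f_i(\vec{m})\leq g(\vec{m})$, so $g(\vec{m})$ is an upper bound of $\{f_i(\vn)\}_{i\in I}$, whence $g(\vn)=\sup_i f_i(\vn)\leq g(\vec{m})$. Dually, the pointwise meet $h:=\bigsqcap_{i\in I}f_i$ satisfies, for $\vn\leq\vec{m}$, that $h(\vn)\leq f_i(\vn)\leq f_i(\vec{m})$ for every $i$, so $h(\vn)$ is a lower bound of $\{f_i(\vec{m})\}_{i\in I}$ and hence $h(\vn)\leq\inf_i f_i(\vec{m})=h(\vec{m})$. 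Both $g$ and $h$ therefore lie in $S$.

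To complete the verification against the definition of a complete sublattice, I would also record the empty-family cases separately: the empty join is the constant function $\vn\mapsto-\infty$ and the empty meet is $\vn\mapsto+\infty$, both trivially monotone, so $S$ contains the bottom and top of $\dd\to\rri$. Combining these facts, every ambient sup and inf of a subfamily of $S$ again belongs to $S$, which is precisely the statement that $S$ is a complete sublattice of $(\dd\to\rri,\leq)$.

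I do not expect a genuine obstacle here; the argument is routine once the right lemma is isolated. The only point demanding care is conceptual rather than computational, namely not to conflate "complete sublattice" with "subset that is itself a complete lattice." The proof deliberately exhibits the meets and joins computed in the ambient lattice (pointwise) and checks that monotonicity is preserved, rather than merely arguing that internal sups and infs exist.
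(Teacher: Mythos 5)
Your proof is correct and is exactly the expected argument: the paper states this proposition without proof, and the content is precisely your lemma that pointwise suprema and infima (computed in $\rri$) of monotone functions are again monotone, including the empty-family case. Your care in distinguishing ``complete sublattice'' from ``subset that is itself a complete lattice'' is well placed and matches how the proposition is used later (e.g.\ in Proposition~\ref{prop:monotonicity-preservation}).
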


\begin{definition}%
  \label{defi:monotonicity-preservation}
  We say that an equation $\Phi$ \emph{preserves candidate
  monotonicity} whenever $\Phi f$ is monotone for all monotone $f$.
\end{definition}

\begin{proposition}
  \label{prop:monotonicity-preservation}
  If $\Phi$ is a monotone equation that preserves candidate
  monotonicity, then it admits a solution that is monotone. In
  particular, if $\Phi$ admits a unique solution $\fsol$, e.g. if
  $\Phi$ terminates unconditionally, then $\fsol$ is monotone.

  More generally, if a monotone equation $\Phi$ preserves a
  sublattice of $\dd\to\rri$ defined by a property, then it admits
  solutions that have this property.
\end{proposition}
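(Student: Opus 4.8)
The plan is to apply Knaster--Tarski (Theorem~\ref{th:knaster-tarski}) not to $\Phi$ on the whole lattice $\dd\to\rri$, but to its restriction to the sublattice of monotone candidates. First I would write $M$ for the set of monotone candidates. By Proposition~\ref{prop:monotone-candidate-sublat}, $M$ is a complete sublattice of $\dd\to\rri$, so $(M,\leq)$ is itself a complete lattice, its meets and joins being computed as in the ambient lattice.

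Next I would observe that the hypothesis that $\Phi$ \emph{preserves candidate monotonicity} (Definition~\ref{defi:monotonicity-preservation}) is exactly the statement that $\Phi(M)\subseteq M$, so $\Phi$ corestricts to a self-map $\Phi_{|M}:M\to M$. This map is still monotone, since $M$ carries the order induced from $\dd\to\rri$ and $\Phi$ is monotone there: if $f\leq g$ in $M$, then $f\leq g$ in the ambient lattice, hence $\Phi f\leq\Phi g$, and both sides lie in $M$. Applying Theorem~\ref{th:knaster-tarski} to the monotone self-map $\Phi_{|M}$ on the complete lattice $M$ yields $\Fix(\Phi_{|M})\neq\emptyset$. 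Any element of $\Fix(\Phi_{|M})$ is, by construction, a monotone function $f$ with $\Phi f=f$, i.e. a monotone solution of the equation, which proves the first claim.

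For the ``in particular'' clause I would argue as follows. Every fixpoint of $\Phi_{|M}$ is also a fixpoint of $\Phi$ on $\dd\to\rri$, so $\emptyset\neq\Fix(\Phi_{|M})\subseteq\Fix(\Phi)$. If $\Phi$ admits a unique solution, i.e. $\Fix(\Phi)=\{\fsol\}$ (which holds whenever $\Phi$ terminates unconditionally, by Definition~\ref{defi:eq-termination}), then necessarily $\Fix(\Phi_{|M})=\{\fsol\}$, and in particular $\fsol\in M$, so $\fsol$ is monotone.

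Finally, the ``more generally'' statement follows by replacing $M$ with the complete sublattice $S$ cut out by the given property: the identical argument --- $\Phi$ corestricts to a monotone $\Phi_{|S}:S\to S$, and Theorem~\ref{th:knaster-tarski} produces a nonempty set of fixpoints, all lying in $S$ --- shows that $\Phi$ has a solution with the property. I do not expect a genuine obstacle here, as the whole proof is a clean transfer of Knaster--Tarski to a sublattice; the one delicate point to flag is completeness. The argument needs $S$ to be a \emph{complete} sublattice, closed under arbitrary ambient meets and joins, so that $(S,\leq)$ is a complete lattice and Theorem~\ref{th:knaster-tarski} applies. I would therefore make this requirement explicit in the statement, exactly as Proposition~\ref{prop:monotone-candidate-sublat} does for the monotone-candidate case.
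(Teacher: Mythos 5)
Your proof is correct and follows exactly the route the paper intends (the paper omits the proof, but states Proposition~\ref{prop:monotone-candidate-sublat} immediately beforehand precisely so that Knaster--Tarski can be applied to the restriction of $\Phi$ to the complete sublattice of monotone candidates, as you do). Your closing remark that the ``more generally'' clause requires the preserved sublattice to be \emph{complete} is a fair and worthwhile clarification of the statement.
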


\subsection{Order}

We continue our algebraic treatment of equations as first-class
objects, by introducing an order on monotone equations themselves.
This higher-order order structure interacts well with the semantics of
equations (with their solutions), and can be used to simplify them,
i.e. to safely approximate them.
It is also useful for proofs, such as that of
Proposition~\ref{prop:hybrid-verif} on hybrid verification.

\begin{definition}[Order of equations]
  \label{defi:pointwise-eq-order}
  We define the order $\eqleq$ on equations by lifting pointwise (once
  more) the pointwise order of $\dd\to\rri$.\\
  In other words,
  $\Phi_1\sqsubseteq\Phi_2 \iff
    \forall f:\dd\to\rri,\,\forall \vn\in\dd,\,
    \Phi_1(f)(\vn) \leq \Phi_2(f)(\vn).$
  This gives $((\dd\to\rri)\to(\dd\to\rri),\eqleq)$ the structure of a complete
  lattice, of which the set of monotone equations is a complete
  sublattice.
\end{definition}

\begin{proposition}[Fixpoint transfer]
  Solution extraction is $\eqleq$-monotone.

  \noindent More precisely, if
  $\Phi_1,\Phi_2$ are two monotone
  equations with $\Phi_1\eqleq\Phi_2$, then $\lfp\Phi_1\leq\lfp\Phi_2$
  and $\gfp\Phi_1\leq\gfp\Phi_2$.
  In particular, when $\Phi_1$ and $\Phi_2$ terminate everywhere and
  thus admit a single solution, we can write $f_{sol,1}\leq f_{sol,2}$.
\end{proposition}

\begin{remark}
  Over- and underapproximations of equations for the $\eqleq$ order
  can be easily performed at the syntactical level, by using over- and
  underapproximations of the expressions used in the definition of the
  equation.
  For example, if $\Phi(f)(\vn):=\ite(\varphi(\vn),y_0,\Phi_1(f)(\vn))$
  and $\Phi'(f)(\vn):=\ite(\varphi(\vn),\overline{y_0},\Phi_1(f)(\vn))$
  with $y_0\leq\overline{y_0}$ (overapproximation of initial
  conditions), then $\lfp\Phi \leq \lfp\Phi'$.
  Similarly, when setting up an equation by a $\max$ in a branch, this
  can be (coarsely but safely) overapproximated by a sum, i.e.
  schematically $\max(\Phi_1,\Phi_2)\eqleq(\Phi_1+\Phi_2)$.
  Negligible or complex terms can also be simplified, e.g. to attempt
  to recover simple equations handled by CAS solvers.
  For example, $\Phi\eqleq\widetilde{\Phi}$ for
  $\Phi(f)(n)=\ite(n=0,0,2f(n-1)+n^2+3n+\lfloor\log_2(n)\rfloor+4+\frac{1}{n})$
  and $\widetilde{\Phi}(f)(n)=\ite(n=0,0, 2f(n-1)+11n^2)$.
  While these transformations may appear to be trivial and natural,
  our formalisation provides guarantees for them, and highlights an
  important hypothesis: \emph{equations $\Phi$ must be
  \textbf{monotone} in their unknown}. Multiple counter-examples can
  be produced if this hypothesis is violated.
\end{remark}

\subsection{Inductive definitions and grammars}
\label{subsec:induct-defis}

We finally explicitly introduce large classes of equations supported
by our approach, thanks to their monotonicity and other good
properties. These are the equations that we analyse in our experiments
and in Section~\ref{sec:implementation-and-evaluation}.
Note that we only prove that our implementation is correct for such
equations, and we assume that all inputs belong to this syntax so
as to make them monotone. %

\begin{proposition}
   Let $\Eqs$ denote the set of all monotone equations
   $\rr^\dd \to \rr^\dd$.
   Then, $\Eqs$ is stable by all operations presented in the grammar
   of Fig.~\ref{eqs:grammar-1}.
\end{proposition}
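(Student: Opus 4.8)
The plan is to argue by structural induction on the derivation of an equation in the grammar of Fig.~\ref{eqs:grammar-1}. Recall that an operator $\Phi:\rr^\dd\to\rr^\dd$ belongs to $\Eqs$ exactly when $f\leq g$ (pointwise) implies $\Phi(f)\leq\Phi(g)$ (pointwise). Since every equation produced by the grammar is built either by an atomic production or by applying a combinator to equations already generated, it suffices to verify, once and for all, that each production preserves monotonicity; closure of $\Eqs$ under the whole grammar then follows by induction on the height of the derivation tree.

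First I would dispatch the atomic productions. The base-case operator $\GroundOp$ produces an operator that does not depend on its argument $f$ (it returns a fixed initial-condition function on part of $\dd$), and a constant operator is trivially monotone since $\Phi(f)=\Phi(g)$ for all $f,g$. The recursive-call operator $\PrecompOp$ produces $\Phi(f)(\vn)=f(\sigma(\vn))$ for some index map $\sigma:\dd\to\dd$ encoding the arguments of a recursive call; since pointwise evaluation is monotone, $f\leq g$ gives $f(\sigma(\vn))\leq g(\sigma(\vn))$ for every $\vn\in\dd$, so $\Phi$ is monotone.

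Next I would treat the combinators, assuming inductively that the sub-operators $\Phi_1,\Phi_2$ are monotone. For $\SumOp$ and $\ScaleOp$ (with a nonnegative coefficient) monotonicity follows from the fact that $+$ and multiplication by $\alpha\geq 0$ are monotone on $\rr$; for $\MaxOp$ and $\MinOp$ from monotonicity of $\max$ and $\min$ in both arguments; for operator composition $\ComposeOp$, $(\Phi_1\circ\Phi_2)(f)=\Phi_1(\Phi_2(f))$ is monotone as a composite of monotone maps; and for $\PostcompOp$ with an order-preserving outer function $g$, from monotonicity of $g$. In every case $\Phi$ is obtained by applying a monotone operation to the monotone outputs $\Phi_i(f)$, hence is monotone. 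The conditional $\ChoiceRestrOp$, whose guard $\varphi$ does not depend on $f$, gives pointwise $\Phi(f)(\vn)=\ite(\varphi(\vn),\Phi_1(f)(\vn),\Phi_2(f)(\vn))$: for each fixed $\vn$ the truth value $\varphi(\vn)$ is fixed independently of $f$, so the output is one of two monotone expressions and hence monotone.

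The step I expect to be the main obstacle is the conditional operators, and in particular the strong-choice operator $\ChoiceFullOp$. When the guard is independent of $f$ the pointwise selection above makes monotonicity immediate, but monotonicity is genuinely not automatic once the selected branch may switch (e.g. for an $f$-dependent guard, or at a branch boundary where the two branches need not agree). Here I would rely on the structural side-condition that the grammar imposes on $\ChoiceFullOp$ — a compatibility/monotonicity requirement linking the guard to the two branches — and verify that this condition is precisely what forces $f\leq g\implies\Phi(f)\leq\Phi(g)$ uniformly, even across the branch change. A secondary point requiring care is checking that the sign restrictions (nonnegative scaling, order-preserving $g$ in $\PostcompOp$) are exactly those recorded in the grammar. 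Once each production has been checked, stability of $\Eqs$ under the full grammar of Fig.~\ref{eqs:grammar-1} follows by the structural induction announced above.
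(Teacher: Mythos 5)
Your verification of the rules that actually appear in Fig.~\ref{eqs:grammar-1} --- (\GroundOp), (\PrecompOp), (\SumOp), (\ScaleOp) with $a\geq\vzero$, (\ChoiceRestrOp) with an $f$-independent guard, (\ComposeOp), (\MaxOp), (\MinOp) --- is correct and is exactly the routine rule-by-rule check the paper has in mind (it dismisses all proofs in this section as ``easy, by induction on the syntax''). The one thing to flag is that the step you single out as ``the main obstacle'' concerns rules that are not in the figure: there is no (\ChoiceFullOp) production in the grammar, and no ``compatibility side-condition linking the guard to the two branches'' to rely on --- the paper deliberately restricts to guards $\varphi:\dd\to\{\top,\bot\}$ depending only on $\vn$ precisely because $f$-dependent guards can break monotonicity (see Remark~\ref{rem:non-monotone-discussion-grammar}), and it defers that case to future work; likewise (\PostcompOp) is discussed only later as an extension outside the grammar. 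So your argument is complete for the statement as given, but the paragraph appealing to a hypothesised side-condition on a nonexistent rule is vacuous and should be dropped rather than left as a load-bearing step.
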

\begin{remark}[Warning]
  \label{rem:non-monotone-discussion-grammar}
  $f\mapsto -f$ is not a monotone operation, i.e. positivity on $a$ in
  (\ScaleOp) is necessary.
  More subtly, (\ChoiceRestrOp) also requires $\varphi$ to depend only
  on $\vn$ (and not on $f$).
  Non-monotone equations can be constructed with piecewise definitions
  whose conditions depend on the result of recursive calls (e.g. the
  \texttt{prs23\_1} benchmark in~\cite{mlrec-tplp2024-preprint,Wang-PRS23-short}). The
  order theory of such equations is interesting but subtle, and
  requires a lot of care in handling $\varphi$ -- this is left for
  future work.
\end{remark}
\begin{figure}\footnotesize
  \vspace{-1em}
  \begin{equation*}
  \begin{gathered}
        \trfrac[\;(\GroundOp)]{P:\dd\to\rr}{(f\mapsto \vn\mapsto P(\vn)):\Eqs}
        \qquad
        \trfrac[\;(\PrecompOp)]{\phi:\dd\to\dd}{(f\mapsto f\circ\phi):\Eqs}
        \\
        \trfrac[\;(\SumOp)]{\Phi_1:\Eqs,\;\Phi_2:\Eqs}{\big(f\mapsto \Phi_1(f)+\Phi_2(f)\big):\Eqs}
        \qquad
        \trfrac[\;(\ScaleOp)]{\Phi:\Eqs,\;a:\dd\to\rr_+}{\big(f\mapsto\vn\mapsto a(\vn)\cdot\Phi(f)(\vn)\big):\Eqs}
        \\
        \trfrac[\;(\ChoiceRestrOp)]{\Phi_1:\Eqs,\;\Phi_2:\Eqs,\;\varphi:\dd\to\{\top,\bot\}}{\Big(f\mapsto\vn\mapsto\ite\big(\varphi(\vn),\Phi_1(f)(\vn),\Phi_2(f)(\vn)\big)\Big):\Eqs}
        \;
        \trfrac[\;(\ComposeOp)]{\Phi_1:\Eqs,\;\Phi_2:\Eqs}{\Phi_2\circ\Phi_1:\Eqs}
        \\
        \trfrac[\;(\MaxOp)]{\Phi_1:\Eqs,\;\Phi_2:\Eqs}{\Big(f\mapsto\max\big(\Phi_1(f),\Phi_2(f)\big)\Big):\Eqs}
        \quad
        \trfrac[\;(\MinOp)]{\Phi_1:\Eqs,\;\Phi_2:\Eqs}{\Big(f\mapsto\min\big(\Phi_1(f),\Phi_2(f)\big)\Big):\Eqs}
  \end{gathered}
  \end{equation*}
  \vspace{-1em}
  \caption{A grammar of operators to construct equations. All
    constructors presented here preserve monotonicity of equations.
    Note that no assumptions are made on $P$ and $\phi$ in (\GroundOp)
    and (\PrecompOp), which form the base cases for structural
    induction.
    (\SumOp) and (\ScaleOp) allow analysing programs with non-linear
    recursion.
    (\ChoiceRestrOp) corresponds to (restricted) conditional choice,
    and enables piecewise definitions.
    (\ComposeOp) is convenient for combining equations.
    Finally, (\MaxOp) and (\MinOp) give simple ways to analyse worst- and
    best-cases in branching programs.
    \label{eqs:grammar-1}}
  \vspace{-1em}
\end{figure}
\begin{definition}
  We call $\EqsFull$ the set of equations generated by all constructors
  in Fig.~\ref{eqs:grammar-1}, i.e. the corresponding least fixed point in
  $\pp(\rr^\dd\to\rr^\dd)$.
\end{definition}
\begin{remark}
  For implementation, we also need to define the syntax of functions
  $P$, $a$, $\phi$ and $\varphi$ that can appear in
  Fig.~\ref{eqs:grammar-1}, even if we have no restriction in
  principle.
  In the artifact attached to this paper, we circumvent this
  technicality by using an embedded language, i.e. using syntax of
  numerical expressions naturally available in languages such as
  \texttt{Python} and \texttt{Wolfram}, and we allow users to
  extend it via a bank of custom functions (e.g. $\log_2$,
  $\texttt{ceiling}$, $\texttt{factorial}$, etc.).
\end{remark}
All proofs of this section are easy, by induction on the syntax.
\begin{proposition}
  All equations in $\EqsFull$ are monotone.
\end{proposition}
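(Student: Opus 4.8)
The plan is to read this as an immediate consequence of the stability proposition proved just above (that $\Eqs$, the set of \emph{all} monotone equations $\rr^\dd\to\rr^\dd$, is stable under every constructor of Fig.~\ref{eqs:grammar-1}), combined with the definition of $\EqsFull$ as the least fixed point in $\pp(\rr^\dd\to\rr^\dd)$ of the operator generated by these constructors. Concretely, I would argue: the stability proposition says that $\Eqs$ already contains the images of the nullary base constructors (\GroundOp) and (\PrecompOp) and is closed under the remaining constructors, so $\Eqs$ is itself a set closed under all the generating rules. By minimality of the least fixed point, $\EqsFull\subseteq\Eqs$, which is exactly the claim that every equation in $\EqsFull$ is monotone.

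Unfolding this minimality argument yields precisely the advertised structural induction on the grammar (equivalently, on the finite derivation tree witnessing membership in $\EqsFull$). The induction hypothesis is that the premise equations $\Phi_1,\Phi_2$ are monotone in the unknown, i.e. $f\leq g\implies\Phi_i(f)\leq\Phi_i(g)$ pointwise on $\dd\to\rri$, and for each rule one checks that the conclusion equation inherits this. The base cases are trivial: (\GroundOp) produces a constant map $f\mapsto P$, and (\PrecompOp) produces $f\mapsto f\circ\phi$, for which $f\leq g$ gives $f(\phi(\vn))\leq g(\phi(\vn))$ pointwise. The inductive cases each reduce to the monotonicity of a pointwise operation on $\rri$: addition for (\SumOp), the binary $\max$/$\min$ for (\MaxOp)/(\MinOp), and monotone composition for (\ComposeOp).

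The two cases worth stating carefully --- and the only places where a hypothesis is actually consumed --- are (\ScaleOp) and (\ChoiceRestrOp), exactly as flagged in Remark~\ref{rem:non-monotone-discussion-grammar}. For (\ScaleOp), from $\Phi(f)(\vn)\leq\Phi(g)(\vn)$ one deduces $a(\vn)\cdot\Phi(f)(\vn)\leq a(\vn)\cdot\Phi(g)(\vn)$ only because $a(\vn)\in\rr_+$; nonnegativity of the scaling coefficient is essential (the map $f\mapsto -f$ being the prototypical failure). For (\ChoiceRestrOp), because $\varphi$ depends on $\vn$ alone and not on $f$, the branch selected at each $\vn$ is identical for $f$ and for $g$, so $\ite(\varphi(\vn),\Phi_1(f)(\vn),\Phi_2(f)(\vn))$ and its $g$-analogue are read off the same monotone $\Phi_i$ and the inequality survives pointwise. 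I anticipate no genuine obstacle here: the whole argument is routine, and the only discipline required is to ensure these two side conditions on $a$ and $\varphi$ are invoked where needed.
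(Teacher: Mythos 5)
Your proposal is correct and follows exactly the route the paper intends: the paper dispatches this proposition with the remark that ``all proofs of this section are easy, by induction on the syntax,'' which is precisely the least-fixed-point/structural-induction argument you spell out, including the two side conditions ($a\geq 0$ in (\ScaleOp) and $f$-independence of $\varphi$ in (\ChoiceRestrOp)) that Remark~\ref{rem:non-monotone-discussion-grammar} flags as essential.
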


\begin{definition}
  We call $\EqsRestr$ the subset of $\EqsFull$ containing all
  equations generated by the constructors in
  Fig.~\ref{eqs:grammar-1}, (\MaxOp) and (\MinOp) excluded.
\end{definition}

\begin{proposition}
  All equations in $\EqsRestr$ are both monotone and affine.
\end{proposition}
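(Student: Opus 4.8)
The plan is to prove the two claims separately. \emph{Monotonicity} requires no new work: since $\EqsRestr \subseteq \EqsFull$ by construction, every equation in $\EqsRestr$ is monotone by the preceding proposition asserting that all equations in $\EqsFull$ are monotone. So the whole effort goes into \emph{affinity}, which I would establish by structural induction over the constructors that generate $\EqsRestr$, namely all of Fig.~\ref{eqs:grammar-1} except (\MaxOp) and (\MinOp). Throughout, I use the convenient reformulation of Definition~\ref{defi:affine-equation}: an operator $\Phi$ is affine exactly when the operator $\Psi := \big(f \mapsto \Phi(f) - \Phi(\vzero)\big)$ is \emph{linear}. My inductive hypothesis is that each sub-operator $\Phi_i$ is affine with linear part $\Psi_i := \Phi_i - \Phi_i(\vzero)$.

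For the base cases: in (\GroundOp), $\Phi(f) = P$ is independent of $f$, so $\Phi(f) - \Phi(\vzero) = \vzero$ and the linear part is the zero operator; in (\PrecompOp), $\Phi(f) = f \circ \phi$ is already linear since $(\alpha f + \beta g)\circ\phi = \alpha(f\circ\phi) + \beta(g\circ\phi)$ and $\Phi(\vzero) = \vzero$. For the inductive step I would treat each remaining constructor in turn and exhibit its linear part explicitly. (\SumOp) and (\ComposeOp) follow immediately from closure of linear maps under sums and composition: the linear parts are $\Psi_1 + \Psi_2$ and $\Psi_2 \circ \Psi_1$ respectively, the latter using linearity of $\Psi_2$ to absorb the constant $\Phi_1(\vzero)$. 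For (\ScaleOp), writing $\Phi(f)(\vn) = a(\vn)\,\Phi_1(f)(\vn)$, the difference $\Phi(f)(\vn) - \Phi(\vzero)(\vn) = a(\vn)\,\Psi_1(f)(\vn)$ is linear in $f$, because pointwise multiplication by the fixed function $a$ is a linear operation on $\rr^\dd$ --- note that positivity of $a$ plays no role here (it is needed only for monotonicity), so affinity holds for any real-valued $a$.

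The one case that deserves care, and which I expect to be the crux, is (\ChoiceRestrOp). The essential point is that the guard $\varphi : \dd \to \{\top, \bot\}$ depends only on $\vn$ and \emph{not} on $f$, exactly as flagged in Remark~\ref{rem:non-monotone-discussion-grammar}. This lets me split $\dd = \varphi^{-1}\{\top\} \sqcup \varphi^{-1}\{\bot\}$ into a \emph{fixed} partition independent of the argument $f$, on which $\Phi$ agrees with $\Phi_1$ and $\Phi_2$ respectively. Consequently $\Phi(\vzero)(\vn) = \ite(\varphi(\vn), \Phi_1(\vzero)(\vn), \Phi_2(\vzero)(\vn))$ and $\Phi(f)(\vn) - \Phi(\vzero)(\vn) = \ite(\varphi(\vn), \Psi_1(f)(\vn), \Psi_2(f)(\vn))$; since $\ite(\varphi(\vn), x_1, x_2) + \ite(\varphi(\vn), y_1, y_2) = \ite(\varphi(\vn), x_1 + y_1, x_2 + y_2)$ and likewise for scalar multiplication, this pointwise gluing of the two linear parts along a fixed partition is again linear. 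Had $\varphi$ been allowed to depend on $f$, the partition would vary with $f$ and this step would collapse, so the restriction in the constructor is precisely what makes the induction go through. Collecting the six cases completes the induction and yields affinity; together with the inherited monotonicity this proves the proposition.
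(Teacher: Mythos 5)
Your proof is correct and follows exactly the route the paper intends: the paper dispatches this proposition with the single remark that ``all proofs of this section are easy, by induction on the syntax,'' and your structural induction over the six constructors of $\EqsRestr$ (with monotonicity inherited from the $\EqsFull$ result) is precisely that argument, carried out in full. Your explicit identification of the linear parts for each constructor and your emphasis on why the $f$-independence of the guard in (\ChoiceRestrOp) is what makes the gluing step work are accurate and consistent with the paper's own discussion in Remark~\ref{rem:non-monotone-discussion-grammar}.
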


\begin{example}\label{example:piecewise-aff-sum}
  All equations of the following form belong to $\EqsRestr$.
  \begin{equation}\label{eq:piecewise-aff-sum}
    \begin{aligned}
    \Phi : (\dd \to \rr) &\to (\dd \to \rr)\\
         f &\mapsto \vn \mapsto \begin{cases}
           \;\cdots&\\
           \sum_{j=1}^{k_i} \big(a_{i,j}(\vn)\cdot f(\phi_{i,j}(\vn))\big) + b_i(\vn)
              &\text{if } \varphi_i(\vn)\\
           \;\cdots&
         \end{cases},
  \end{aligned}
  \end{equation}
  where all the $a_{i,j}$ are positive, and $\phi_{i,j}:\dd\to\dd$,
  $b_i:\dd\to\rr$ are arbitrary.
\end{example}
This class contains many examples.
In fact, we can prove the following.
\begin{proposition}
  All elements of $\EqsRestr$ are of the
  shape~$(\ref{eq:piecewise-aff-sum})$.
\end{proposition}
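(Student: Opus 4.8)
The plan is to proceed by structural induction on the grammar of Fig.~\ref{eqs:grammar-1} restricted to the six constructors (\GroundOp), (\PrecompOp), (\SumOp), (\ScaleOp), (\ChoiceRestrOp) and (\ComposeOp) that generate $\EqsRestr$. I first fix a precise reading of the target shape~(\ref{eq:piecewise-aff-sum}): an equation is in \emph{normal form} if there is a \emph{finite} partition of $\dd$ into cells $\{\vn \mid \varphi_i(\vn)\}$ (with each $\varphi_i$ a predicate on $\vn$ alone), such that on cell $i$ one has $\Phi(f)(\vn)=\sum_{j=1}^{k_i} a_{i,j}(\vn)\,f(\phi_{i,j}(\vn)) + b_i(\vn)$ with every $a_{i,j}:\dd\to\rr_+$ positive, every $\phi_{i,j}:\dd\to\dd$, and every $b_i:\dd\to\rr$. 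I then show that each base case is already in normal form and that each constructor maps normal forms to normal forms, which closes the induction.

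The base cases are immediate: (\GroundOp) gives $\Phi(f)(\vn)=P(\vn)$, a one-cell normal form with $\varphi_1\equiv\top$, $k_1=0$ (empty recursive sum) and $b_1=P$; (\PrecompOp) gives $\Phi(f)(\vn)=f(\phi(\vn))=1\cdot f(\phi(\vn))+0$, a one-cell normal form with $a_{1,1}\equiv 1\in\rr_+$, $\phi_{1,1}=\phi$ and $b_1=\vzero$. For the inductive step, (\SumOp), (\ScaleOp) and (\ChoiceRestrOp) only require elementary manipulations of the partition. For (\SumOp) I pass to the common refinement of the partitions of $\Phi_1$ and $\Phi_2$; on each common cell both summands are affine sums, so I concatenate their lists of recursive terms and add their base terms, positivity of coefficients being trivially preserved. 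For (\ScaleOp) I keep the partition of $\Phi$ and, on each cell, distribute $a(\vn)$, replacing each coefficient $a_{i,j}(\vn)$ by $a(\vn)a_{i,j}(\vn)$ — a product of positive functions, hence positive — and the base term by $a(\vn)b_i(\vn)$. For (\ChoiceRestrOp) I split $\dd$ according to $\varphi$ and intersect the region $\{\varphi=\top\}$ with the partition of $\Phi_1$ and the region $\{\varphi=\bot\}$ with that of $\Phi_2$, reading off the corresponding affine sum on each resulting cell.

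The main obstacle is (\ComposeOp). Writing $g=\Phi_1(f)$, on a cell $i$ of $\Phi_2$ one has $\Phi_2(g)(\vn)=\sum_{j=1}^{k^{(2)}_i} a^{(2)}_{i,j}(\vn)\,g(\phi^{(2)}_{i,j}(\vn))+b^{(2)}_i(\vn)$, and I must expand each term $g(\phi^{(2)}_{i,j}(\vn))=\Phi_1(f)(\phi^{(2)}_{i,j}(\vn))$ using the normal form of $\Phi_1$. The difficulty is that the cell of $\Phi_1$'s partition hit by the call target $\phi^{(2)}_{i,j}(\vn)$ depends on $\vn$ and may differ across the $j$. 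I therefore refine the partition further: its cells are indexed, for the fixed $i$, by a choice of a $\Phi_1$-region index $\kappa_j$ for \emph{each} $j\in\{1,\dots,k^{(2)}_i\}$, namely $\{\vn \mid \varphi^{(2)}_i(\vn)\text{ and }\varphi^{(1)}_{\kappa_j}(\phi^{(2)}_{i,j}(\vn))\text{ for all }j\}$; this is again a predicate on $\vn$ alone, and there are finitely many such cells since all partitions are finite and each $k^{(2)}_i$ is finite. On such a cell, substituting the normal form of $\Phi_1$ and expanding the products collects into a single affine sum in $f$, with recursive-call targets $\phi^{(1)}_{\kappa_j,l}\circ\phi^{(2)}_{i,j}:\dd\to\dd$, coefficients $a^{(2)}_{i,j}(\vn)\cdot a^{(1)}_{\kappa_j,l}(\phi^{(2)}_{i,j}(\vn))$ — again positive as products of positive functions — and base term $b^{(2)}_i(\vn)+\sum_j a^{(2)}_{i,j}(\vn)\,b^{(1)}_{\kappa_j}(\phi^{(2)}_{i,j}(\vn))$. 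Hence the composition is in normal form, which completes the induction. The only points needing care are the bookkeeping of the refined partition and checking, at each step, that the accumulated conditions remain predicates on $\vn$ (so that the shape constraints of~(\ref{eq:piecewise-aff-sum}) are respected) and that the partition stays finite.
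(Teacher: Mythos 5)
Your proposal is correct and follows exactly the route the paper intends: the paper dismisses this statement with ``all proofs of this section are easy, by induction on the syntax,'' and your structural induction over the six constructors of $\EqsRestr$ is precisely that argument, with the only non-trivial case, (\ComposeOp), handled correctly via the tuple-indexed refinement of the partition and the resulting positive product coefficients $a^{(2)}_{i,j}(\vn)\cdot a^{(1)}_{\kappa_j,l}(\phi^{(2)}_{i,j}(\vn))$ and composed call targets $\phi^{(1)}_{\kappa_j,l}\circ\phi^{(2)}_{i,j}$. Your write-up supplies more detail than the paper does, but the approach is the same.
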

\begin{corollary}
  Elements of $\EqsFull$ can be understood as equations of
  shape~$(\ref{eq:piecewise-aff-sum})$,
  interleaved with nested $\min$ and $\max$ operators.
\end{corollary}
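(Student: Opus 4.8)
The plan is to prove the statement by structural induction on the generation of $\EqsFull$, organised around a suitable normal form. Define $N$ to be the set of operators that are finite $\min$/$\max$ combinations of elements of $\EqsRestr$ -- i.e. the operators computed by expression trees whose internal nodes are $\min$ or $\max$ and whose leaves are equations of shape~$(\ref{eq:piecewise-aff-sum})$ (legitimate by the Proposition characterising $\EqsRestr$). Since $\EqsRestr\subseteq\EqsFull$ and $\EqsFull$ is closed under $(\MaxOp)$ and $(\MinOp)$, we have $N\subseteq\EqsFull$. The goal is the reverse inclusion, and since $\EqsFull$ is the \emph{least} set closed under all constructors of Fig.~\ref{eqs:grammar-1}, it suffices to show that $N$ contains the base cases and is itself closed under every constructor; minimality then gives $\EqsFull\subseteq N$, hence equality ``up to rewriting''.

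The base cases $(\GroundOp)$ and $(\PrecompOp)$, and closure under $(\MaxOp)$, $(\MinOp)$, are immediate. The remaining constructors are handled by three distributivity facts, proved pointwise: for $a:\dd\to\rr_+$, addition, positive scaling, and the conditional $\ite(\varphi(\vn),\cdot,\cdot)$ (with $\varphi$ depending on $\vn$ only) all commute with $\min$ and $\max$. For instance $a(\vn)\cdot\min(x,y)=\min(a(\vn)x,a(\vn)y)$ precisely because $a(\vn)\geq 0$ -- this is where positivity in $(\ScaleOp)$ is essential -- and $\ite(\varphi,\min(A,B),\min(C,D))=\min(\ite(\varphi,A,C),\ite(\varphi,B,D))$ since the guard selects the same branch on both sides. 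Using these identities (and padding a leaf $L$ as $\min(L,L)$ to match tree shapes), a double structural induction on the two $\min$/$\max$ trees lets me push each of $(\SumOp)$, $(\ScaleOp)$, $(\ChoiceRestrOp)$ down to the leaves. At the leaves I only combine elements of $\EqsRestr$ by the very same constructors, and $\EqsRestr$ is closed under them, so the result remains a $\min$/$\max$ tree over $\EqsRestr$, i.e. in $N$.

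The main obstacle is $(\ComposeOp)$, since composition does not obviously commute with $\min$/$\max$ on both arguments. I would first note that the outer structure distributes: for a tree $\Phi_2=\min(\Psi_a,\Psi_b)$ one has $\Phi_2\circ\Phi_1=\min(\Psi_a\circ\Phi_1,\Psi_b\circ\Phi_1)$ (and likewise for $\max$), so by induction it is enough to treat $\Phi_2\in\EqsRestr$ a single affine leaf. Writing $\Phi_2(g)(\vn)=\sum_j a_j(\vn)\,g(\phi_j(\vn))+b(\vn)$ piecewise, the composite $\Phi_2(\Phi_1(f))(\vn)$ evaluates the inner tree $\Phi_1(f)$ at the points $\phi_j(\vn)$. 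Evaluation at a point commutes with $\min$/$\max$, and each leaf $\Psi\in\EqsRestr$ precomposed as $\vn\mapsto\Psi(f)(\phi_j(\vn))$ is again of shape~$(\ref{eq:piecewise-aff-sum})$ (positive coefficients $a'_k\circ\phi_j$, recursive points $\phi'_k\circ\phi_j$, guards $\psi\circ\phi_j$ depending on $\vn$ only), hence in $\EqsRestr$. It then remains to reassemble with the outer positive scaling by $a_j$, the sum over $j$, the constant $b$, and the piecewise guard $\varphi_i$ -- exactly the three operations shown above to distribute through $\min$/$\max$. Pushing them to the leaves keeps everything a $\min$/$\max$ tree over $\EqsRestr$, closing the induction. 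I expect the bookkeeping of nested precompositions and the combinatorial blow-up of pieces when distributing to be the only delicate points; the algebra itself is routine once positivity of the scaling coefficients is recorded.
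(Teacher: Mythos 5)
Your argument is correct, but it proves more than the paper does and by a genuinely different route. The paper treats this corollary as an immediate, purely syntactic consequence of the preceding proposition (``all elements of $\EqsRestr$ are of shape~$(\ref{eq:piecewise-aff-sum})$'') together with the blanket remark that all proofs in that section go by induction on the syntax: reading ``interleaved with nested $\min$ and $\max$ operators'' literally, every node of a derivation in $\EqsFull$ is either an $\EqsRestr$-constructor (contributing a shape-$(\ref{eq:piecewise-aff-sum})$ fragment) or a $(\MaxOp)$/$(\MinOp)$ node, and no algebra is needed. You instead establish a sharper \emph{normal form}: every element of $\EqsFull$ equals a finite $\min$/$\max$ tree whose leaves lie in $\EqsRestr$, obtained by pushing $(\SumOp)$, $(\ScaleOp)$, $(\ChoiceRestrOp)$ and $(\ComposeOp)$ through $\min$/$\max$ via the pointwise distributivity laws (with positivity of the scaling coefficients doing the real work, and the choice-function expansion $\max(a,b)+\max(c,d)=\max(a+c,a+d,b+c,b+d)$ handling sums and composition). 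Your treatment of $(\ComposeOp)$ — distribute the outer tree, then observe that evaluation at the points $\phi_{i,j}(\vn)$ commutes with $\min$/$\max$ and that precomposition preserves shape~$(\ref{eq:piecewise-aff-sum})$ — is sound; the only cost is the exponential blow-up in leaves and in the refinement of the guard partition, which you rightly flag. What the paper's reading buys is brevity and a description faithful to the syntax tree; what yours buys is a disjunctive/conjunctive normal form that is strictly more informative (e.g.\ it feeds directly into Proposition~\ref{prop:cones-of-maxmin}, which intersects cones over exactly such top-level $\min$/$\max$ decompositions). Both establish the stated corollary.
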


\begin{proposition}
  Candidate monotonicity in preserved by (\SumOp), (\ComposeOp),
  (\MaxOp), (\MinOp).
  It is preserved by (\GroundOp), (\PrecompOp) and (\ScaleOp) whenever
  $P$ (resp. $\varphi$, resp. $a$) are monotone.
  It is not preserved in general by (\ChoiceRestrOp), but sufficient
  conditions can be obtained to deal with special cases (e.g. initial
  conditions).
\end{proposition}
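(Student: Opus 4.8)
The plan is to prove each clause by a direct case analysis over the constructors of Fig.~\ref{eqs:grammar-1}, reading the statement as a closure property: for the binary constructors I take as hypothesis (the inductive hypothesis once this proposition is combined with structural induction) that the sub-operators $\Phi_1,\Phi_2$ already preserve candidate monotonicity. Everything then reduces to one elementary lemma about the lattice of monotone functions $\dd\to\rri$, which I would establish first: if $g,h:\dd\to\rri$ are both monotone (order-preserving for the product order on $\dd$), then so are $g+h$, $\max(g,h)$ and $\min(g,h)$; and if $\phi:\dd\to\dd$ is monotone and $g$ is monotone, then $g\circ\phi$ is monotone. Each is a one-line verification: for $\vn\leq\vm$ one compares images and uses that $+,\max,\min$ are themselves monotone in each argument.

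With this lemma the four unconditional cases are immediate. For (\SumOp), given monotone $f$ the hypothesis makes $\Phi_1 f$ and $\Phi_2 f$ monotone, so their pointwise sum is monotone; (\MaxOp) and (\MinOp) are identical with $\max/\min$ in place of $+$. For (\ComposeOp) I chain the hypotheses: $\Phi_1 f$ is monotone because $\Phi_1$ preserves candidate monotonicity, hence $\Phi_2(\Phi_1 f)$ is monotone because $\Phi_2$ does. The two base-case constructors are where the stated side conditions enter. For (\GroundOp) the operator is constant in $f$, namely $\Phi f=P$, so $\Phi f$ is monotone for \emph{every} $f$ exactly when $P$ is monotone. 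For (\PrecompOp), $\Phi f=f\circ\phi$, and the composition half of the lemma gives monotonicity of $f\circ\phi$ whenever $\phi$ is monotone (the symbol $\varphi$ in the statement denoting this precomposition map).

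The genuinely delicate case is (\ScaleOp), $\Phi f=\vn\mapsto a(\vn)\cdot\Phi_1(f)(\vn)$ with $a\geq 0$. Here the pointwise product of two monotone functions is \emph{not} monotone in general: writing $g:=\Phi_1 f$ and taking $\vn\leq\vm$ one has $0\leq a(\vn)\leq a(\vm)$ and $g(\vn)\leq g(\vm)$, yet e.g. $a(\vn)=0,\ a(\vm)=1,\ g(\vn)=-2,\ g(\vm)=-1$ gives products $0$ and $-1$, violating monotonicity. The inequality $a(\vn)g(\vn)\leq a(\vm)g(\vm)$ does go through once $g\geq 0$, via $a(\vn)g(\vn)\leq a(\vm)g(\vn)\leq a(\vm)g(\vm)$ using nonnegativity at each step, which is exactly the cost setting of nonnegative candidates. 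I would therefore present the short sign argument and record nonnegativity of the scaled operand as the hypothesis making the clause hold; this is the main obstacle, being the only place where positivity is truly used.

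Finally, (\ChoiceRestrOp), $\Phi f=\vn\mapsto\ite(\varphi(\vn),\Phi_1 f(\vn),\Phi_2 f(\vn))$, genuinely fails: even with $\Phi_1 f,\Phi_2 f$ monotone, the glued function can drop across the boundary of $\{\varphi=\top\}$, as a counterexample with a constant ``large'' branch on low inputs and a ``small'' branch on high inputs shows. The remedy I would state is a crossing condition forbidding any downward jump across the boundary, namely $\Phi_1 f(\vn)\leq\Phi_2 f(\vm)$ (and symmetrically) whenever comparable $\vn\leq\vm$ straddle it, together with $\varphi$ behaving monotonically as an indicator; the practically relevant instance is the initial-condition pattern, where the base-case branch lies below the recursive branch, which I would isolate as the advertised special case.
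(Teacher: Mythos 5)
Your proof is correct and follows the same route the paper intends: the paper gives no explicit argument for this proposition beyond the remark that all proofs in that section are ``easy, by induction on the syntax,'' and your case analysis over the constructors of Fig.~\ref{eqs:grammar-1}, reduced to the elementary lattice lemma about $+$, $\max$, $\min$ and composition of monotone maps, is exactly that induction carried out in full. The one place where you go beyond the paper is (\ScaleOp), and your observation there is a genuine and worthwhile catch: monotonicity of $a:\dd\to\rr_+$ alone does \emph{not} suffice, since the pointwise product of a nonnegative monotone $a$ with a monotone but sign-changing $g=\Phi_1 f$ can fail to be monotone (your $a(\vn)=0,\ a(\vm)=1,\ g(\vn)=-2,\ g(\vm)=-1$ example is valid, and the paper's setting does admit negative-valued candidates, e.g.\ the memory benchmarks). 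The chain $a(\vn)g(\vn)\leq a(\vm)g(\vn)\leq a(\vm)g(\vm)$ indeed requires $g\geq\vzero$, so the proposition as stated needs the additional hypothesis that the scaled operand (equivalently, the candidates in the relevant sublattice) is nonnegative; recording that hypothesis, as you do, repairs the statement rather than the proof. Your treatment of (\ChoiceRestrOp) --- counterexample plus a no-downward-jump crossing condition at the boundary of $\{\varphi=\top\}$, with initial conditions as the special case --- matches what the paper asserts without proof. No gaps otherwise.
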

The grammar of Fig,~\ref{eqs:grammar-1} can be extended to wider
types of equations, featuring nested calls (self-composition), and
other very nonlinear behaviours, on the condition of being a bit more
careful about the lattice we work in.
For the sake of presentation, we do not include these constructors
in our grammar, as they would require precise hypotheses, and only
give the following few results.

\begin{proposition}[Self-composition]
  The operation $(f,g)\mapsto f\circ g$ is monotone in both variables
  for $f$, $g$ in the sublattice of monotone candidates, when this
  operation is well-defined, i.e. when $\dd\subseteq\nn$,
  $g(\nn)\subseteq\nn$. Generalisations can be obtained.
  Note that this allows us to deal with the \texttt{\nestedname} benchmark
  in Table~\ref{table:benchmarks}.
\end{proposition}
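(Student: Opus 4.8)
The plan is to read ``monotone in both variables'' as \emph{separate} monotonicity in each argument, and then recover joint monotonicity from the two separate statements by a routine two-step comparison. Before the monotonicity arguments themselves, I would first record that the operation is well-defined and stays inside the relevant lattice: under the hypotheses $\dd\subseteq\nn$ and $g(\nn)\subseteq\nn$, every value $g(\vn)$ is a genuine natural number lying in the domain of $f$, so that $f(g(\vn))$ is defined without ever having to evaluate $f$ at $\pm\infty$. Moreover, since the composite of two monotone maps is monotone, $f\circ g$ is again a monotone candidate (here using that $\dd\subseteq\nn$ carries the usual order), so $(f,g)\mapsto f\circ g$ really is a map from the product of the sublattice of monotone candidates (Proposition~\ref{prop:monotone-candidate-sublat}) back into that same sublattice.

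For monotonicity in the first argument, I would fix $g$ and take $f_1\leq f_2$ in the pointwise order. For each $\vn\in\dd$ I evaluate at the point $g(\vn)$: since the order on functions is pointwise, $f_1(g(\vn))\leq f_2(g(\vn))$, whence $f_1\circ g\leq f_2\circ g$. This direction is immediate and, notably, uses neither monotonicity of the $f_i$ nor of $g$ -- precomposition always preserves the pointwise order.

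The second argument is where the restriction to monotone candidates does the work. I would fix a monotone $f$ and take $g_1\leq g_2$ pointwise, both valued in $\nn$ by the well-definedness hypothesis. For each $\vn$ we then have $g_1(\vn)\leq g_2(\vn)$ as natural numbers, and applying monotonicity of $f$ gives $f(g_1(\vn))\leq f(g_2(\vn))$, i.e. $f\circ g_1\leq f\circ g_2$. Combining the two halves, for $f_1\leq f_2$ and $g_1\leq g_2$ one chains $f_1\circ g_1\leq f_2\circ g_1\leq f_2\circ g_2$ (first half, then second), which is the claimed joint monotonicity.

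I expect the only real subtlety -- rather than a genuine obstacle -- to be bookkeeping around the hypotheses. First, the condition $g(\nn)\subseteq\nn$ is exactly what lets us avoid evaluating $f$ at infinite arguments; dropping it would force a choice of monotone extension of $f$ to $\rri$ and complicate the first step. Second, the asymmetry is essential: monotonicity in $g$ genuinely requires $f$ to be monotone, so the restriction to the sublattice of monotone candidates is not cosmetic, and a non-monotone $f$ yields an easy counterexample. The promised generalisations follow the same template, since the argument only uses that $f$ is order-preserving and that $g$ lands in $\dom f$; it therefore extends verbatim to any domains on which composition of monotone functions is well-defined.
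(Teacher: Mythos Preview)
Your proof is correct and is precisely the standard elementary argument; the paper in fact omits the proof entirely, treating it as routine, so there is nothing to compare against. Your care in noting where each hypothesis is used (monotonicity of $f$ for the second variable, the codomain condition $g(\nn)\subseteq\nn$ for well-definedness) is exactly the ``bit more careful about the lattice we work in'' that the paper alludes to just before stating the proposition.
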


\begin{proposition}[Monotone post-composition]
  Operations (\SumOp), (\ScaleOp), (\MaxOp), (\MinOp) can be unified
  by noting that for all monotone $P:\rr^k\to\rr$, the equation
  $f\mapsto P(\Phi_1(f),\dots,\Phi_k(f))$ is monotone for all monotone
  $\Phi_1,\dots\Phi_k$.
  For example, $f\mapsto n\mapsto\ite(n=0,1,f(n-1)^2+1)$ is monotone
  on positive candidates.
\end{proposition}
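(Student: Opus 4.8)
The plan is to prove monotonicity by a direct composition argument, reading the operator $\Psi : f \mapsto P(\Phi_1(f),\dots,\Phi_k(f))$ as the pointwise application $\Psi(f)(\vn) = P\big(\Phi_1(f)(\vn),\dots,\Phi_k(f)(\vn)\big)$, where $P:\rr^k\to\rr$ is monotone for the product order on $\rr^k$. First I fix $f\leq g$ in $\dd\to\rr$ for the pointwise order. Since each $\Phi_i$ is a monotone equation, $\Phi_i(f)\leq\Phi_i(g)$, that is $\Phi_i(f)(\vn)\leq\Phi_i(g)(\vn)$ for every $\vn\in\dd$. Fixing such a $\vn$, the tuple $(\Phi_1(f)(\vn),\dots,\Phi_k(f)(\vn))$ lies below $(\Phi_1(g)(\vn),\dots,\Phi_k(g)(\vn))$ in the product order on $\rr^k$, so monotonicity of $P$ gives $\Psi(f)(\vn)\leq\Psi(g)(\vn)$. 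As $\vn$ was arbitrary, $\Psi(f)\leq\Psi(g)$, which is exactly monotonicity of $\Psi$. The whole argument is nothing more than the observation that monotonicity is stable under composition, once the product order on $\rr^k$ is matched with the pointwise-lifted order on the $\Phi_i(f)$.

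To justify the word \emph{unify}, I then exhibit each listed constructor as an instance of this scheme: (\SumOp) uses $P(x,y)=x+y$, (\MaxOp) uses $P(x,y)=\max(x,y)$, and (\MinOp) uses $P(x,y)=\min(x,y)$, all monotone on $\rr^2$. The case of (\ScaleOp) is marginally different, since the coefficient $a(\vn)$ depends on $\vn$: here I use the $\vn$-indexed family $P_{\vn}(x)=a(\vn)\cdot x$, each monotone precisely because $a(\vn)\geq 0$, and note that the pointwise argument above goes through unchanged when $P$ is allowed to vary with $\vn$ (equivalently, one absorbs $a$ into a $k=1$ post-composition performed point by point).

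For the example, I restrict to the sublattice of nonnegative candidates, on which $x\mapsto x^2$, and hence $P(x)=x^2+1$, is monotone. Composing this post-composition with the precomposition $f\mapsto f(n-1)$ (a (\PrecompOp) instance, monotone) and the conditional $\ite(n=0,1,\cdot)$ whose guard depends only on $n$ (a (\ChoiceRestrOp) instance, hence monotone) yields the claimed operator, monotone on positive candidates.

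The main obstacle is not mathematical depth but careful bookkeeping of orders. One must verify that the product order on $\rr^k$ for which $P$ is assumed monotone is exactly the order obtained by lifting the pointwise orders of the $\Phi_i(f)$, and, in the (\ScaleOp) and example cases, be explicit about the restriction (nonnegativity of $a$, positivity of candidates) that is precisely what makes $P$ monotone rather than merely well-defined; on all of $\rr$ the squaring map, for instance, fails to be monotone, so the restriction is genuinely necessary.
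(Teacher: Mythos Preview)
Your proof is correct and is the natural direct argument; the paper itself does not spell out a proof of this proposition, offering only the blanket remark that all proofs of the section are easy. Your decomposition of the example via (\PrecompOp), (\ChoiceRestrOp), and the monotone post-composition $P(x)=x^2+1$ on nonnegatives is exactly the intended reading, and your care in noting that (\ScaleOp) requires a $\vn$-indexed $P_{\vn}$ (or equivalently a slightly more general statement with $P:\dd\times\rr^k\to\rr$ monotone in its $\rr^k$ arguments) is a welcome clarification that the paper leaves implicit.
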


\begin{proposition}[Unbounded optimisation]
  Operations (\MaxOp), (\MinOp) can be immediately extended to all
  finite arity $\max$ and $\min$ by iterating them.
  More generally, even \emph{unbounded} $\max$ and $\min$ operations
  are monotone. This allows, e.g., dealing with explicit
  non-deterministic divide-and-conquer equations.%
\end{proposition}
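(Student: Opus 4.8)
The plan is to treat the two claims in the statement separately, the second subsuming the first. For the finite-arity part, I would simply iterate the binary constructors already in the grammar. Since $\max$ is associative, the $k$-ary operator $f\mapsto\max(\Phi_1(f),\dots,\Phi_k(f))$ is the nested application $\max(\Phi_1(f),\max(\Phi_2(f),\dots))$ of the binary (\MaxOp), so closure of $\Eqs$ under (\MaxOp) gives membership for every fixed $k$ by induction on $k$; the same argument applies to (\MinOp). No further work is needed here.

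For the unbounded case, the key move is to leave the vector space $\dd\to\rr$ and work in the complete lattice $\dd\to\rri$, in which arbitrary suprema and infima exist and are computed pointwise. Given any family $(\Phi_i)_{i\in I}$ of monotone equations indexed by an arbitrary (possibly infinite) set $I$, I would define $\Psi:\rri^\dd\to\rri^\dd$ by $\Psi(f)(\vn):=\sup_{i\in I}\Phi_i(f)(\vn)$, and dually with $\inf$ for the min operation. Monotonicity then follows immediately from monotonicity of $\sup$ in a complete lattice: for $f\leq g$, each $\Phi_i$ gives $\Phi_i(f)\leq\Phi_i(g)$, so for every $\vn$ and every $i$ we have $\Phi_i(f)(\vn)\leq\Phi_i(g)(\vn)\leq\sup_{j}\Phi_j(g)(\vn)$; hence $\sup_j\Phi_j(g)(\vn)$ is an upper bound of the set $\{\Phi_i(f)(\vn)\}_i$ and therefore dominates its supremum, yielding $\Psi(f)\leq\Psi(g)$. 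This is really an infinite-arity instance of the monotone post-composition principle stated above, with the monotone outer operations $\sup,\inf:\rri^I\to\rri$.

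To cover explicit non-deterministic divide-and-conquer, where the admissible splits depend on the argument $\vn$, I would avoid $\vn$-dependent index sets and instead exploit the extended lattice $\rri$: a branch that is inapplicable at a given $\vn$ is encoded by the value $-\infty$ for a $\sup$ (resp.\ $+\infty$ for an $\inf$), which is absorbed without affecting the result. A single fixed index set $I$ therefore suffices, each $\Phi_i$ remaining a monotone equation (the applicability test depends only on $\vn$, so the conditional is of the (\ChoiceRestrOp) form), and the monotonicity argument above applies verbatim.

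The only delicate point—and what I would flag as the real obstacle—is well-definedness rather than monotonicity: unbounded suprema can escape $\rr$, so these operators genuinely act on $\rri^\dd$ and need not restrict to $\rr^\dd$. This is exactly the ``being a bit more careful about the lattice'' remarked on before the proposition, and it explains why the statement claims only monotonicity, not affinity or preservation of termination. Completeness of $\dd\to\rri$ guarantees that the suprema and infima exist, so no additional hypotheses are required for the monotonicity claim itself.
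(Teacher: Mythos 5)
Your proof is correct and follows exactly the route the paper intends (the paper states this proposition without an explicit proof, noting only that such extensions require ``being a bit more careful about the lattice''): finite arity by iterating the binary constructors, and the unbounded case as a pointwise $\sup$/$\inf$ in the complete lattice $\dd\to\rri$, which is an infinite-arity instance of the monotone post-composition principle. Your identification of well-definedness over $\rr$ (rather than monotonicity) as the genuine caveat matches the paper's own remark.
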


\subsection{Application to inductivity measure}
\label{subsec:grammars-extra-results}

The compositional and axiomatic viewpoint presented in this section
allows us to easily obtain and combine results on the inductivity measure
$\Phi f - f$ of Definition~\ref{defi:inductivity-measure}.
For example, it admits a simple expression for affine equations
applied to affine combinations of base functions.
This will be useful to enforce inductivity of candidates in our
optimisation-based approach of Section~\ref{subsec:optim-approach}, as
it will be possible to enforce inductivity locally using only linear
constraints in the $\alpha_i$.
\begin{proposition}\label{prop:aff-on-aff}
  Let $\Phi$ be an affine equation, and $f=(\sum_i \alpha_i f_i) + b$ be an affine
  combination of functions.
  Then, $\Phi f = (\sum_i \alpha_i\cdot\Phil f_i) + \Phi(b)$, where
  $\Phil := f \mapsto \Phi(f) - \Phi(\vzero)$ is the linear component
  of $\Phi$, i.e. the linearisation of $\Phi$.
  In other words, the behaviour of affine equations on affine
  candidates is easy to express explicitly.
  In particular, the inductivity measure can be expressed as
  $\Phi f - f = (\sum_i \alpha_i\cdot(\Phil f_i - f_i)) + (\Phi(b)-b)$.
\end{proposition}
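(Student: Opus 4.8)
The plan is a direct unfolding of the definition of affine equation (Definition~\ref{defi:affine-equation}), where the only real work is bookkeeping: tracking the constant term $\Phi(\vzero)$ so that it is charged exactly once in the final formula. First I would invoke affineness to write $\Phi f = \Phi(\vzero) + \Phil(f)$, with $\Phil$ the linearisation, and then substitute the affine combination $f = (\sum_i \alpha_i f_i) + b$ into the linear part $\Phil(f)$.

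Next I would push $\Phil$ through the affine combination using its two defining properties from Definition~\ref{defi:linear-equation}, namely additivity and homogeneity (valid for all real $\alpha_i$, with no sign restriction needed here), obtaining $\Phil(f) = (\sum_i \alpha_i\,\Phil f_i) + \Phil(b)$. Substituting back gives $\Phi f = \Phi(\vzero) + (\sum_i \alpha_i\,\Phil f_i) + \Phil(b)$. The key recombination step is to recognise that $\Phi(\vzero) + \Phil(b) = \Phi(b)$, again by affineness applied to the constant part $b$. This yields the claimed identity $\Phi f = (\sum_i \alpha_i\,\Phil f_i) + \Phi(b)$.

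Finally, the inductivity-measure formula follows by subtracting $f = (\sum_i \alpha_i f_i) + b$ from both sides and regrouping termwise: the scalars $\alpha_i$ distribute over the differences $\Phil f_i - f_i$, and the remaining constant terms combine into $\Phi(b) - b$, giving $\Phi f - f = \sum_i \alpha_i\,(\Phil f_i - f_i) + (\Phi(b) - b)$.

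I do not anticipate any genuine obstacle: every step is purely algebraic, relying only on the linearity of $\Phil$ and the decomposition $\Phi = \Phi(\vzero) + \Phil$. The one point requiring care is the asymmetry between the homogeneous sum, which is expressed through $\Phil$, and the constant term, which is expressed through the full affine $\Phi$; getting this split right is precisely what makes the result an \emph{affine} function of the coefficients $\alpha_i$, which is exactly the property exploited to encode inductivity by linear constraints in Section~\ref{subsec:optim-approach}.
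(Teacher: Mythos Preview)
Your proposal is correct and is exactly the direct unfolding the paper has in mind: the paper states Proposition~\ref{prop:aff-on-aff} without proof, treating it as immediate from Definitions~\ref{defi:linear-equation} and~\ref{defi:affine-equation}. Your careful bookkeeping of the constant term via $\Phi(\vzero)+\Phil(b)=\Phi(b)$ is precisely the one nontrivial regrouping required, and your derivation of the inductivity measure matches the intended one.
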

In Section~\ref{sec:cones}, we are even more explicit, and give
expressions of $(\Phi f - f)$ that provide, for basic building blocks
of equations, simple sufficient conditions for a bound to be inductive.
For now, we focus on how such results can be combined through the
constructors of our grammar. In other words, taking a deductive
approach, inductivity $(\Phi f - f)$ of candidates can be analysed by
induction on the structure of the equation -- such results could be
expressed as type rules.
For example, it is preserved by (full) conditional choice and convex
combinations of equations (which may be viewed as a restricted form of
probabilistic choice).

\begin{proposition}
  Let $\Phi_1,\Phi_2$ be two monotone equations.
  Then, for all $f$,
  if $\sgn(\Phi_1(f)-f)$ and $\sgn(\Phi_2(f)-f)$ are equal and
  globally constant, $\sgn(\Phi(f)-f)$ is identical, in both of the following
  cases:
  when $\Phi(f)(\vn)$ is defined piecewise via
  $\varphi:(\dd\to\rr)\times\dd\to\{\top,\bot\}$
  as $\ite(\varphi(f,\vn), \Phi_1(f)(\vn), \Phi_2(f)(\vn))$ and is monotone;
  and when
  $\Phi=\lambda\Phi_1+\mu\Phi_2$ for $\lambda,\mu\in\rr_+$,
  $\lambda+\mu=1$.
  In other words, to check that $f$ is inductive for $\Phi$, it is
  sufficient to do so for $\Phi_1$ and $\Phi_2$.
\end{proposition}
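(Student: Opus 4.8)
The plan is to reduce both cases to a pointwise analysis of the inductivity measure $\Phi f - f$ of Definition~\ref{defi:inductivity-measure}, exploiting the fact that in each case the value of $\Phi(f)(\vn)$ is built from $\Phi_1(f)(\vn)$ and $\Phi_2(f)(\vn)$ by a sign-preserving operation. Let $s$ denote the common constant sign of $\Phi_1 f - f$ and $\Phi_2 f - f$ (so $s$ is $(+)$ when $f$ is a prefixpoint of both $\Phi_1$ and $\Phi_2$, and $(-)$ when it is a postfixpoint of both). By Definition~\ref{defi:inductive-bounds}, the claim ``$f$ is inductive for $\Phi$ in the same direction'' is literally the statement $\sgn(\Phi f - f) = s$ everywhere, so this is exactly what I must establish.

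For the conditional-choice case I would argue purely pointwise. Fix $\vn\in\dd$. By definition of $\ite$, $\Phi(f)(\vn) = \Phi_1(f)(\vn)$ when $\varphi(f,\vn)=\top$ and $\Phi(f)(\vn)=\Phi_2(f)(\vn)$ otherwise; hence $\Phi(f)(\vn) - f(\vn)$ equals either $\Phi_1(f)(\vn) - f(\vn)$ or $\Phi_2(f)(\vn) - f(\vn)$. Both of these carry sign $s$ by hypothesis, so does $\Phi(f)(\vn) - f(\vn)$, and since $\vn$ was arbitrary we get $\sgn(\Phi f - f)=s$ globally. Here the monotonicity assumption on $\Phi$ is used only to ensure that $\Phi$ is itself a monotone equation (conditional choice with $\varphi$ depending on $f$ need not be, cf. Remark~\ref{rem:non-monotone-discussion-grammar}), so that the inductive-bound vocabulary and Theorem~\ref{th:proof-principle} apply; the sign argument itself does not require it.

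For the convex-combination case the key is to use $\lambda+\mu=1$ to rewrite $f = \lambda f + \mu f$, which yields the identity
$$\Phi f - f = \lambda\Phi_1(f) + \mu\Phi_2(f) - (\lambda+\mu)f = \lambda\big(\Phi_1 f - f\big) + \mu\big(\Phi_2 f - f\big).$$
Thus the inductivity measure of $\Phi$ is a pointwise nonnegative combination (as $\lambda,\mu\in\rr_+$) of two functions of common sign $s$, and therefore has sign $s$ everywhere. This is a special instance of Proposition~\ref{prop:aff-on-aff} applied to the affine equation $\lambda\Phi_1+\mu\Phi_2$, but I would prefer the direct one-line computation above because it makes transparent why the convexity constraint $\lambda+\mu=1$ is essential: dropping it leaves a residual term $(\lambda+\mu-1)f$ whose sign is uncontrolled.

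There is no serious obstacle here; the proof is essentially a bookkeeping exercise. The only points needing a little care are pinning down the meaning of ``globally constant sign'' and, if one wants the strict versions, observing that $\lambda+\mu=1$ forces at least one coefficient to be positive so that strict signs are preserved as well, together with flagging that monotonicity of $\Phi$ in the first case is a standing hypothesis placing $\Phi$ in the class of monotone equations rather than something the sign computation itself consumes.
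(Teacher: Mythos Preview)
Your proof is correct and is exactly the direct pointwise verification the paper has in mind; the paper in fact states this proposition without proof, treating it as an immediate consequence of the definitions. One small aside: your reference to Proposition~\ref{prop:aff-on-aff} is not quite apt, since that proposition concerns an affine equation applied to an affine combination of \emph{candidates}, not a convex combination of \emph{equations}; but since you give the one-line computation directly, this does not affect the argument.
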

We may also observe that for affine equations, inductivity conditions
can be obtained from conditions for linear equations, by relativisation.
\begin{proposition}
  For all affine $\Phi$ and all $f$,
  $\Phi f - f = \Phil(f-\fsol)-(f-\fsol)$.
\end{proposition}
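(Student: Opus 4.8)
The plan is to prove the identity by a direct computation, relying only on the defining property of affine equations together with the fact that $\fsol$ is a fixpoint of $\Phi$. I work throughout in the vector space $\dd\to\rr$, which is licit under the standing termination assumption of Section~\ref{subsec:termination}: $\Phi$ then restricts to $\rr^\dd\to\rr^\dd$ and admits a single solution $\fsol:\dd\to\rr$, so that the difference $f-\fsol$ is well defined and finite-valued, and the linearisation $\Phil$ may be applied to it.

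The first step is to record the key algebraic identity for affine equations: for any $f,g:\dd\to\rr$ one has $\Phi(f)-\Phi(g)=\Phil(f-g)$. This follows from Definition~\ref{defi:affine-equation}: writing $\Phi(f)=\Phi(\vzero)+\Phil(f)$ and $\Phi(g)=\Phi(\vzero)+\Phil(g)$, the constant term $\Phi(\vzero)$ cancels in the difference, giving $\Phi(f)-\Phi(g)=\Phil(f)-\Phil(g)=\Phil(f-g)$, where the last equality uses linearity of $\Phil$.

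The second step is to insert $\fsol$ and exploit that it is a solution, i.e. $\Phi\fsol=\fsol$. I would simply split $\Phi f - f = (\Phi f - \Phi\fsol) - (f - \fsol)$, which is valid precisely because $\Phi\fsol=\fsol$. Applying the identity of the first step with $g=\fsol$ rewrites the first parenthesis as $\Phil(f-\fsol)$, yielding $\Phi f - f = \Phil(f-\fsol) - (f-\fsol)$, which is the claim.

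There is no genuine obstacle here; the statement is essentially a one-line consequence of affineness once $\fsol$ is substituted, and may be read as conjugating the inductivity measure $\Phi f - f$ by the translation $f\mapsto f-\fsol$, thereby reducing inductivity questions for the affine map $\Phi$ to the linear map $\Phil$. The only points needing a little care are bookkeeping ones: that $\fsol$ denotes an actual fixpoint, so that $\Phi\fsol=\fsol$ may legitimately be used, and that we stay within the finite-valued vector space so that the subtractions and the linearity of $\Phil$ are meaningful — both guaranteed by the termination hypothesis in force.
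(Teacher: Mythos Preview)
Your proof is correct and follows essentially the same approach as the paper: the identical computation $\Phi f - f = (\Phi f - \Phi\fsol) - (f - \fsol) = \Phil(f-\fsol) - (f-\fsol)$, using $\Phi(f)-\Phi(g)=\Phil(f-g)$ and $\Phi\fsol=\fsol$, appears explicitly in the proof of Proposition~\ref{prop:struct-cone-affine} (with $\lfp\Phi$ in place of $\fsol$). Your added care about termination and finite-valuedness is appropriate and matches the paper's standing assumptions.
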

For example, conditions in Section~\ref{sec:cones} on the growth of
$f$ will be translated into conditions on the growth of the difference
$f-\fsol$ with the (unknown) solution.
Similarly, sufficient conditions for inductivity of bounds can be
obtained for $\max/\min$ of operators or for composition, and should
be easily obtained by the interested reader, but we do not describe
them here for brevity.

Just as inductivity $(\Phi f - f)$ of candidates can be analysed by
induction on the structure of the equation, the same is true of
\emph{variation} of inductivity measures under a transformation $\Psi$
of expressions, i.e. of $(\Phi\Psi f - \Psi f) - (\Phi f - f)$. This
is relevant for candidate \emph{repair}, i.e. to build inductive
bounds from candidate bounds only ``nearly inductive'', which is quickly sketched
in Section~\ref{sec:repair}.

\section{Structure of sets of pre- and postfixpoints -- Robustness and
  Extensions of the proof principle}
\label{sec:cones}

We now investigate the \emph{converse} of the proof principle
introduced in Section~\ref{sec:eqs-as-ops}, i.e. we question the
\emph{completeness} of the approach: when are bounds inductive?

The results of our exploration are summarised as a rather explicit
description of $\PreFix(\Phi)$ and $\PostFix(\Phi)$ in terms of convex
cones (Theorem~\ref{th:explicit-cone-of-aff} for affine equations,
partially extended to the full grammar in
Proposition~\ref{prop:cones-of-maxmin}).
The conditions we obtain can be interpreted as conditions on the
relative growth of candidates with respect to the solution, \emph{in
the direction of the recursive calls}, as exemplified at the end of
Section~\ref{subsec:cones-1}.
They fully explain the nature of our counter-examples, and suggest
ways to repair them (as sketched in Section~\ref{sec:repair}).
This view is further empowered by additional algebraic structure of
these cones (Proposition~\ref{prop:cone-multiplication}) and by
relationship with (quasi-)ranking functions for the equation.

Additionally, Section~\ref{subsec:proof-extended} proposes tools to
extend our proof principle, so as to handle some of the non-inductive
bounds and simplify the proof process, via \emph{hybrid verification}
and introduction of \emph{subdomain-wise} reasoning enabling
control-flow analysis and refinement (CFA/CFR) for equations.

\subsection{Structure of the set of postfixpoints}
\label{subsec:cones-1}

\begin{proposition}[Structure of $\PostFix$/$\PreFix$ for affine equations]
  \label{prop:struct-cone-affine}
  Let $\Phi$ be a (monotone) affine equation.
  Then, the sets $\PostFix(\Phi)$ and $\PreFix(\Phi)$ are affine convex cones
  described by $\PostFix(\Phi) = \lfp \Phi + C$ and
  $\PreFix(\Phi) = \gfp \Phi - C$, where $C$ is a linear convex cone.
  Moreover, the linear cone $C$ is equal to $\PostFix(\Phil)$, where
$\Phil$ is the linearisation of $\Phi$.%
\end{proposition}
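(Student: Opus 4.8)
The plan is to exploit the affine structure of $\Phi$ directly, using the linearisation $\Phil := f\mapsto \Phi(f)-\Phi(\vzero)$ introduced in Definition~\ref{defi:affine-equation}. The key algebraic identity, already recorded in the excerpt, is that for affine $\Phi$ we have $\Phi(f-g)=\Phi(f)-\Phi(g)$, and more usefully $\Phi f - f = \Phil(f-\fsol)-(f-\fsol)$, where $\fsol=\lfp\Phi=\gfp\Phi$ is the (unique) solution. First I would establish the claimed description of $\PostFix(\Phi)$. Writing $f = \fsol + h$ with $h = f-\fsol \in \dd\to\rr$, the postfixpoint condition $\Phi f \le f$ becomes $\Phil(h)-h \le 0$, i.e. $\Phil(h)\le h$, which says exactly that $h\in\PostFix(\Phil)$. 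Hence $\PostFix(\Phi) = \fsol + \PostFix(\Phil) = \lfp\Phi + C$ with $C := \PostFix(\Phil)$, which simultaneously proves the displacement formula and the final sentence identifying $C$ as $\PostFix(\Phil)$.

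Next I would check that $C=\PostFix(\Phil)$ is a linear convex cone. Since $\Phil$ is linear, $\Phil(\vzero)=\vzero$, so $\vzero\in C$. Closure under nonnegative scaling follows from homogeneity: if $\Phil(h)\le h$ and $\alpha\ge 0$, then $\Phil(\alpha h)=\alpha\Phil(h)\le\alpha h$. Closure under addition follows from additivity: if $\Phil(h_1)\le h_1$ and $\Phil(h_2)\le h_2$, then $\Phil(h_1+h_2)=\Phil(h_1)+\Phil(h_2)\le h_1+h_2$. Together these give that $C$ is a convex cone containing the origin, so $\PostFix(\Phi)=\lfp\Phi+C$ is an affine convex cone as claimed. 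The prefixpoint case is entirely dual: $f=\fsol+h$ is a prefixpoint iff $\Phil(h)\ge h$, i.e. $h\in\PreFix(\Phil)$; and $\PreFix(\Phil) = -\PostFix(\Phil) = -C$ because $\Phil(h)\ge h \iff \Phil(-h)\le -h$ by linearity. This yields $\PreFix(\Phi)=\gfp\Phi - C$, and since $\lfp\Phi=\gfp\Phi=\fsol$ under our standing termination assumption, both formulas are consistent.

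The main subtlety — the part I would be most careful about — is the well-definedness of the decomposition $f=\fsol+h$ as an element of the real vector space $\dd\to\rr$, rather than the extended lattice $\dd\to\rri$. This is exactly why the proposition implicitly works over $\rr^\dd$: we need the affine/linear operators to act on a genuine vector space so that subtraction $f-\fsol$ and the identities $\Phi(f-g)=\Phi(f)-\Phi(g)$, $\Phil(\alpha h)=\alpha\Phil(h)$ are available without running into $\infty-\infty$. Under the standing assumption (Section~\ref{subsec:termination}) that $\Phi$ terminates everywhere and restricts to $\rr^\dd\to\rr^\dd$, this is unproblematic, and $\fsol\in\rr^\dd$ is well-defined. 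Everything else is a routine translation between the postfixpoint inequality for $\Phi$ and the one for $\Phil$ via the affine change of variables; no fixpoint machinery beyond the identification $\lfp\Phi=\gfp\Phi=\fsol$ is needed.
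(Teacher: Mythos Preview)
Your proof is correct and follows essentially the same route as the paper: both compute $\Phi f - f = \Phil(f-\fsol)-(f-\fsol)$ to reduce to $\PostFix(\Phil)$, verify that the latter is a linear convex cone by linearity of $\Phil$, and handle $\PreFix$ by the sign-flip $\PreFix(\Phil)=-\PostFix(\Phil)$. The only minor difference is that the paper keeps $\lfp\Phi$ and $\gfp\Phi$ distinct throughout (so the two displacement formulas hold even without assuming a unique fixpoint), whereas you invoke the standing termination assumption to collapse them to $\fsol$; this costs a small amount of generality but is otherwise harmless in the setting of the paper.
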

\begin{proof}
  For simplicity, we only present the case of $\dd\to\rr$, without infinities.
  Recall that for all $h, h'$, $\Phil(h-h') = \Phi(h)-\Phi(h')$,
  so that for any $f:\dd\to\rr$,
  $\Phi f - f = (\Phi f - \Phi(\lfp\Phi)) - (f - \lfp \Phi)
   = \Phil(f - \lfp\Phi) - (f - \lfp\Phi),$
  hence
    $\PostFix(\Phi) = \lfp\Phi + \PostFix(\Phil)$.
  We could actually show that $\PostFix(\Phi) = \Fix(\Phi) + \PostFix(\Phil)$,
  but we choose the former for its
  geometrical interpretation.

  \noindent We now prove that $\PostFix(\Phil)$ is a linear convex cone.
  Let $f,g\in\PostFix(\Phil)$,  $\alpha,\beta\in\rr_+$.
  We need to show that $\alpha f + \beta g \in \PostFix(\Phil)$,
  which follows from linearity:
  $
    \Phil(\alpha f + \beta g) - (\alpha f + \beta g)
    = \alpha\cdot(\Phil f - f) + \beta\cdot(\Phil g - g)
    \leq \vzero.
  $

  Finally, we need to prove that $\PreFix(\Phi) = \gfp \Phi - \PostFix(\Phil)$,
  showing that it is also an affine convex cone, and that its corresponding
  linear cone is the same as that of $\PostFix(\Phi)$ up to a sign reversal.
  This follows directly from the fact that, for all $f$,
  $\Phil(\gfp \Phi - f) - (\gfp \Phi - f)
  = (\Phi (\gfp \Phi) - \Phi f) - (\gfp \Phi - f)
  = f - \Phi f,$
  hence
  $f \in\PreFix(\Phi) \iff \gfp \Phi-f \in \PostFix(\Phil) \iff f \in \gfp \Phi-\PostFix(\Phil)$.
\end{proof}
The next lemma aids interpretation and to later unveil
multiplicative structure. \vspace{-1em}
\begin{lemma}[Cone positivity]
  \label{lem:cone-pos}
  Assume that $\Phi$ is a ``well-defined equation'', i.e. that it
  terminates everywhere unconditionally (c.f.
  Definition~\ref{defi:eq-termination}).
  Then, with the assumptions of
  Proposition~\ref{prop:struct-cone-affine}, all
  $f\in C=\PostFix(\Phil)$ are such that $f\geq \vzero$.
\end{lemma}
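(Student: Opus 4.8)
The plan is to reduce the statement to our proof principle applied to the linearisation $\Phil$, by identifying its least fixpoint. First I would record two easy structural facts about $\Phil := f\mapsto \Phi(f)-\Phi(\vzero)$. Since $\Phi$ is affine, $\Phil$ is linear, so by Definition~\ref{defi:linear-equation} we have $\Phil(\vzero)=\vzero$; in particular $\vzero\in\Fix(\Phil)$. Moreover $\Phil$ is itself monotone: as $\Phi$ is monotone and $\Phi=\Phi(\vzero)+\Phil$ differs from $\Phil$ only by the constant function $\Phi(\vzero)$, the operator $\Phil$ preserves the pointwise order. Hence $\Phil$ is again a monotone equation, to which Theorem~\ref{th:proof-principle} may be applied once we know its least fixpoint.

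The crux is therefore to show that $\lfp\Phil=\vzero$, which I would deduce from the unconditional termination hypothesis on $\Phi$. The key observation is that linearisation only zeroes out the additive, $f$-independent ground contributions of $\Phi$, and leaves its control-flow/dependency structure untouched: the recursive calls $f\circ\phi$ used to compute $\Phil(f)(\vn)$ are exactly those used by $\Phi(f)(\vn)$, on the same subdomains. Consequently the boolean abstraction of Definition~\ref{defi:eq-termination} is unchanged, $\Phil^\flat=\Phi^\flat$, so $\Phil$ terminates everywhere unconditionally as well, and thus admits a single fixpoint $\lfp\Phil=\gfp\Phil$. Since we already exhibited $\vzero$ as a fixpoint, this unique fixpoint must be $\vzero$.

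To conclude, I would apply Theorem~\ref{th:proof-principle} to the monotone equation $\Phil$: every $f\in C=\PostFix(\Phil)$ satisfies $\lfp\Phil\leq f$, i.e. $\vzero\leq f$, which is precisely the claim. I expect the genuine obstacle to lie in the middle step, namely transferring unconditional termination from $\Phi$ to $\Phil$: termination is a property of the syntactic definition, so one must verify that linearisation leaves the call graph intact rather than arguing purely semantically. This is exactly where the hypothesis is used, and it is genuinely necessary — for the non-terminating linear equation $\Phil(f)(n)=2f(n)$ the postfixpoint condition $2f(n)\leq f(n)$ gives $\PostFix(\Phil)=\{f\leq\vzero\}$, so positivity of the cone fails without unconditional termination.
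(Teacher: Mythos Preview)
Your argument is correct and takes a genuinely different route from the paper's one-line sketch, which reads ``By transfinite induction on the height of derivations.'' The paper argues directly: at base nodes $\Phil(f)(\vn)=0$ so $f(\vn)\geq 0$ by postfixpoint; at recursive nodes one uses nonnegativity of the $a_{i,j}$ and the induction hypothesis on the callees. You instead factor through the machinery already built: you observe that $\vzero$ is a fixpoint of $\Phil$, that $\Phil$ inherits unconditional termination from $\Phi$ because linearisation only drops $f$-free summands and leaves the recursive-call skeleton $\Phi^\flat$ unchanged, hence $\lfp\Phil=\gfp\Phil=\vzero$, and then invoke Theorem~\ref{th:proof-principle}. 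This is cleaner and makes explicit that the lemma is really a special case of the proof principle itself; the paper's induction is more elementary and self-contained but hides this structural content. Your closing counter-example $\Phil(f)(n)=2f(n)$ is a nice touch, pinpointing exactly why the termination hypothesis cannot be dropped.
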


\begin{proof}[Sketch]
  By transfinite induction on the height of derivations.
\end{proof}

\noindent We finally give an explicit description of the cones for the
affine equations we work with, by unfolding the
definition of $\PostFix(\Phil)$. It can be interpreted both
externally, as relative growth conditions, and internally via rays
akin to ``quasi-ranking functions''.
\begin{theorem}\label{th:explicit-cone-of-aff}
  Let $\Phi$ be a (monotone) affine equation written as in
  Example~\ref{example:piecewise-aff-sum}, i.e.
  $\Phi : f \mapsto \vn \mapsto
           \sum_{j=1}^{k_i} \big(a_{i,j}(\vn)\cdot f(\phi_{i,j}(\vn))\big) + b_i(\vn)
              \text{ if } \varphi_i(\vn),$
  with $a_{i,j}\geq\vzero$.
  As usual, we write $\dd = \sqcup_i \dd_i$, where $\dd_i = \{\vn\in\dd\,|\,\varphi_i(\vn)\}$.

  Then $\PostFix(\Phi) = \lfp \Phi + C$ and $\PreFix(\Phi) = \gfp \Phi - C$,
  where
  \vspace{-0.5em}
  \begin{equation*}
    C = \big\{g:\dd\to\rri \,\big|\,
          \forall i,\, \forall \vn\in\dd_i,\,
          g(\vn) \geq \sum\nolimits_j a_{i,j}(\vn)\cdot g(\phi_{i,j}(\vn))
        \big\}.
  \vspace{-0.5em}
  \end{equation*}
  In other words,
{\footnotesize
  \begin{align*}
    \PostFix(\Phi) &= \Big\{f\in\rri^\dd \,\Big|\,
          \forall i\,\forall \vn\in\dd_i\;
          (f-\lfp\Phi)(\vn) \geq \sum_j a_{i,j}(\vn)\cdot (f-\lfp\Phi)(\phi_{i,j}(\vn))
        \Big\},\\[-4pt]
    \PreFix(\Phi) &= \Big\{f:\rri^\dd \,\Big|\,
          \forall i\,\forall \vn\in\dd_i\;
          (\gfp\Phi-f)(\vn) \geq \sum_j a_{i,j}(\vn)\cdot (\gfp\Phi-f)(\phi_{i,j}(\vn))
        \Big\}.
  \end{align*}
}
\vspace{-1.5em}
\end{theorem}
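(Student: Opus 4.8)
The plan is to derive the explicit description directly from Proposition~\ref{prop:struct-cone-affine}, which already establishes that $\PostFix(\Phi) = \lfp\Phi + C$ and $\PreFix(\Phi) = \gfp\Phi - C$, with $C = \PostFix(\Phil)$ where $\Phil$ is the linearisation of $\Phi$. Thus the only genuine work is to unfold $\PostFix(\Phil)$ into the claimed concrete form, and then to rewrite the additive/subtractive descriptions as membership conditions on $f$.

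First I would compute the linearisation $\Phil$ explicitly. Since the conditions $\varphi_i$ depend only on $\vn$ (as required for monotonicity, see Example~\ref{example:piecewise-aff-sum} and Remark~\ref{rem:non-monotone-discussion-grammar}), the piecewise decomposition $\dd = \sqcup_i \dd_i$ is independent of $f$. Evaluating $\Phi$ at the zero function gives $\Phi(\vzero)(\vn) = b_i(\vn)$ for $\vn \in \dd_i$, because each term $a_{i,j}(\vn)\cdot\vzero(\phi_{i,j}(\vn))$ vanishes. By definition of the linearisation, $\Phil(f) = \Phi(f) - \Phi(\vzero)$, so for $\vn \in \dd_i$ we obtain $\Phil(f)(\vn) = \sum_j a_{i,j}(\vn)\cdot f(\phi_{i,j}(\vn))$; that is, $\Phil$ is the same piecewise operator with the constant terms $b_i$ removed.

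Next I would unfold $\PostFix(\Phil)$: by definition, $g \in \PostFix(\Phil)$ iff $\Phil(g) \leq g$ pointwise, which—splitting $\dd$ along the pieces $\dd_i$—reads $\forall i,\ \forall \vn \in \dd_i,\ \sum_j a_{i,j}(\vn)\cdot g(\phi_{i,j}(\vn)) \leq g(\vn)$. This is precisely the set $C$ in the statement, so $C = \PostFix(\Phil)$ and Proposition~\ref{prop:struct-cone-affine} immediately yields $\PostFix(\Phi) = \lfp\Phi + C$ and $\PreFix(\Phi) = \gfp\Phi - C$. For the ``in other words'' forms I would substitute: a function $f$ lies in $\lfp\Phi + C$ iff $g := f - \lfp\Phi$ belongs to $C$, and writing the defining inequality of $C$ for this $g$ gives the displayed condition on each $\dd_i$; symmetrically, $f \in \gfp\Phi - C$ iff $\gfp\Phi - f \in C$, giving the second.

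The one point requiring care—and the main obstacle I anticipate—is the passage from $\rr^\dd$ to $\rri^\dd$. Proposition~\ref{prop:struct-cone-affine} is proved only for the finite-valued case, and the identity $\Phil(h - h') = \Phi(h) - \Phi(h')$ underlying it relies on subtraction, which is ill-behaved at $\pm\infty$. Under the standing termination assumption (Definition~\ref{defi:eq-termination}), infinities are unnecessary and $\lfp\Phi = \gfp\Phi = \fsol$ is real-valued, so these differences are well-defined and the argument goes through verbatim; I would state this assumption explicitly and note that the general $\rri$ case reduces to it by restricting to the termination domain. Everything else is a routine unfolding of definitions.
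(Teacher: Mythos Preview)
Your proposal is correct and follows essentially the same approach as the paper: the paper introduces this theorem precisely as ``unfolding the definition of $\PostFix(\Phil)$'' on top of Proposition~\ref{prop:struct-cone-affine}, and your computation of $\Phil$ (dropping the $b_i$), the pointwise unfolding of $\Phil g \leq g$ over the pieces $\dd_i$, and the substitutions $g = f - \lfp\Phi$ and $g = \gfp\Phi - f$ are exactly what is intended. Your remark on the $\rr$ versus $\rri$ issue is also in line with the paper, which explicitly restricts the proof of Proposition~\ref{prop:struct-cone-affine} to the finite-valued case and relies on the standing termination assumption elsewhere.
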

The results above can be extended to the full grammar of
Fig.~\ref{eqs:grammar-1} (with $\min/\max$) using the
following proposition, with the caveat of replacing equalities with
inclusions, hence obtaining sufficient but not necessary conditions
for inductiveness.
\vspace{-1em}
\begin{proposition}\label{prop:cones-of-maxmin}
  If $\Phi = \max(\Phi_1,\Phi_2)$, %
  then $\PostFix(\Phi) = \PostFix(\Phi_1)\cap \PostFix(\Phi_2)$,
  and $\PreFix(\Phi) \supseteq \PreFix(\Phi_1)\cap \PreFix(\Phi_2)$.
  Similarly, if $\Phi = \min(\Phi_1,\Phi_2)$, then
  $\PostFix(\Phi) \supseteq \PostFix(\Phi_1)\cap \PostFix(\Phi_2)$,
  and $\PreFix(\Phi) = \PreFix(\Phi_1)\cap \PreFix(\Phi_2)$.

  As the intersection of two affine convex cones is an affine convex cone, this
  implies that for all equations $\Phi$ in the family defined in
  Fig.~\ref{eqs:grammar-1}, their sets $\PostFix(\Phi)$ and $\PreFix(\Phi)$ also
  contain affine convex cones.
  Moreover, $\PostFix(\Phi)$ and $\PreFix(\Phi)$ are stable by addition from the
  corresponding \emph{linear} convex cones.
\end{proposition}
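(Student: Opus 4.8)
The plan is to separate the four $\min$/$\max$ relations from the structural consequence, proving the former pointwise and deriving the latter by induction on the grammar of Fig.~\ref{eqs:grammar-1}. For the relations I would argue at each point $\vn$, using that for the order on $\rri$ one has $\max(u,v)\le w \iff (u\le w \wedge v\le w)$, whereas $w\le\max(u,v)$ is only \emph{implied by} $w\le u\wedge w\le v$; the dual statements hold for $\min$. Instantiating with $u=\Phi_1(f)(\vn)$, $v=\Phi_2(f)(\vn)$, $w=f(\vn)$ turns the first equivalence into the equality $\PostFix(\max(\Phi_1,\Phi_2))=\PostFix(\Phi_1)\cap\PostFix(\Phi_2)$, and the one-sided implication into $\PreFix(\max(\Phi_1,\Phi_2))\supseteq\PreFix(\Phi_1)\cap\PreFix(\Phi_2)$; the two relations for $\min$ follow dually. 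The inclusion cannot be tightened to equality precisely because $w\le\max(u,v)$ may hold through $u$ on part of $\dd$ and through $v$ elsewhere, with neither $\Phi_i$ dominating $f$ globally.

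For the structural claim I would induct on Fig.~\ref{eqs:grammar-1}, carrying the invariant that each $\Phi$ admits a \emph{linear} convex cone $C_\Phi$ with the translation property $(\star)$: $\Phi(f+c)\le\Phi(f)+c$ for all $f$ and all $c\in C_\Phi$. This is the engine behind ``stability by addition'': if $\Phi f\le f$ and $c\in C_\Phi$ then $\Phi(f+c)\le\Phi f+c\le f+c$, so $\PostFix(\Phi)+C_\Phi\subseteq\PostFix(\Phi)$, and symmetrically $\PreFix(\Phi)-C_\Phi\subseteq\PreFix(\Phi)$. Since a monotone equation always has a fixpoint (Theorem~\ref{th:knaster-tarski}), re-anchoring gives the advertised affine convex cones $\lfp\Phi+C_\Phi\subseteq\PostFix(\Phi)$ and $\gfp\Phi-C_\Phi\subseteq\PreFix(\Phi)$. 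The affine base is free: Proposition~\ref{prop:struct-cone-affine} supplies $(\star)$ with the maximal cone $C_\Phi=\PostFix(\Phil)$. For $(\MaxOp)$, $(\MinOp)$ and $(\ChoiceRestrOp)$, $(\star)$ survives verbatim with $C_\Phi=C_{\Phi_1}\cap C_{\Phi_2}$, since $\max$, $\min$ and subdomain gluing commute with the pointwise translation by $c$; for $(\ComposeOp)$ one chains $(\star)$ of the inner operator through the monotone outer operator and then $(\star)$ of the outer, again landing on $C_{\Phi_1}\cap C_{\Phi_2}$.

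The hard part is that $(\star)$ is \emph{not} preserved by the additive constructors: $(\SumOp)$ only yields $\Phi(f+c)\le\Phi f+2c$, and $(\ScaleOp)$ inflates the increment to $a\cdot c$. I would combine two remedies. First, $(\SumOp)$ and $(\ScaleOp)$ distribute over $\max$ and $\min$ (the latter because $a\ge\vzero$), so one can push them below every $\min$/$\max$ they directly dominate; after this normalisation the additive constructors only recombine affine material, handled monolithically by Proposition~\ref{prop:struct-cone-affine} (equivalently Theorem~\ref{th:explicit-cone-of-aff}) rather than by $(\star)$. Second, for the residual case where a $\min$/$\max$ is trapped beneath a composition inside a sum, I would refine $(\star)$ into a budgeted form $(\star_\lambda)$, $\Phi(f+c)\le\Phi(f)+\lambda c$, and split the budget across summands (using $C^{\lambda/2}_{\Phi_1}\cap C^{\lambda/2}_{\Phi_2}$ for a binary sum); the budgeted cones arise for affine pieces as $\{c : \Phil(c)\le\lambda c\}$ and stay nonempty because, by Lemma~\ref{lem:cone-pos}, they contain arbitrarily fast-growing nonnegative directions, and the budget bottoms out at $\lambda=2^{-k}$ on the finitely many affine leaves.

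Finally I would flag one conceptual subtlety to avoid overclaiming: the intersection of two affine convex cones need not itself be an affine convex cone (it can be an unbounded ``slab''), so the phrase ``intersection of affine cones is an affine cone'' must be read through the argument above — we never intersect the \emph{affine} cones directly, but intersect the \emph{linear} stability cones $C_{\Phi_i}$ and re-anchor at a fixpoint. With this reading the induction delivers, for every $\Phi$ in the grammar, a genuine affine convex cone inside each of $\PostFix(\Phi)$ and $\PreFix(\Phi)$, together with stability under addition from $C_\Phi$ (resp. $-C_\Phi$). I expect the budget bookkeeping for $(\SumOp)$ and $(\ScaleOp)$ under composition to be the only genuinely delicate step; the remaining cases are a routine structural induction.
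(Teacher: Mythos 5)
Your pointwise treatment of the four $\min$/$\max$ relations is exactly the intended argument (the paper gives none explicitly), so that part matches. For the structural consequence, however, you take a genuinely different and more careful route than the paper, which justifies it in one line by asserting that ``the intersection of two affine convex cones is an affine convex cone.'' You are right to flag that this assertion is false as a general fact -- the intersection of two translates of linear convex cones can be a slab that is not itself a translate of any convex cone -- and your repair is the correct one: intersect the \emph{linear} stability cones $C_{\Phi_1}\cap C_{\Phi_2}$ (which is again a linear convex cone), observe that $\lfp\Phi$ is a common postfixpoint of $\Phi_1$ and $\Phi_2$ so that $\lfp\Phi+(C_{\Phi_1}\cap C_{\Phi_2})\subseteq\PostFix(\Phi)$, and carry the translation invariant $\Phi(f+c)\leq\Phi(f)+c$, $\Phi(f-c)\geq\Phi(f)-c$ through the grammar. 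This buys a proof of both conclusions (containment of an affine convex cone \emph{and} stability by addition/subtraction from a linear cone) where the paper's justification only gestures at the first. Two remarks on the heavier part of your argument: the budgeted cones $\{c:\Phil(c)\leq\lambda c\}$ are a workable fallback, but their nonemptiness follows from the $(ka)^{\rkfun}$-type construction given later in the section (take $(ka/\lambda)^{\rkfun}$) rather than from Lemma~\ref{lem:cone-pos}, which only gives positivity; and since (\SumOp), (\ScaleOp), (\ChoiceRestrOp) and (\ComposeOp) with affine monotone outer operators all distribute over $\min$/$\max$, every element of $\EqsFull$ normalises to a $\min$/$\max$ lattice term over affine leaves of shape~$(\ref{eq:piecewise-aff-sum})$, so the intersection-of-leaf-cones argument already covers the whole grammar and the budget bookkeeping is not actually needed. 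With those qualifications your proof is correct and strictly more informative than the paper's.
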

\begin{remark}
  An intuitive way to visualise sets stable by additions from cones is
  as ``icebergs'', with sharp points at minimal elements for the induced cone preorder.
\end{remark}

\noindent Under mild assumptions, these cones have additional
algebraic structure.
\begin{proposition}[Stability by multiplication]\label{prop:cone-multiplication}
  Consider an affine equation given explicitly with the same
  notation as above. Assume that it terminates everywhere so that
  Lemma~\ref{lem:cone-pos} can be applied, and that all $a_{i,j}$
  are such that $\forall\vn,a_{i,j}(\vn)^2\geq a_{i,j}(\vn)$, i.e.
  $a_{i,j}(\vn)=0$ or $a_{i,j}(\vn)\geq 1$.

  Then, the linear cone $C=\PostFix(\Phil)$, appearing in the affine convex
  cones $\PostFix(\Phi)=\lfp\Phi+C$ and $\PreFix(\Phi)=\gfp\Phi-C$, is stable by
  (pointwise) multiplication,
  i.e. $\forall f,g\in C, (\vn\mapsto f(\vn)\cdot g(\vn))\in C.$
\end{proposition}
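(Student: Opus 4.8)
The plan is to argue directly from the explicit description of the linear cone $C=\PostFix(\Phil)$ furnished by Theorem~\ref{th:explicit-cone-of-aff}: a function $g:\dd\to\rri$ lies in $C$ precisely when $g(\vn)\geq\sum_j a_{i,j}(\vn)\,g(\phi_{i,j}(\vn))$ for every branch $i$ and every $\vn\in\dd_i$. Given $f,g\in C$, membership of the pointwise product $fg$ in $C$ then amounts to checking this very inequality for $fg$ at each $\vn\in\dd_i$, so the whole proof reduces to a single pointwise estimate.

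First I would record the two ingredients that make the estimate go through. Because $\Phi$ terminates everywhere, Lemma~\ref{lem:cone-pos} applies and every element of $C$ is nonnegative, so $f\geq\vzero$ and $g\geq\vzero$; combined with $a_{i,j}\geq\vzero$, all quantities below are nonnegative and finite. Fixing $i$ and $\vn\in\dd_i$, I would abbreviate $a_j:=a_{i,j}(\vn)$, $f_j:=f(\phi_{i,j}(\vn))$, $g_j:=g(\phi_{i,j}(\vn))$, so that the hypotheses $f,g\in C$ read $f(\vn)\geq\sum_j a_j f_j\geq 0$ and $g(\vn)\geq\sum_j a_j g_j\geq 0$.

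The core is then a short chain of inequalities. Since both right-hand sides are nonnegative, multiplying the two bounds factorwise gives $f(\vn)g(\vn)\geq\big(\sum_j a_j f_j\big)\big(\sum_k a_k g_k\big)=\sum_{j,k} a_j a_k f_j g_k$. I would split this double sum into its diagonal part $\sum_j a_j^2 f_j g_j$ and its off-diagonal part $\sum_{j\neq k} a_j a_k f_j g_k$. The off-diagonal part is nonnegative, while on the diagonal the hypothesis $a_j^2\geq a_j$ together with $f_j g_j\geq 0$ yields $\sum_j a_j^2 f_j g_j\geq\sum_j a_j f_j g_j$. Chaining, $f(\vn)g(\vn)\geq\sum_j a_j^2 f_j g_j\geq\sum_j a_j f_j g_j=\sum_j a_{i,j}(\vn)\,(fg)(\phi_{i,j}(\vn))$, which is exactly the defining inequality of $C$ for $fg$ at $\vn$; as $i$ and $\vn$ were arbitrary, $fg\in C$.

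The only delicate point — and the reason the hypotheses are phrased as they are — is that the argument is entirely driven by nonnegativity: the factorwise multiplication of the two cone inequalities is legitimate only because both sides are nonnegative, which is precisely where Lemma~\ref{lem:cone-pos}, and hence the termination assumption, is indispensable; the diagonal step is exactly where the condition $a_{i,j}^2\geq a_{i,j}$ is consumed; and the cross terms discarded off the diagonal are harmless solely because they are nonnegative. I expect no further obstacle. Note that the statement concerns only the \emph{linear} cone $C$, so no claim is made about the affine cones $\PostFix(\Phi)=\lfp\Phi+C$ and $\PreFix(\Phi)=\gfp\Phi-C$, whose minimal elements $\lfp\Phi$, $\gfp\Phi$ need not interact multiplicatively.
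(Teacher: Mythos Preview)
The paper states this proposition without proof, so there is no argument to compare against. Your proof is correct and is the natural direct verification: use the explicit description of $C$ from Theorem~\ref{th:explicit-cone-of-aff}, invoke Lemma~\ref{lem:cone-pos} for nonnegativity, multiply the two cone inequalities factorwise, and then discard the nonnegative off-diagonal cross terms while applying $a_j^2\geq a_j$ on the diagonal. Each hypothesis is used exactly where you say it is, and nothing is missing.
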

This makes it possible to build fast-growing rays (e.g. polynomials)
from simpler elements, e.g. as obtained below with ranking functions.

This is useful to have in mind when considering a final intuition
related to the cone structure presented in this section: ``if we
degrade the precision of a candidate long enough, we will eventually
reach an inductive bound''.
Indeed, intuitively, the presence of such cones guarantee that
inductive bounds are ``reasonably common'' in the space of all bounds,
and not only isolated points.\footnote{This is not the case anymore,
e.g., for the \texttt{\nestedname} benchmark of
Table~\ref{table:benchmarks}.}
This suggests that inductive upper bounds could be searched simply by
``going up and eventually encountering the cone''.
Three issues have to be considered to make this idea concrete:
the codimension of the cone to stay in an appropriate subspace,
the use of \emph{interior rays} to get closer to the cone, and the
necessity to produce fast-growing rays to eventually reach it (indeed,
unlike in finite dimensions $\forall\vn\,\exists N,\, N\cdot f(\vn) \geq
g(\vn) \not\Rightarrow \exists N, N\cdot f \geq g$).

\subsubsection{(Quasi)-ranking functions for rays.}

We now observe that elements of $C$ can be constructed from ranking
functions for the equation, either directly or through an exponential
transformation.
In particular, we are comforted that $C$ is not reduced to $\vzero$:
inductive bounds other than the exact solution exist. Additionally,
this observation is useful to explore candidate repair, making it
possible to leverage prior work on ranking function inference.

\begin{definition}[Ranking function for an equation]
  \label{defi:rkfun}
  A ranking function for a monotone equation
  $\Phi:(\dd\to\rr)\to(\dd\to\rr)$ is simply a function
  $\rkfun:\dd\to\nn$ (or more generally to a well-ordered set) such
  that $\rkfun$ decreases strictly along recursive calls, i.e. such
  that for each $\vx\in\dd$, the function $f\mapsto\Phi(f)(\vx)$
  depends only on the values of $f(\vy)$ for
  $\rkfun(\vy)<\rkfun(\vx)$.
\end{definition}
Since we focus on $\Phi$ that terminate, we will consider that such
functions exist.
For linear-recursive $\Phi$, they directly provide elements and hence
rays of $C$.
\begin{definition}
  We say that an affine equation is \emph{linear-recursive} when,
  given explicitly with the same notation as in
  Theorem~\ref{th:explicit-cone-of-aff}, there is a single recursive
  call in each subdomain ($\forall i, k_i=1$) with $a_{i,1}=1$.
\end{definition}
\begin{proposition}\label{prop:rkfun-are-rays}
   For all affine linear-recursive $\Phi$, ranking functions
   $\rkfun:\dd\to\nn$ belong to the linear cone $\PostFix(\Phil)$.
\end{proposition}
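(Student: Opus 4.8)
The plan is to unfold both the linearisation $\Phil$ and the postfixpoint condition, and then to observe that the latter is exactly the non-strict relaxation of the defining property of a ranking function. First I would compute $\Phil$ explicitly: for a linear-recursive affine equation each subdomain $\dd_i$ carries a single recursive call with unit coefficient, so $\Phi(f)(\vn)=f(\phi_{i,1}(\vn))+b_i(\vn)$ for $\vn\in\dd_i$. Since $\Phil=f\mapsto\Phi(f)-\Phi(\vzero)$ and $\Phi(\vzero)(\vn)=b_i(\vn)$, the additive terms cancel, leaving $\Phil(f)(\vn)=f(\phi_{i,1}(\vn))$; that is, $\Phil$ is just the precomposition operator evaluating $f$ along the recursive calls.

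Next I would invoke Theorem~\ref{th:explicit-cone-of-aff} (equivalently, unfold $\PostFix(\Phil)=\{g\mid\Phil g\leq g\}$ directly) to obtain the characterisation that $g\in C=\PostFix(\Phil)$ if and only if $g(\phi_{i,1}(\vn))\leq g(\vn)$ for every $i$ and every $\vn\in\dd_i$ carrying a recursive call. This is precisely the statement that $g$ is nonincreasing along recursive calls.

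Finally I would verify that any ranking function lies in $C$: by Definition~\ref{defi:rkfun}, $\rkfun$ decreases \emph{strictly} along recursive calls, i.e. $\rkfun(\phi_{i,1}(\vn))<\rkfun(\vn)$ whenever $\Phi(f)(\vn)$ depends on $f(\phi_{i,1}(\vn))$; a fortiori $\rkfun(\phi_{i,1}(\vn))\leq\rkfun(\vn)$, which is exactly the required postfixpoint inequality, so $\rkfun\in\PostFix(\Phil)=C$. The only point to handle with care, and thus the main (minor) obstacle, is the treatment of base cases relative to the definition of linear-recursive (which stipulates $k_i=1$ with $a_{i,1}=1$): on a subdomain where the recursive term is genuinely absent the cone condition reduces to $g(\vn)\geq\vzero$, which holds because $\rkfun$ takes values in $\nn\subseteq\rr_+$; equivalently, on such a subdomain $\Phi(f)(\vn)$ does not depend on $f$ and the ranking-function definition imposes no recursive call to track. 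Either reading makes the argument go through, the conceptual content being simply that the strict-to-nonstrict passage is what renders ranking functions automatically inductive.
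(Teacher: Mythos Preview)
Your proposal is correct and follows essentially the same approach as the paper: both invoke the explicit description of the cone $C$ from Theorem~\ref{th:explicit-cone-of-aff}, specialise it to the linear-recursive case (single call, unit coefficient), and observe that the strict decrease of a ranking function along recursive calls trivially implies the required non-strict inequality, with non-negativity handling base cases. Your treatment is more detailed---in particular, you spell out the computation of $\Phil$ and flag the slight ambiguity in whether the definition of linear-recursive forces $k_i=1$ on non-recursive subdomains---but the argument is the same.
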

\begin{proof}
  Simply observe the expression of $C$ in
  Theorem~\ref{th:explicit-cone-of-aff} for linear-recursive $\Phi$.
  The ranking function definition corresponds to $\rkfun(\vn)\geq
  0$ in base cases and $\rkfun(\vn)\geq r(\phi_i(\vn))+1$ in recursive
  cases, which implies $\rkfun(\vn)\geq \rkfun(\phi_i(\vn))$.
\end{proof}
For affine $\Phi$, ranking functions can be simply
described as postfixpoints of a modified equation.
\begin{proposition}\label{prop:rkfun-are-postfixp}
  Consider a monotone affine equation $\Phi$ given explicitly as in
  Theorem~\ref{th:explicit-cone-of-aff}, and assume that $\Phi$ is
  terminating. Then, a function
  $\rkfun:\dd\to\rr$ is a ranking function for $\Phi$ if and only if
  it verifies the following inequation.
  {\footnotesize\begin{equation*}
    \forall\vn\in\dd,\,\rkfun(\vn) \geq \begin{cases}
           1 + \max_{1\leq j\leq k_i} r(\phi_{i,j}(\vn))
            &\text{if }\vn\in\dd_i\text{ with $\dd_i$ recursive}\\
           0
            &\text{if }\vn\in\dd_i\text{ with $\dd_i$ non-recursive}\\
         \end{cases}
  \end{equation*}}
\end{proposition}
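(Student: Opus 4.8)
The plan is to unfold Definition~\ref{defi:rkfun} against the explicit shape of $\Phi$ recorded in Theorem~\ref{th:explicit-cone-of-aff}, and then verify the two implications separately. Recall that for $\vn\in\dd_i$ the value $\Phi(f)(\vn)=\sum_j a_{i,j}(\vn)f(\phi_{i,j}(\vn))+b_i(\vn)$ depends on $f$ only through the values $f(\phi_{i,j}(\vn))$, $1\leq j\leq k_i$; a subdomain $\dd_i$ is \emph{recursive} exactly when such a dependence is genuinely present, and \emph{non-recursive} (a base case) when $\Phi(f)(\vn)=b_i(\vn)$ is constant in $f$. Consequently the clause ``$f\mapsto\Phi(f)(\vn)$ depends only on values $f(\vy)$ with $\rkfun(\vy)<\rkfun(\vn)$'' reduces, on recursive $\dd_i$, to the pointwise conditions $\rkfun(\phi_{i,j}(\vn))<\rkfun(\vn)$ for every $j$, and is vacuous on base cases. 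Matching this reformulation against the claimed inequation is the whole content of the proof.

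First I would prove the forward implication, which is the routine one. Assume $\rkfun$ is a ranking function. On a recursive $\dd_i$ the strict-decrease clause gives $\rkfun(\phi_{i,j}(\vn))<\rkfun(\vn)$ for each $j$; since a ranking function is valued in $\nn$ (where a strict decrease is a decrease by at least one unit), this is $\rkfun(\vn)\geq 1+\rkfun(\phi_{i,j}(\vn))$, and taking the maximum over $j$ yields the recursive line. On base cases, $\rkfun(\vn)\geq 0$ is immediate from non-negativity of the codomain. For the converse, assume $\rkfun:\dd\to\rr$ satisfies the inequation. The recursive line gives $\rkfun(\phi_{i,j}(\vn))\leq\rkfun(\vn)-1<\rkfun(\vn)$ along every recursive call, which is precisely the dependence condition of Definition~\ref{defi:rkfun}; it remains to certify well-foundedness. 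Here I would invoke the standing termination hypothesis and the abstraction $\Phi^\flat$ of Section~\ref{subsec:termination}: since $\Phi$ terminates everywhere unconditionally (Definition~\ref{defi:eq-termination}), the derivation height is a well-founded rank on $\dd$, and a transfinite induction on it shows $\rkfun\geq \vzero$ everywhere (indeed $\rkfun(\vn)$ dominates the height), the base-case floor $\rkfun(\vn)\geq 0$ propagating upward through the ``$+1$'' at each recursive step.

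The converse is where the care is needed, and I expect the mismatch between the real codomain stated in the proposition and the well-ordered codomain demanded by Definition~\ref{defi:rkfun} to be the main obstacle. The resolution I would use is that the ``$+1$'' is exactly the real-valued surrogate for strict decrease in $\nn$: because $\rkfun$ drops by at least one unit along each call and is bounded below by $0$, every descending chain is finite, and composing with the floor map gives a genuine $\nn$-valued function with $\lfloor\rkfun(\phi_{i,j}(\vn))\rfloor<\lfloor\rkfun(\vn)\rfloor$, i.e.\ a ranking function in the literal sense (alternatively, one restricts to the well-ordered set of values attained along descents). Two minor points I would tidy up to keep the argument honest: first, only the indices $j$ with $a_{i,j}(\vn)\neq0$ carry a real dependence, so the $\max$ over all $1\leq j\leq k_i$ is a safe upper bound that coincides with the exact condition under the convention that every listed call is genuine; and second, this ``decrease-by-one with a non-negative floor'' is the same positivity phenomenon exploited by Lemma~\ref{lem:cone-pos}, which is what guarantees that finitely many descent steps cannot accumulate.
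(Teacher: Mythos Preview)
Your argument is correct and follows the same route the paper takes, namely unfolding Definition~\ref{defi:rkfun} against the explicit shape of $\Phi$; the paper in fact gives no standalone proof of this proposition, treating it as essentially definitional (the one-line justification appears inside the proof of Proposition~\ref{prop:rkfun-are-rays}: ``The ranking function definition corresponds to $\rkfun(\vn)\geq 0$ in base cases and $\rkfun(\vn)\geq r(\phi_i(\vn))+1$ in recursive cases''). Your extra care in the converse direction---using the floor to resolve the $\rr$-versus-$\nn$ codomain mismatch and invoking the termination hypothesis to propagate non-negativity from base cases upward---is not spelled out in the paper but is a correct and welcome clarification of a point the paper leaves implicit.
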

Not all elements of $C$ are ranking functions.
We can describe more of them with the following generalisation, where
we allow real codomain and non-strict decrease. Such quasi-ranking
functions do not provide termination guarantees.
\begin{definition}[Quasi-ranking function]
  \label{defi:quasirkfun}
  For a monotone $\Phi:(\dd\to\rr)\to(\dd\to\rr)$, we call
  \emph{quasi-ranking function} any (non-negative) function
  $\rkfun:\dd\to\rr_+$ such that $\rkfun$ decreases along recursive
  calls, i.e. so that for each $\vx\in\dd$, the function
  $f\mapsto\Phi(f)(\vx)$ depends only on the values of $f(\vy)$ for
  $\rkfun(\vy)\leq\rkfun(\vx)$.
\end{definition}
\begin{proposition}
  For all affine linear-recursive $\Phi$, quasi-ranking functions
  $\rkfun:\dd\to\rr_+$ belong to the linear cone $\PostFix(\Phil)$.
\end{proposition}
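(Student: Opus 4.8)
The plan is to mirror the proof of Proposition~\ref{prop:rkfun-are-rays}, now carrying a non-strict decrease condition and exploiting that the codomain is $\rr_+$. Since $\Phi$ is affine and linear-recursive, I would first write it explicitly as in Theorem~\ref{th:explicit-cone-of-aff}, so that on each recursive subdomain $\dd_i$ it acts as $\Phi(f)(\vn) = f(\phi_{i,1}(\vn)) + b_i(\vn)$ (single recursive call, $a_{i,1}=1$), while on non-recursive (base) subdomains it reduces to $\Phi(f)(\vn)=b_i(\vn)$. Specialising the explicit description of $C = \PostFix(\Phil)$ provided by Theorem~\ref{th:explicit-cone-of-aff} to this case, membership $g \in C$ amounts to exactly two families of conditions: $g(\vn) \geq g(\phi_{i,1}(\vn))$ on recursive subdomains, and $g(\vn) \geq 0$ on base cases (the latter coming from the empty sum).

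Next I would translate the quasi-ranking condition into precisely these inequalities for $g = \rkfun$. On a recursive subdomain $\dd_i$ the value $\Phi(f)(\vn)$ depends only on $f(\phi_{i,1}(\vn))$, so Definition~\ref{defi:quasirkfun} forces $\rkfun(\phi_{i,1}(\vn)) \leq \rkfun(\vn)$, which is verbatim the recursive cone inequality. For the base-case inequality $\rkfun(\vn) \geq 0$, I would simply invoke that $\rkfun$ takes values in $\rr_+$ by hypothesis, so non-negativity holds everywhere, in particular on base cases. Combining the two families of inequalities yields $\rkfun \in C = \PostFix(\Phil)$, as claimed.

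There is no substantial obstacle here beyond bookkeeping; the only point requiring a little care is the treatment of base cases, where the quasi-ranking condition imposes nothing (no recursive dependence) and so the cone membership must instead be supplied by the $\rr_+$-valuedness of $\rkfun$. This is exactly the mechanism that, in Proposition~\ref{prop:rkfun-are-rays}, was furnished by the $\nn$-valuedness of a genuine ranking function; the passage to quasi-ranking functions only relaxes strict decrease to non-strict decrease, which is harmless precisely because the defining cone inequalities are themselves non-strict.
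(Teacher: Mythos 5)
Your proof is correct and follows essentially the same route as the paper, which establishes this result exactly as it does Proposition~\ref{prop:rkfun-are-rays}: specialise the explicit cone description of Theorem~\ref{th:explicit-cone-of-aff} to the linear-recursive case, obtain $\rkfun(\vn)\geq\rkfun(\phi_{i,1}(\vn))$ on recursive subdomains from the (non-strict) decrease condition, and supply $\rkfun(\vn)\geq 0$ on base cases from the $\rr_+$ codomain. Your closing remark correctly identifies the only point of care, namely that non-negativity replaces the $\nn$-valuedness used in the ranking-function case.
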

\begin{remark}
  Constants are trivial quasi-ranking functions. Hence, for
  linear-recursive $\Phi$, inductive upper and lower bounds
  can always be degraded by a constant.
\end{remark}
This is not true anymore for non-linear-recursive equations, as seen
with (\texttt{\exptwoname}). However, we can still build elements in
the linear cone $C=\PostFix(\Phil)$ of general affine equations by
using their ranking functions.
\begin{proposition}
  Let $\Phi$ be a terminating affine equation, explicitly described
  as in Proposition~\ref{prop:rkfun-are-postfixp} and
  Equation~$(\ref{eq:piecewise-aff-sum})$.
  If all $a_{i,j}$ are bounded above by a constant $a$, and the $k_i$
  are bounded above by a constant $k$, then for all ranking functions
  $\rkfun$ of $\Phi$, the function $g:\vn\mapsto (ka)^{\rkfun(\vn)}$
  belongs to the linear cone $\PostFix(\Phil)$.
\end{proposition}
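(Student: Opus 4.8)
The plan is to verify directly that $g:\vn\mapsto(ka)^{\rkfun(\vn)}$ lies in the cone $C=\PostFix(\Phil)$ by invoking the explicit description of $C$ obtained in Theorem~\ref{th:explicit-cone-of-aff}. That description reduces membership in $C$ to a family of pointwise inequalities indexed by the subdomains: $g\in C$ if and only if for every $i$ and every $\vn\in\dd_i$ one has $g(\vn)\geq\sum_{j=1}^{k_i}a_{i,j}(\vn)\cdot g(\phi_{i,j}(\vn))$. So the whole proof amounts to checking this single inequality for our candidate $g$, using what a ranking function tells us about the exponents $\rkfun(\phi_{i,j}(\vn))$.

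First I would dispose of the base cases: if $\dd_i$ is non-recursive then (in the notation of Equation~(\ref{eq:piecewise-aff-sum})) there are no recursive calls, the right-hand sum is empty, and the required inequality is just $g(\vn)\geq 0$, which holds since $ka>0$ makes $(ka)^{\rkfun(\vn)}$ strictly positive. For a recursive subdomain $\dd_i$ and $\vn\in\dd_i$, the ranking-function characterisation of Proposition~\ref{prop:rkfun-are-postfixp} gives $\rkfun(\vn)\geq 1+\max_{1\leq j\leq k_i}\rkfun(\phi_{i,j}(\vn))$, i.e. $\rkfun(\phi_{i,j}(\vn))\leq\rkfun(\vn)-1$ for every $j$. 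Since the map $t\mapsto(ka)^t$ is nondecreasing (here I use $ka\geq 1$; see below), each term is controlled by $a_{i,j}(\vn)\cdot(ka)^{\rkfun(\phi_{i,j}(\vn))}\leq a\cdot(ka)^{\rkfun(\vn)-1}$, using $a_{i,j}(\vn)\leq a$. Summing over the at most $k_i\leq k$ recursive calls then collapses the bound:
\[
\sum_{j=1}^{k_i}a_{i,j}(\vn)\cdot g(\phi_{i,j}(\vn))
\;\leq\; k_i\,a\,(ka)^{\rkfun(\vn)-1}
\;\leq\; (ka)\,(ka)^{\rkfun(\vn)-1}
\;=\;(ka)^{\rkfun(\vn)}=g(\vn),
\]
which is exactly the defining inequality of $C$.

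The only genuinely delicate point, and the one I would flag explicitly, is the monotonicity step: the exponential base must satisfy $ka\geq 1$ for $t\mapsto(ka)^t$ to be nondecreasing, so that shrinking the exponent along recursive calls shrinks the term. This is harmless, since $a$ is merely required to be an upper bound on the $a_{i,j}$ and may be replaced by $\max(a,1)$ without loss of generality; together with $k\geq 1$ (a recursive subdomain has at least one call) this guarantees $ka\geq 1$. Had the base been $<1$ the inequality would reverse and $g$ would generally fail to be a postfixpoint, so this assumption is not merely technical. With it in place the pointwise inequality holds on every subdomain, hence $g\in\PostFix(\Phil)=C$ by Theorem~\ref{th:explicit-cone-of-aff}.
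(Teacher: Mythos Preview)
Your proof is correct and follows essentially the same route as the paper: both invoke the explicit description of the cone from Theorem~\ref{th:explicit-cone-of-aff} and verify the pointwise inequality on each recursive subdomain via the chain $(ka)^{\rkfun(\vn)}\geq ka\cdot\max_j(ka)^{\rkfun(\phi_{i,j}(\vn))}\geq\sum_j a_{i,j}(\vn)\,g(\phi_{i,j}(\vn))$. You are in fact more careful than the paper in two respects: you treat the non-recursive base case explicitly, and you flag the implicit assumption $ka\geq 1$ needed for monotonicity of the exponential, which the paper's proof uses silently.
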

\begin{proof}
  This comes directly by observing that for $\vn$ in a recursive
  subdomain $\dd_i$,
  $g(\vn) = (ka)^{\rkfun(\vn)}
   \geq ka\cdot\max_{1\leq j\leq k_i} (ka)^{\rkfun(\phi_{i,j}(\vn))}
   \geq \sum_{1\leq j\leq k_i}a_{i,j}(\vn)\cdot g(\phi_{i,j}(\vn))$.
\end{proof}
By being a bit more careful, we could generalise this to more kinds of
equations (e.g. unbounded $a$, $\max/\min$), and obtain $g$ that are
more precise by avoiding the use of the exponential $a^{\-}$ in subdomains
with a single recursive call.

\subsubsection{Examples.}

As mentioned before, the results above provide inductivity conditions
that can be interpreted as relative growth conditions \emph{in the
direction of recursive calls}, which may be visualised in light of
the discussion of Section~\ref{subsec:monotonicity-discussion}.
Space considerations lead us to only present simple illustrations in
this section, leaving a unified description for a longer version of the
paper (it can be phrased through operator algebra, discussing
attractivity of modified discrete derivatives).
\mhnote{Same comment as before re longer version}
Additional examples, e.g. for equations with $\min/\max$, can easily be
created.

For linear recursive equations, relative growth is identical to
absolute growth, and the only relative condition is that inductive
bounds must be bounds.
\begin{example}
  Consider the equation given by $f(n)=f(n-1)+n$ if $n>0$ and
  $f(0)=3$.
  The results above tell us that an upper bound $g\geq\fsol$ is
  checkable with our proof principle whenever $\forall n>0,\, g(n)
  \geq g(n-1)$, i.e. whenever $g$ is an upper bound and has increasing
  discrete derivative.
\end{example}

\begin{example}
  Consider equation \texttt{\sumoscname} in
  Table~\ref{table:benchmarks}.
  An upper bound $g\geq\fsol$ is inductive whenever
  $\forall x,y>0,g(x,y) \geq g(x-1,y)$ and
  $\forall y>0,g(0,y)\geq g(1,y)$,
  i.e. whenever it has increasing derivative ``in the direction of
  recursive calls''.
\end{example}

\noindent For non-linear recursion, simple growth is insufficient: it
must be exponential.

\begin{example}
  \label{ex:exp2-explanation}
  Consider once again the \texttt{\exptwoname} equation,
  $f(n)=2f(n-1)+1$ if $n>0$ and $f(0)=3$, with solution
  $\fsol(n)=4\cdot 2^n-1$, which gives an example of
  non-inductive upper bound $\candf(n)=4\cdot2^n$.
  From the results above,
  an upper bound $g\geq\fsol$, is inductive whenever
  $\forall n>0,\, (g(n)-\fsol(n))\geq 2(g(n-1)-\fsol(n-1)).$

  This explains the non-checkability of $\candf$: while the candidate
  itself grows exponentially, its difference with $\fsol$ is constant,
  which is insufficient to permit an inductive proof.
  However, this also suggests ways to ``repair'' the candidate: if we
  replace $\candf$ by $\candf\circ(n\mapsto 1.2n)=n\mapsto
  4\cdot2^{1.2n}$, we get a new candidate that grows fast enough, and
  is an inductive bound.
  We come back to this idea in Section~\ref{sec:repair}.
\end{example}

\subsection{Extensions of the proof principle}
\label{subsec:proof-extended}

The space of checkable candidates can be expanded beyond the cones
mentioned above.

\textbf{Hybrid verification.}
Inductive proofs may be safely mixed with proofs obtained by other
means, in the following sense. For example, if inductivity fails only
on a finite subdomain, but the candidate still seems to be a correct bound,
it is safe to simply check it by concrete execution on this subdomain.

\begin{proposition}[Hybrid verification]
  \label{prop:hybrid-verif}
  Let $\dd = \dd_1 \sqcup \dd_2$.
  If $\fsol \leq f$ in $\dd_1$,
  and $\Phi f \leq f$ in $\dd_2$,
  then $\fsol \leq f$ everywhere.
  The dual is true for lower bounds.
\end{proposition}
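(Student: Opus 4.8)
The plan is to reduce the statement to our proof principle (Theorem~\ref{th:proof-principle}) by building a \emph{modified equation} that pins the candidate's known behaviour on $\dd_1$ while leaving the inductive region $\dd_2$ untouched. Concretely, I would define the operator $\Phi'$ by $\Phi'(g)(\vn) = \fsol(\vn)$ for $\vn \in \dd_1$ and $\Phi'(g)(\vn) = \Phi(g)(\vn)$ for $\vn \in \dd_2$; that is, $\Phi'$ overwrites $\dd_1$ with the constant function $\fsol_{|\dd_1}$ (which does not depend on $g$) and coincides with $\Phi$ on $\dd_2$. Since the $\dd_1$-component is constant in $g$ and the $\dd_2$-component is the monotone $\Phi$, the operator $\Phi'$ is itself monotone.

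Next I would pin down the least fixpoint of $\Phi'$, the key claim being $\lfp\Phi' = \fsol$. First, $\fsol$ is a fixpoint of $\Phi'$: on $\dd_1$ we have $\Phi'(\fsol)(\vn) = \fsol(\vn)$ by definition, and on $\dd_2$ we have $\Phi'(\fsol)(\vn) = \Phi(\fsol)(\vn) = \fsol(\vn)$ since $\fsol$ solves $\Phi$. Second, $\Phi'$ inherits a ranking function from $\Phi$: the overwritten $\dd_1$-part depends on no value of $g$ (so it trivially respects any ranking), while the $\dd_2$-part has exactly the recursive-call structure of $\Phi$. Hence $\Phi'$ terminates everywhere (Definition~\ref{defi:eq-termination}) and thus admits a \emph{unique} solution, which must be the fixpoint $\fsol$ just exhibited; in particular $\lfp\Phi' = \gfp\Phi' = \fsol$.

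I would then verify that the candidate $f$ is a \emph{postfixpoint} of $\Phi'$, i.e. $\Phi'(f) \leq f$ pointwise. On $\dd_1$, $\Phi'(f)(\vn) = \fsol(\vn) \leq f(\vn)$ by the hypothesis $\fsol \leq f$ on $\dd_1$; on $\dd_2$, $\Phi'(f)(\vn) = \Phi(f)(\vn) \leq f(\vn)$ by the hypothesis $\Phi f \leq f$ on $\dd_2$. Applying Theorem~\ref{th:proof-principle} to the monotone equation $\Phi'$ then yields $\fsol = \lfp\Phi' \leq f$ everywhere, as desired. The dual statement for lower bounds follows by the symmetric argument: with the same $\Phi'$, the hypotheses $f \leq \fsol$ on $\dd_1$ and $f \leq \Phi f$ on $\dd_2$ make $f$ a \emph{prefixpoint} of $\Phi'$, and the prefixpoint clause of Theorem~\ref{th:proof-principle} (together with $\gfp\Phi' = \fsol$) gives $f \leq \fsol$.

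I expect the only genuinely delicate point to be the identity $\lfp\Phi' = \fsol$: overwriting $\dd_1$ with the exact solution must not inadvertently create a \emph{smaller} fixpoint on $\dd_2$. A naive $\eqleq$-comparison between $\Phi$ and $\Phi'$ does not settle this (neither operator dominates the other, since $\Phi f$ may exceed $\fsol$ on $\dd_1$), so I would lean on the termination hypothesis assumed throughout the paper --- via the inherited ranking function --- to force uniqueness of the solution of $\Phi'$ and hence its coincidence with $\fsol$. An alternative, if one prefers to avoid constructing $\Phi'$, is a direct well-founded induction on a ranking function $\rkfun$ of $\Phi$ (Definition~\ref{defi:rkfun}): the $\dd_1$ case is immediate, and for $\vn \in \dd_2$ one uses monotonicity of $\Phi$ on the sub-values (all of strictly smaller rank, where $\fsol \leq f$ holds by the induction hypothesis) to conclude $\fsol(\vn) = \Phi\fsol(\vn) \leq \Phi f(\vn) \leq f(\vn)$.
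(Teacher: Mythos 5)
Your proof is correct, and it follows the same overall strategy as the paper: build a modified operator that freezes the $\dd_1$ part, show its least fixpoint is still $\fsol$, and then apply Theorem~\ref{th:proof-principle} to the candidate, which is a postfixpoint of the modified operator by construction. The one substantive difference is in how $\dd_1$ is overwritten. The paper's Lemma~\ref{lem:hybrid-verif} caps $\Phi$ on $\dd_1$ by $\min(\Phi(f)(\vn),g(\vn))$ with $g$ the candidate itself; this makes $\widetilde{\Phi}\eqleq\Phi$, so fixpoint transfer immediately gives $\lfp\widetilde{\Phi}\leq\lfp\Phi$, and only the reverse inequality (``no strictly smaller fixpoint'') needs a separate argument. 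Your $\Phi'$ instead hard-codes the constant $\fsol_{|\dd_1}$, which --- as you correctly observe --- is $\eqleq$-incomparable with $\Phi$, so you must lean entirely on the standing termination assumption (the inherited ranking function forcing $\lfp\Phi'=\gfp\Phi'=\fsol$). Both routes are sound under the paper's hypotheses; the paper's $\min$-cap is slightly more economical because half of the fixpoint identity comes for free from the order of equations, and its lemma is reusable with any $g\geq\lfp\Phi$ on $\dd_1$ rather than requiring the exact solution. Your closing alternative --- a direct well-founded induction on a ranking function, with the $\dd_1$ case immediate and the $\dd_2$ case using monotonicity on the lower-rank values --- is a clean elementary bypass of the operator construction altogether; just note that the step $\Phi(\fsol)(\vn)\leq\Phi(f)(\vn)$ needs a small localisation argument (replace $f$ above rank $\rkfun(\vn)$ by $\max(\fsol,f)$ so that global monotonicity applies), since the induction hypothesis only gives $\fsol\leq f$ at strictly smaller ranks.
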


\begin{lemma}[For hybrid verification]
  \label{lem:hybrid-verif}
  Let $\Phi:(\dd\to\rri)\to(\dd\to\rri)$, $\dd=\dd_1\sqcup\dd_2$,
  and suppose that $g:\dd_1\to\rri$ is such that
  $g\geq\lfp\Phi$ (in $\dd_1$).
  Consider
  {\footnotesize
  \begin{equation*}
    \widetilde{\Phi}:=f\mapsto\vn\mapsto\begin{cases*}
      \min(\Phi(f)(\vn),g(\vn)) &if $\vn\in\dd_1$,\\
      \Phi(f)(\vn)              &if $\vn\in\dd_2$.
    \end{cases*}
  \end{equation*}}
  Then, $\lfp\widetilde{\Phi} = \lfp\Phi$.
  The dual result holds for $\gfp$, $\max$, and $g\leq\gfp\Phi$ in $\dd_1$.
\end{lemma}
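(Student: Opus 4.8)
The plan is to prove the two inclusions $\lfp\widetilde\Phi\le\lfp\Phi$ and $\lfp\Phi\le\lfp\widetilde\Phi$ separately, and the whole point will be to exploit the hypothesis $g\geq\lfp\Phi$ on $\dd_1$ to show that the cap introduced by the $\min$ is \emph{inactive at the least fixpoint}. First I would check that $\widetilde\Phi$ is itself a monotone equation: it is assembled from $\Phi$ and the function $g$ (constant in the unknown $f$) by a pointwise $\min$ on $\dd_1$ and equals $\Phi$ on $\dd_2$; since $\Phi$ is monotone and $\min$ is monotone in its arguments, $\widetilde\Phi$ is monotone, so $\lfp\widetilde\Phi$ exists by Theorem~\ref{th:knaster-tarski}.

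For the easy inequality I would simply verify that $\lfp\Phi$ is a fixpoint of $\widetilde\Phi$. Indeed, on $\dd_2$ we have $\widetilde\Phi(\lfp\Phi)=\Phi(\lfp\Phi)=\lfp\Phi$, while on $\dd_1$ we have $\widetilde\Phi(\lfp\Phi)=\min(\Phi(\lfp\Phi),g)=\min(\lfp\Phi,g)=\lfp\Phi$, the last step being exactly the hypothesis $g\geq\lfp\Phi$ on $\dd_1$. Since $\lfp\widetilde\Phi$ is the \emph{least} fixpoint of $\widetilde\Phi$, this gives $\lfp\widetilde\Phi\le\lfp\Phi$. (Alternatively one may note $\widetilde\Phi\eqleq\Phi$ and invoke fixpoint transfer.)

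The heart of the argument is the reverse inequality, and this is the step I expect to be the main obstacle. Writing $h:=\lfp\widetilde\Phi$ and $m:=\lfp\Phi$, the tempting move is to show directly that $h$ is a postfixpoint of $\Phi$ so as to apply Knaster--Tarski; but this \emph{fails}, because wherever the cap is active one has $h=g$ lying strictly below $\Phi(h)$, so in general $h$ is only a \emph{prefixpoint} of $\Phi$, which yields nothing useful. The trick that unlocks the proof is to feed the already-proved inequality $h\le m$ back through the monotonicity of $\Phi$: from $h\le m$ we get $\Phi(h)\le\Phi(m)=m$, and then on $\dd_1$ we have $\Phi(h)\le m\le g$, so $\min(\Phi(h)(\vn),g(\vn))=\Phi(h)(\vn)$ for every $\vn\in\dd_1$. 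Hence $\widetilde\Phi(h)$ agrees with $\Phi(h)$ on $\dd_1$, and trivially on $\dd_2$, so $\Phi(h)=\widetilde\Phi(h)=h$ (the last equality because $h$ is a fixpoint of $\widetilde\Phi$). Thus $h$ is in fact a genuine fixpoint of $\Phi$, whence $m=\lfp\Phi\le h$; combined with $h\le m$ this gives $\lfp\widetilde\Phi=\lfp\Phi$.

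Finally I would dispatch the dual statement (with $\max$ in place of $\min$, $\gfp$ in place of $\lfp$, and hypothesis $g\le\gfp\Phi$ on $\dd_1$) by order duality: reversing $\le$ throughout turns $\min$ into $\max$, least into greatest fixpoints, and the hypothesis $g\geq\lfp\Phi$ into $g\le\gfp\Phi$, so the same short computation applies verbatim in the opposite lattice, without any new idea.
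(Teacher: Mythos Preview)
Your proof is correct and follows essentially the same line as the paper's sketch: both first show $\lfp\Phi$ is a fixpoint of $\widetilde\Phi$ (giving $\lfp\widetilde\Phi\le\lfp\Phi$), and then argue that $\lfp\widetilde\Phi$ is in fact a fixpoint of $\Phi$. Where the paper merely writes ``no strictly smaller fixpoints (by contradiction)'', you supply the natural direct argument---feeding $h\le m$ back through the monotonicity of $\Phi$ to kill the cap---which is exactly how one would unpack that sketch.
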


\begin{proof}[Sketch]
  First, $\widetilde{\Phi} \sqsubseteq \Phi$,
  hence $\lfp\widetilde{\Phi} \leq \lfp\Phi$.
  Then, observe that $\lfp\Phi$ is also fixpoint of
  $\widetilde{\Phi}$
  and that there are no strictly smaller fixpoints (by contradiction).
\end{proof}
\begin{remark}
  This proves Proposition~\ref{prop:hybrid-verif} by noticing that,
  for a candidate solution $g$ of equation $\Phi$,
  if $\lfp\Phi\leq g$ in $\dd_1$ and $\Phi g \leq g$ in $\dd_2$,
  then $\widetilde{\Phi}(g)\leq g$ (with the notations of the lemma),
  so that $g\geq\lfp\widetilde{\Phi}=\lfp\Phi$ (everywhere).
\end{remark}

Such proof principle may be used for interactive bound search
approaches, mixed with local repair as sketched in
Section~\ref{subsubsec:interactive-approach}.
Additionally, it provides safety guarantees for substitution of
partial bounds or solutions (known to be valid only in a subdomain),
which is particularly useful to define a compositional approach to
solving systems of equations.

\textbf{Subdomain-wise reasoning, CFR and CFA for equations.}
\label{subsubsec:subdom-wise-reasoning}

It can be useful to consider control-flow graphs for equations, and
follow the topological order when searching candidates for subdomains,
generalising classical approaches used for systems of
equations~\cite{caslog-short}.

\begin{example}
  Consider equation \texttt{\multiphaseonename} of
  Table~\ref{table:benchmarks}, with subdomains $\dd_1,\dd_2,\dd_3$
  corresponding to $i\geq n$, $i<n \wedge r>0$ and $i<n \wedge r = 0$
  respectively.
  The following dependency graph is obtained by asking, for each
  $\vn$, what are the values $\vn'$ such that
  $\Phi(f)(\vn)$ may depend on $f(\vn')$. %
  Up to orientation reversal, this is just the control flow of the program
  from which this equation originates (Fig 1.12,
  \cite{montoya-phdthesis}).
\begin{center}
\begin{tabular}{c@{\hspace{4em}}c}
\begin{tikzcd}[column sep=1.5em]
                                                             & \dd_1 \arrow[rd] &                                                              \\
\dd_2 \arrow[loop, distance=2em, in=235, out=305] &                  & \dd_3 \arrow[ll] \arrow[loop, distance=2em, in=235, out=305]
\end{tikzcd}
&
\begin{lstlisting}[language=c, numbers=none, basicstyle=\scriptsize]
uint __cost = 0;

while(i < n){
  if(r > 0){
    i=0;
    r--;
  }else{
    i++;
  }
  __cost++;
}
\end{lstlisting}
\end{tabular}
\end{center}
  In the interpretation of
  Section~\ref{subsubsec:interactive-approach}, cost flows from $\dd_1$ to $\dd_3$
  to $\dd_2$, and we may thus perform our bound search in this order.
  For example, take the candidate $g:(i,n,r)\mapsto \max(0,n+r-i+1)$.
  We have
  {\footnotesize
  \begin{equation*}
    \Phi g=(i,n,r)\mapsto\begin{cases}
      0 &\text{in }\dd_1\\
      n+r+1 &\text{in }\dd_2\\
      n-i+1 &\text{in }\dd_3,
    \end{cases}
  \end{equation*}}%
  hence $\Phi g \leq g$ in $\dd_1 \cup \dd_3$,
  but $\Phi g \geq g$ in $\dd_2$,
  we thus cannot apply our proof principle directly.
  However, since there are no recursive calls to $\dd_2$ coming from $(\dd_1 \cup \dd_3)$,
  we can restrict our attention to $\dd_1 \cup \dd_3$,
  and the observation above proves that
  $\fsol \leq g$ on $\dd_1 \cup \dd_3$.
  To go further, we may thus later search another candidate upper
  bound only in $\dd_2\to\rr$, using the bound $g_{|\dd_1 \cup \dd_3}$
  in recursive calls.
  (In this particular example, admittedly a bit contrived, we can also
  realise that if we had searched a bit further, we would have
  obtained a global bound directly, since $\Phi\Phi g \leq \Phi g$
  everywhere, hence $\fsol \leq \Phi g$ on the whole domain.)
\end{example}

Subdomain-wise reasoning also suggests to take particular attention to
\emph{interfaces}, where inductive proofs are most likely to fail if
we are not careful. As this intuition is particularly useful to define
good heuristics (e.g. for input sample generation in the approach of
Section~\ref{subsec:optim-approach}), we define this notion here.

\begin{definition}[Subdomain interface]
  \label{defi:subdom-interface}
  Let $\Phi:\rr^\dd\to\rr^\dd$ be an equation, and $\dd_1,
  \dd_2\subseteq\dd$ two disjoint subdomains.
  We say that a value $\vn\in\dd_1$ is at the \emph{interface} between
  $\dd_1$ and $\dd_2$ whenever there exists two functions
  $f_1,f_2:\dd\to\rr$, identical outside of $\dd_2$, such that
  $\Phi(f_1)(\vn)\neq\Phi(f_2)(\vn)$.
\end{definition}

\section{Candidate Generation}
\label{sec:cand-generation}

The order ``equation-as-operator'' viewpoint and the theoretical
framework developed in this paper lead to many avenues for development
of new recurrence solving algorithms, especially in the form of
``guess, check, and repair/refine'' loops (including abstract
interpretation approaches).

For brevity, we focus on a particular instance, that
has been implemented as a proof-of-concept.
It is an effective although simple approach based on dynamic invariant
generation and constrained optimisation, presented in
Section~\ref{subsec:optim-approach}.
We also sketch other possible approaches in
Section~\ref{subsec:alternative-approaches} (which we believe offer
other advantages, particularly in terms of generalisability), and
discuss them further 
in the longer version of the paper. 
\mhnote{Same}

\vspace{-0.55em}
\subsection{Dynamic optimisation-based approach}
\vspace{-0.15em}
\label{subsec:optim-approach}

\noindent While we do not \emph{a priori} have access to a closed-form
for the solution $\fsol=\lfp \Phi$ to our equation, we can always
observe its projection on finite-dimensional subspaces of
the form $\rr^\inputset$ with $\inputset\subseteq\dd$ finite.
Indeed, we can take advantage of the fact that we are working in a
\emph{discrete} setting where $\dd\subseteq\zz^\dimdom$, with
equations that terminate in finite time, to obtain the values of
$\fsol$ on each $\vn\in\inputset$ by evaluating $\Phi$ as a program
($\fsol(\vn) = \Phi^\beta(\bot)$ for some $\beta<\omega$).
Note that this would not be immediate in non-discrete settings, e.g.
for  differential equations.

This observation allows us to draw upon techniques developed in the line of
work on \emph{dynamic invariant analysis}, e.g.
\cite{nguyen2012-shortest,nguyen2014-shortest,ishimwe2021:dynaplex-oopsla,nguyen2022:invariants-tse,mlrec-tplp2024-preprint},
which proposes to identify likely invariants from a sample of program traces,
before attempting to prove them.
Note that we apply these techniques to \emph{recurrences viewed as programs},
i.e. executable abstractions of programs, instead of programs
directly.

Thus, we generate a \emph{training set}
$\trainingset = \{(\vn, \fsol(\vn))\,\big|\,\vn\in\inputset\}$,
by executing $\Phi$.

We generalise the approach of~\cite{mlrec-tplp2024-preprint}, which
uses regression techniques on $\trainingset$ to search for $\fsol$
among various model spaces, including finite-dimensional affine
subspaces of $\rr^\dd$ generated by \emph{base functions} taken from a
finite set $\basefuns$ hoped to be representative of common size/cost
functions. Our generalisation allows the \emph{guess} stage to search
through the space of all \emph{bounds} on $\fsol$, which are a priori
easier to guess in general, and thus makes it possible to obtain
bounds where~\cite{mlrec-tplp2024-preprint} only obtains unsafe
approximations.

We do so by leveraging the order-theoretical framework developed in
the rest of this paper, and perform dynamic invariant generation on
the set of observations $\trainingset$ using \emph{constrained}
optimisation methods, that enforce \emph{accuracy}, \emph{parsimony},
\emph{safety} and local \emph{inductivity}.
Generalisations of the approach can easily be imagined, but for
simplicity, efficiency, and to obtain a first prototype, we present an
instantiation based on quadratic programming (QP), using affine model
spaces (made of affine combinations of base functions). The full
version presented here is a priori limited to \emph{affine} equations,
but can be generalised to the full grammar of
Fig.~\ref{eqs:grammar-1}, or directly applied without
\emph{inductivity} constraints for simplicity.
For the sake of exposition, the method is presented for upper bounds;
the case of lower bounds is symmetrical.

We first present our objectives in a general setting, before
presenting the instantiation with QP, affine equations and affine
template spaces, which can be efficiently handled in practice.

We want the optimisation-based method to search for ``good'' candidate bounds
$\candf$ of $\fsol=\lfp\Phi$, inside some tractable model space
$\modelspace\subseteq(\dd\to\rr)$, ideally such that they can be checked by our method,
i.e. such that they are \emph{inductive}: $\Phi\candf \leq \candf$.
We formalise these requirements in the following way.
\begin{enumerate}
    \item \label{enum:accuracy}
      \textbf{Accuracy} on the training set.
      We want to minimise the $\ell_2$ norm of $\candf-\fsol$ on $\rr^\inputset$,
      i.e. minimise
      $||(\candf-\fsol)_{|\inputset}||^2_2 = \sum_{\vn\in\inputset}(\candf(\vn)-\fsol(\vn))^2.$
    \item \label{enum:parsimony}
      \textbf{Parsimony}.
      In order to avoid overfitting, we need to penalise model complexity,
      i.e. minimise some measure of complexity of $\candf$.
    \item \label{enum:bound}
      \textbf{Safety}.
      We search for functions that are bounds on all of $\dd$,
      so candidates should at least be bounds on $\inputset$, and we impose
      $\forall \vn\in\inputset,\, \fsol(\vn)\leq\candf(\vn).$
    \item \label{enum:inductivity}
      Local \textbf{Inductivity}.
      Ideally, we search for bounds that we will be able to prove, i.e. such
      that $\Phi\candf \leq \candf$.
      To extract a finite number of constraints from this, we can at least
      require that this inequality holds on the training set:
      $\forall \vn\in\inputset,\, \Phi(\candf)(\vn)\leq\candf(\vn).$
\end{enumerate}

\noindent These goals can be achieved efficiently if we restrict
ourselves to model spaces and candidates that can be handled by QP
methods. Generalisations to larger template spaces are possible with
other optimisation methods, at the cost of efficiency and theoretical
guarantees.

We choose a finite number $\numfeatures$ of \emph{base functions}
$\basefuns\subset\rr^\dd$ (intended to represent common complexity
orders),
we impose $(\vn\mapsto 1)\in\basefuns$ to handle biases uniformly,
and we set
$\modelspace = \Vect(\basefuns)
  = \big\{\sum_{i=1}^p \alpha_i f_i\,\big|\,\forall i,\,\alpha_i\in\rr,\,f_i\in\basefuns\big\}$.
For simplicity, for a generic candidate $\candf\in\modelspace$ of the
form $f_{\va} := \sum_{i=1}^p\alpha_i f_i$, we refer to the vector of
coefficients $\va$ as ``the candidate'' itself, although a single
function may admit multiple representations as a linear combination of
elements of $\basefuns$.

To simplify the notation, let $\vysol$ and $\vyi$ respectively denote the
projection on $\inputset\to\rr$ of $\fsol$ and each $f_i\in\basefuns$, i.e.
$\vysol=(\fsol(\vn))_{\vn\in\inputset}$ and $\vyi=(f_i(\vn))_{\vn\in\inputset}$.
Similarly we write $\vy_{\va}$, or simply $\vy$, to represent the projection
$f_{\va}$, so that $\vy = \sum_i \alpha_i\vyi$.
This can be rewritten as the matrix multiplication $\vy = \ymat \va$, where
$\ymat$ is a $\numtraining\times\numfeatures$ matrix whose columns are the
$\vyi$, with $\numtraining=\Card(\inputset)$.

Once this is done, the goals presented above can be handled as follows.
    \textbf{Accuracy} corresponds to minimisation of the Euclidean norm
    $||\vysol-\ymat\va||_2^2$.
    To achieve \textbf{parsimony}, we can use lasso regularisation, i.e. add
    a penalisation of the $\ell_1$ norm $||\va||_1:=\sum_i |\alpha_i|$, which
    can serve as an approximation of the $\ell_0$ pseudo-norm
    $||\va||_0:=\Card\{i\,|\,\alpha_i\neq 0\}$, as described in~\cite{Hastie-15a}.
    Local \textbf{safety} is simply represented as a linear
    inequality $\vysol\leq\ymat\va$.
    Local \textbf{inductivity} corresponds to
    $\Phi(f_{\va})_{|\inputset} \leq (f_{\va})_{|\inputset}$, which is
    a bit more complicated as it is not in general a conjunction of
    affine inequalities on $\va$.
    \lrnote{TODO: Improve notations $\vyip$, $\ypmat$ and $\vphib$?}
    The situation simplifies if we additionally assume that $\Phi$ is
    \emph{affine}, in which case we use Proposition~\ref{prop:aff-on-aff}.
    Calling $\vyip$ the projections of $\Phil f_i$ on $\inputset\to\rr$,
    $\ypmat$ the $\numtraining\times\numfeatures$ matrix whose columns are the
    $\vyip$, and $\vphib$ the projection of the bias $\Phi(\vzero)$, we obtain
    $\Phi(f_{\va})_{|\inputset} = \ypmat \va + \vphib$, so that the
    inequality above can be rewritten as the polyhedron
    $\ypmat\va + \vphib \leq \ymat\va$.
Putting all of this together, and calling $\lambda\in\rr_+$ the hyperparameter
associated to lasso regularisation, we obtain the following
constrained optimisation problem.
\begin{equation}
\begin{gathered}
  \min_{\va\in\rr^\numfeatures}||\vysol-\ymat\va||_2^2+\lambda||\va||_1\\[-2pt]
  \text{under the constraints}
  \;
  \vysol\leq\ymat\va %
  \;\text{and}\;
  \ypmat\va + \vphib \leq \ymat\va. %
\end{gathered}
\vspace{-0.5em}
\end{equation}

\noindent This can be transformed into a QP problem. We give the full
version directly for completeness, but it can be simplified if we
ignore terms enforcing parsimony, safety and/or local
inductivity.
\vspace{-0.5em}
$$\min_{\substack{\vx\in\rr^\numfeatures\times\rr^\numfeatures\\G\vx\leq\vec{h}}}\vx^TP\vx
+ q^T\vx,\;\;\;\text{where}$$
\vspace{-2em}
{\footnotesize
\begin{equation*}
  \vx = \begin{pmatrix}
    \va \\ \vec{u}
  \end{pmatrix},
  \;
  P = \begin{pmatrix}
    \ymat^T \ymat && \vzero \\
    \vzero        && I_{\numfeatures}
  \end{pmatrix},
  \;
  q = \begin{pmatrix}
    -\ymat^T\vysol \\ %
    \lambda \vec{1}_{\numfeatures}
  \end{pmatrix},
  \;
  G = \begin{pmatrix}
           -\ymat  && \vzero \\ %
     \ypmat-\ymat  && \vzero \\ %
    -I_{\numfeatures} && -I_{\numfeatures}\\ %
     I_{\numfeatures} && -I_{\numfeatures}   %
  \end{pmatrix},
  \;
  h = \begin{pmatrix}
     \vysol \\ %
    -\vphib \\ %
    \vzero \\ %
    \vzero    %
  \end{pmatrix}.
\end{equation*}
}

\noindent As mentioned in Section~\ref{subsec:prelim-qp}, to our
knowledge, the use of such bounds and inductivity constraints is
novel in this context, but the rest of the setup is classical.
The terms $\ymat^T \ymat$ and $\ymat^T\vysol$ correspond to the classical QP
representation of linear regression problems. The first two lines of $G$ and $h$
correspond respectively to bound constraints and local inductivity
constraints. Finally, $\vec{u}$ are auxiliary variables used to obtain a
QP-encoding of the $\ell_1$ minimisation term.

Note that in practice, to gain precision and as in~\cite{mlrec-tplp2024-preprint},
we use a modified model space composed of piecewise functions, with subdomains
extracted from the definition of $\Phi$.
More precisely, for $\Phi$ defined piecewise as in
Equation~\ref{eq:piecewise-generic} with $\dd_i=\{\vn\,|\,\varphi_i(\vn)\}$,
we classify the subdomains into non-recursive (if $e_i(f,\vn)$ does not depend
on f) and recursive (otherwise or if nothing can be proven), and use
{\footnotesize
\begin{equation*}
  \widetilde{\modelspace} = f_{base} +
    \bigoplus_{\dd_i\;\text{recursive subdomain}}
    \Vect\Big(\Big\{
         \vn\mapsto f(\vn)\cdot \vec{1}_{\dd_i}(\vn)
      \;\Big|\;
      f\in\basefuns\Big\}\Big),
  \vspace{-0.5em}
\end{equation*}}

\noindent where $f_{base}(\vn) = \Phi(\vzero)(\vn)$ if $\vn\in\dd_i$ for $\dd_i$
non-recursive, and $f_{base}(\vn)=0$ otherwise,
and $\vec{1}_{\dd_i}:\vn\mapsto\ite(\varphi_i(\vn),1,0)$ is the
characteristic function of $\dd_i$.
In other words, $\widetilde{\modelspace}$ is simply composed of functions which are
piecewise combinations of the base functions from $\basefuns$ in the recursive
subdomains, and which are equal to the solution $\fsol$ on the non-recursive
domains, in which this solution is explicitly available as a base case.
$\widetilde{\modelspace}$ is then an affine vector space of dimension
$r_{\texttt{rec}}\times\Card(\basefuns)$, where $r_{\texttt{rec}}$ is
the number of recursive subdomains. $\widetilde{\modelspace}$ can also
simply be viewed as a shifted version (by $f_{base}$) of a model space
$\Vect(\widetilde{\basefuns})$ with a modified set of base functions
$\widetilde{\basefuns}$ containing all the restrictions to each
recursive subdomain of each base function in $\basefuns$.
The expressions presented above generalise easily to this case by
adding a bias term $\vb$ so that $\vy = \ymat\alpha+\vb$, and
reflecting this in the definition of $\vphib$ as well as $q$ and $h$.
Such strategy, that improves precision by using piecewise templates on
selected subdomains, can easily by improved by using more refined
strategies for subdomain selection than the purely syntactic method
mentioned here, e.g. CFA/CFR techniques from~\cite{DomenechGG19-short}.

At this point, a QP problem is set up. It is handed to a QP solver,
which performs candidate generation, outputting an $\va$ which
corresponds to a candidate $\candf=f_{\va}$. Depending on QP
solver used, the resulting $\candf$ may be expressed on floating
points rather than rational numbers, which hinders verification.
Strategies to handle this issue are discussed along with our
implementation in Section~\ref{sec:implementation-and-evaluation}.

\subsection{Alternative approaches}
\label{subsec:alternative-approaches}

\hspace{\parindent} \textbf{Interactive and dichotomy-like approach.}
\label{subsubsec:interactive-approach}
A first simple idea is to view Theorem~\ref{th:proof-principle} and
its extensions as ``a proof principle/system for humans''.
Indeed, it leads to very short proofs for both introductory and
non-trivial examples, such as
$\Phi(f)(n)=\ite(n\leq1,0,n + \max_{0 \leq k \leq
  n-1}(f(k)+f(n-1-k))$,
i.e. worst-case non-deterministic quicksort, where the $\max$ over $k$
presents unique challenges.
Using Theorem~\ref{th:proof-principle}, it is immediate to prove that
$n^2$ is an upper bound, since for $n\geq2$,
$\Phi(n\mapsto n^2)(n)=n+\max_{0 \leq k \leq n-1}(k^2+(n-1-k)^2)=n+(n-1)^2\leq n^2.$
Large parts of such proofs can be mechanised. This leads to an
interactive proof system, where the user iteratively proposes
candidate bounds, and the system either proves them or provides
information to improve the guess. Hybrid verification and
local repairs are particularly valuable here.
To automate manual exploration strategies, further research of
the structure of candidate (sub)lattices is required.

\textbf{Quantifier elimination: a partial result on completeness.}
As mentioned in Section~\ref{subsec:prelim-funcomp}, the first-order theory
of reals (as an ordered field) is decidable, and effective algorithms such
as CAD are available to perform quantifier elimination.
This provides partial results on completeness, in the sense that we
can obtain all \emph{inductive} (piecewise) polynomial bounds (locally
of shape $f_{\va}(\vn)=\sum_i \alpha_i \prod_j n_j^{e_{i,j}}$), for
equations that preserve the set of piecewise polynomials, after
integer-to-real relaxation.
Indeed, the formula
$\exists\va\forall\vn\,\Phi(f_{\va})(\vn)\leq f_{\va}(\vn)$
can be fully handled on the reals, and CAD produces
a logical formula on $\va$ that can be viewed as the representation of
an element of the domain of (sets) of polynomial bounds, as described
in Appendix~\ref{sec:A-bounds}.
A heuristic to temper the loss of precision from
integer-to-real relaxation is to use a finite number of disjunctions
to enforce integrality close to the origin and to interfaces between
the $\dd_i$, where most spurious counterexamples are found (e.g. $1/2$
for $n\leq n^2$).

As CAD does not scale well, this approach is mainly valuable for the
understanding it provides, but it can be applied practically when the
number of template parameters is moderate. For example, using CAD
through \texttt{Mathematica}'s \texttt{Resolve}, we obtain the exact
solution to \texttt{\sumoscname} in $20$ seconds, and recover the
ideal solution of Table~\ref{table:runnex_results} when fixing the
leading term for our running example.

\textbf{Abstraction-based approach, Abstract Iteration.}
\label{subsubsec:abstract-iteration}
We conclude this section with an idea, included here to enable
discussion, and as a foundation for future work, that we believe opens
interesting research avenues.
Since recurrence equations can be viewed as programs and thus as
operators $\Phi:\concretedom\to\concretedom$ on a concrete domain
$\concretedom=\dd\to\rri$, whose semantics/solution is $\lfp\Phi$, it
is very tempting to design \emph{abstract domains} $\abstractdom$ of
abstract functions and obtain sound
$\Phi^\sharp:\abstractdom\to\abstractdom$ via \emph{abstract transfer
functions}, computed on simpler operators used to syntactically
construct $\Phi$.
From this, upper bounds on solutions, can be obtained thanks to
fixpoint transfer theorems, as limits of a Kleene sequence
$(\Phi^\sharp)^\beta(\bot^\sharp)$ -- possibly using widenings. Such
computation may be called \emph{abstract iteration} to reflect the
intuition of executing an abstract recurrence equation.

Preliminary versions on these ideas have been described in the
technical report~\cite{rustenholz-aars-msc}, using domains
$\boundaries$-bounds presented in Appendix~\ref{sec:A-bounds},
and can be extended to various template spaces.
However, a complete description of the abstractions and transfer
functions designed as extension of this line of work is outside the
scope of this paper, and will be reported upon in future work.

Abstract iteration is illustrated in Fig.~\ref{fig:absiter-nested},
on a small example difficult to handle deductively.
The analysis in this example simply uses affine bounds, and could thus
be performed with polyhedra,\footnote{This %
(interprocedural abstract interpretation by polyhedra+disjunctions to
obtain affine bounds, if possible, on solutions of equations
represented as programs) can be considered ``folklore'', but we are
not aware of full descriptions in the literature.}
but the value of the ``$\boundaries$-bound abstract domain'' viewpoint
lies in the fact that it can be extended beyond affine relationships,
hence beyond programs limited to linear complexity. Our intuition and
experience is that transfer functions are easier to design in this
context than for general numerical domains, thanks to the
``relational-to-functional constraint abstraction'' mentioned in
Proposition~\ref{prop:rel2fun-cnstr}.

In comparison with the approach chosen for our proof-of-concept
implementation, clear advantages of abstraction interpretation methods
are liveness guarantees,
lower sensibility to numerical errors,
and absence of need for sampling.
However, they appeared to require extra implementation effort, for the
design of transfer functions and widenings adapted to each family of
templates, although this might be simplified by integrating transfer
function synthesis methods (on basic operators and small compositions
of them).

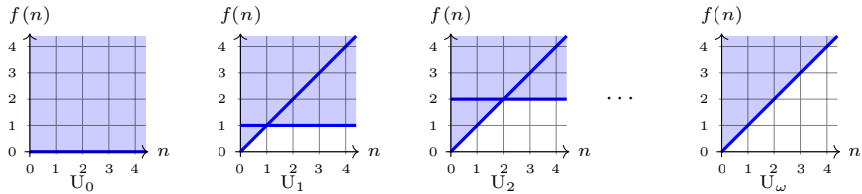
\begin{figure}[h!]
\vspace{-2em}
\begin{tabular}{p{2.7cm} p{2.7cm} p{3.5cm} p{3cm}}
{\begin{tikzpicture}[scale=0.35]
    \begin{scope}[transparency group]
        \begin{scope}[blend mode=multiply]
  \draw[very thin,color=gray] (-0.1,-0.1) grid (4.4,4.4);
  \draw[->] (-0.2,0) -- (4.5,0) node[right] {\scriptsize $n$};
  \draw[->] (0,-0.2) -- (0,4.5) node[above] {\scriptsize $f(n)$};
  \foreach \x in {0,...,4} {
        \node [anchor=north] at (\x,-0.5-1ex) {\tiny $\x$};
    }
    \foreach \y in {0,...,4} {
        \node [anchor=east] at (-0.5-0.5em,\y) {\tiny $\y$};
    }
    \draw[-, very thick, blue] (0, 0) -- (4.4, 0);
    \fill[fill=blue!20] (0,0)--(4.4,0)--(4.4,4.4)--(0,4.4);
        \end{scope}
    \end{scope}
    \node at (2, -1.2) {\scriptsize $\mathrm{U}_0$};
\end{tikzpicture}}
&
{\begin{tikzpicture}[scale=0.35]
    \begin{scope}[transparency group]
        \begin{scope}[blend mode=multiply]
  \draw[very thin,color=gray] (-0.1,-0.1) grid (4.4,4.4);
  \draw[->] (-0.2,0) -- (4.5,0) node[right] {\scriptsize $n$};
  \draw[->] (0,-0.2) -- (0,4.5) node[above] {\scriptsize $f(n)$};
  \foreach \x in {0,...,4} {
        \node [anchor=north] at (\x,-0.5-1ex) {\tiny $\x$};
    }
    \foreach \y in {0,...,4} {
        \node [anchor=east] at (-0.5-0.5em,\y) {\tiny $\y$};
    }
    \draw[-, very thick, blue] (0, 0) -- (4.4,4.4);
    \draw[-, very thick, blue] (0, 1) -- (4.4,1);
    \fill[fill=blue!20] (0,0)--(1,1)--(4.4,1)--(4.4,4.4)--(0,4.4);
        \end{scope}
    \end{scope}
    \node at (2, -1.2) {\scriptsize $\mathrm{U}_1$};
\end{tikzpicture}}
&
{\begin{tikzpicture}[scale=0.35]
    \begin{scope}[transparency group]
        \begin{scope}[blend mode=multiply]
  \draw[very thin,color=gray] (-0.1,-0.1) grid (4.4,4.4);
  \draw[->] (-0.2,0) -- (4.5,0) node[right] {\scriptsize $n$};
  \draw[->] (0,-0.2) -- (0,4.5) node[above] {\scriptsize $f(n)$};
  \foreach \x in {0,...,4} {
        \node [anchor=north] at (\x,-0.5-1ex) {\tiny $\x$};
    }
    \foreach \y in {0,...,4} {
        \node [anchor=east] at (-0.5-0.5em,\y) {\tiny $\y$};
    }
    \draw[-, very thick, blue] (0, 0) -- (4.4,4.4);
    \draw[-, very thick, blue] (0, 2) -- (4.4,2);
    \fill[fill=blue!20] (0,0)--(2,2)--(4.4,2)--(4.4,4.4)--(0,4.4);
        \end{scope}
    \end{scope}
    \node at (6.5, 2) {$\cdots$};
    \node at (2, -1.2) {\scriptsize $\mathrm{U}_2$};
\end{tikzpicture}}
&
{\begin{tikzpicture}[scale=0.35]
    \begin{scope}[transparency group]
        \begin{scope}[blend mode=multiply]
  \draw[very thin,color=gray] (-0.1,-0.1) grid (4.4,4.4);
  \draw[->] (-0.2,0) -- (4.5,0) node[right] {\scriptsize $n$};
  \draw[->] (0,-0.2) -- (0,4.5) node[above] {\scriptsize $f(n)$};
  \foreach \x in {0,...,4} {
        \node [anchor=north] at (\x,-0.5-1ex) {\tiny $\x$};
    }
    \foreach \y in {0,...,4} {
        \node [anchor=east] at (-0.5-0.5em,\y) {\tiny $\y$};
    }
    \draw[-, very thick, blue] (0, 0) -- (4.4,4.4);
    \fill[fill=blue!20] (0,0)--(4.4,4.4)--(0,4.4);
        \end{scope}
    \end{scope}
    \node at (2, -1.2) {\scriptsize $\mathrm{U}_\omega$};
\end{tikzpicture}}
\end{tabular}
\vspace{-2em}
    \caption{\footnotesize
    Abstract iteration with sets of affine bounds on benchmark \texttt{\nestedname}.
    Candidate upper bounds (light blue) are progressively removed by
    abstract iterations. It is sufficient to apply $\Phi^\sharp$ on extremal
    bounds (dark blue). Between $\mathrm{U}_1$ and $\mathrm{U}_2$, $(n\mapsto n)$ is fixed,
    which shows that it is a correct upper bound on $\fsol$. When the
    fixpoint $\mathrm{U}_\omega$ is reached, only correct upper bounds
    are left.
    In this case, the ideal solution is obtained.
    \label{fig:absiter-nested}}
\vspace{-1em}
\end{figure}

\vspace{-2.5em}
\section{On Candidate Repair}
\vspace{-0.5em}
\label{sec:repair}
\lrnote{Has been compressed but would benefit from a pass to make it
  flow and cut details, this is a bit cramped}

Consider a candidate (e.g. upper) bound $\candf$ on the solution to $\Phi$.
When the inductivity check fails, i.e. $\Phi\candf\not\leq\candf$,
but the candidate $\candf$ seems ``close to be correct'' (e.g. if
$\fsol(\vn)\leq\candf(\vn)$ on a large sample, or if $\Phi\candf-\candf$ stays close to zero when it is
positive), it would be useful to ``repair'' this candidate by applying
small transformations to $\candf$.
This is particularly relevant for the optimisation-based
approach presented in Section~\ref{subsec:optim-approach}, where
numerical errors may make a candidate uncheckable even though it is
extremely close to a tight inductive upper bound.
More generally, we could imagine using these repair strategies on
candidates obtained by other means (e.g. interactive loops of
Section~\ref{subsubsec:interactive-approach}), and integrate them
in ``guess, check, and repair/refine'' bound inference loops.

To reformulate our question more precisely, can we find a
transformation $\Psi:\rr^\dd\to\rr^\dd$, such that
$(\Phi\Psi\candf-\Psi\candf)\leq\vzero$? Can this $\Psi$ be obtained
by combining simple transformation rules $\Psi_1,\dots,\Psi_k$, or can
we at least obtain safety conditions to ensure that such $\Psi_j$ will
not degrade checkability, although they may degrade precision?
The concepts introduced in this paper are useful
tools to investigate these questions, through the analysis of
variations $(\Phi\Psi\candf-\Psi\candf)-(\Phi\candf-\candf)$ of
inductivity measures -- a transformation \emph{helps} (for upper
bounds) when this quantity is negative.
Three important families of transformations are (briefly) introduced below.\\

\textbf{Local repair -- $\Psi: f \mapsto \mathrm{ite}(\varphi,g,f)$.}
\label{subsec:repair-local}
Consider a candidate upper bound $\candf$ such
$\Phi\candf\not\leq\candf$ only on a finite subdomain $\dd_1$, i.e.
such that the inductivity proof \emph{succeeds} on a cofinite set
$\dd_2:=\dd\setminus\dd_1$, e.g. ``at infinity''.
We can then attempt to \emph{repair} $\candf$ by computing a correct
upper bound $g$ on $\dd_1$,
(by executing $\Phi$ on the finite sample $\dd_1$), and
by replacing the value of $\candf$ by that of $g$ in $\dd_1$, i.e.
$\Psi:f\mapsto\mathrm{ite}(\vec{1}_{\dd_1},g,f)$.
Using Section~\ref{subsec:proof-extended}, we can show that to check
the inductiveness of such a repair, it is sufficient to check the
result ($\Phi\Psi\candf\leq\Psi\candf$ or $\fsol\leq\Psi\candf$) on
the \emph{interface} of $\dd_1$ and $\dd_2$.
This empowers interactive guess/repair loops presented in
Section~\ref{subsubsec:interactive-approach}.
An illustration is given in Fig.~\ref{fig:local-repair-ex}.

\begin{figure}
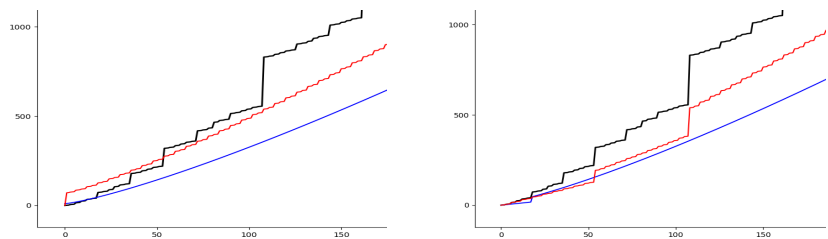

  \centering
  \begin{minipage}[c]{0.47\linewidth}
    \includegraphics[width=5cm,height=3cm]{\figpath/candidate\_repair\_ex1\_before}
  \end{minipage}
  \begin{minipage}[c]{0.47\linewidth}
    \includegraphics[width=5cm,height=3cm]{\figpath/candidate\_repair\_ex1\_after\_2}
  \end{minipage}

  \caption{\footnotesize
  Local repair of candidate lower bound $\candf:n\mapsto n^{1.25}+10$
  on the solution to the equation
  $\Phi(f)(n)=\ite(n=0,1,2f(\lfloor n/3 \rfloor)+5f(\lfloor n/6 \rfloor)+n)$.
  Plots represent
  $\fsol$ (black), $\candf$ (blue, left), $\Phi\candf$ (red, left),
  $\Psi\candf$ (blue, right), and $\Phi\Psi\candf$ (red, right).
  $\candf$ fails the inductivity check in $0$.
  Sample evaluation of $\fsol-\candf$ allows to guess
  that $\candf$ is a correct lower bound for $n\geq 25$.
  This suggests a fix such as
  $\Psi\candf:n\mapsto\ite(n<25,n,\candf(n))$.
  $\Psi\candf$ is a correct lower bound to $\fsol$. To check this, we
  verify that $\Psi\candf(n)\leq\fsol(n)$ on the \emph{interface}
  $[25,149]$, and that $\Phi\candf\geq\candf$ for $n\geq150$ (left).
  \label{fig:local-repair-ex}
  }
  \vspace{-1em}
\end{figure}

\textbf{Additive repair -- $\Psi: f \mapsto f+g$.}
\label{subsec:repair-additive}
Additive transformations %
interact particularly well with affine equations $\Phi$, where
$(\Phi(f+g) - (f+g)) - (\Phi f - f) = \Phil g - g$ for all
for $f,g$.
In other words, to repair a candidate upper bound,
it suffices to choose a $g$ that corrects the defect of inductivity of
$f$, i.e. such that $g - \Phil g \geq \max(\vzero, \Phi f - f)$.
This is directly connected to our discussion on the
geometry of $\PostFix$ and $\PreFix$ in Section~\ref{sec:cones}, and
to the description of their cones via (generalised) ranking functions
for $\Phi$.
Indeed, elements $g$ such that $\Phil g - g \leq 0$ can be constructed
from such generalised ranking functions, and we typically have a very
good understanding of the values taken by $g - \Phil g$.
This suggests that we can attempt to choose such $g$ appropriately,
to fix $\candf$ ``just enough'', i.e. to repair the offending positive
values in $(\Phi\candf-\candf)$ without degrading too much the precision of
the new candidate.
Such choice of repair obtained by precise accounting of the offending
numerical values in $(\Phi\candf-\candf)$ may be called a ``quantitative repair
strategy''.

The interested reader can ponder (precise)
repair of the following candidate in the simple case of our running
example, obtained numerically, and incorrect because of an unfortunate
rounding.
  {\vspace{-.5em}\footnotesize
  \[\candf(n,c)=\begin{cases*}
       \texttt{0.5*n**2 + 3.47*n + 297.02} & if $n>0 \wedge c\geq 100$\\[-2pt]
       \texttt{0.5*n**2 + 3.47*n + 2.97*c} & if $n>0 \wedge c <   100$\\[-2pt]
       \texttt{c}                         & if $n=0$,
    \end{cases*}\vspace{-.5em}\]}

\noindent For the prospect of future implementation, we note that in our manual
experiments, it was particularly useful to have access to both
\emph{piecewise} $g$ (e.g. provided by~\cite{Urban-phd} but
not~\cite{BagnaraMPZ10TR}) to localise the fix, and to have access to
large \emph{sets} of $g$ (e.g. as in~\cite{BagnaraMPZ10TR} but
not~\cite{Urban-phd}) to be able to choose and combine them. Automated
inference and combination of such $g$ is an interesting direction for
future work.\\

\textbf{Repair by precompositions: shift and rescale -- $\Psi:f\mapsto f\circ\phi$.}
\label{subsec:repair-precomp}
A final convenient repair strategy is to simply shift or rescale the
graph of the candidate $\candf$, i.e. to apply a precomposition
$\Psi:f\mapsto f\circ\phi$.
Compared to additive repair, this allows to more naturally conserve
and use the shape of the candidate expression $\candf$.
Moreover, compared to local repair, which can be difficult to apply to
some multivariate cases if $\candf$ is verified in an infinite but not
cofinite set, repair by precomposition can often eliminate this issue.
Relevant examples of shifts can be observed in two dimensions for
\texttt{\openzipname}, and for non-linear recursion, with speedups, as
already presented on \texttt{\exptwoname}.

To give a feeling of the kind of safety conditions that can be
computed for such $\Psi$,
consider the example of a linear-recursive program with a single
variable, i.e. an equation $\Phi:=f\mapsto f\circ\varphi + P$
(we allow partial functions $\varphi:\nn\rightharpoonup\nn$, with
default value $0$ for $f\circ\varphi$). Then, for $\Psi =
-\circ\phi$, when both sign in the following tuple are identical, then
$(\Phi\Psi f - \Psi f) - (\Phi f - f)$ has the same sign.
{\vspace{-.5em}%
$$\big(\sgn(\Delta^2 f)\cdot\sgn(\id-\varphi)\cdot\sgn(\id-\phi),\;\sgn(\Delta f)\cdot\sgn(\id-\varphi)\cdot\sgn(\Delta(\id-\phi))\big)\vspace{-.5em}$$
}

\noindent Such results can be conveniently obtained via computations of
discrete calculus and operator algebra, focusing on commutators
($[a,b]:=ab-ba$) for basic operators built on discrete derivatives,
pre/postcomposition and products.

\section{Implementation (of the optimisation-based approach) and Experimental Evaluation}
\label{sec:implementation-and-evaluation}

As a proof-of-concept of the potential of our order-theory viewpoint
on recurrence equations for cost analysis, we have implemented the
instantiation described in Section~\ref{subsec:optim-approach},
which is submitted as the artifact for the paper.
We have performed an experimental comparison with state-of-the-art cost
analysers and recurrence solvers, which demonstrates that inductive
bounds more precise than the bounds obtained by other tools exist and
can be efficiently discovered.

\paragraph{\textbf{Prototype.}}

\begin{figure}
  \centering
\begin{adjustwidth}{-2em}{0em}
\resizebox{1.0\linewidth}{!}{
   \pgfdeclarelayer{background}
\pgfdeclarelayer{foreground}
\pgfsetlayers{background,main,foreground}

\tikzstyle{io}  = [
  chamfered rectangle, chamfered rectangle corners=north west,
  draw=cyan!80!black!100, fill=cyan!20,
  minimum width=3cm, minimum height=2.5cm,
  text centered, font=\sffamily
]

\tikzstyle{output-large}  = [
  chamfered rectangle, chamfered rectangle corners=north west,
  draw=red!80!black!100, fill=red!20,
  minimum width=4cm, minimum height=7cm,
  text centered, font=\sffamily
]

\tikzstyle{output}  = [
  chamfered rectangle, chamfered rectangle corners=north west,
  draw=red!80!black!100, fill=red!20,
  minimum width=3cm, minimum height=2cm,
  text centered, font=\sffamily
]

\tikzstyle{verif-output}  = [
  chamfered rectangle, chamfered rectangle corners=north west,
  draw=black!100, fill=black!30,
  minimum width=2cm, minimum height=1cm,
  text centered, font=\sffamily
]

\tikzstyle{large-process}=[
  rectangle, rounded corners,
  fill=yellow!20, draw=black!50,
  minimum width=11cm, minimum height=8cm,
  thick,
  font=\sffamily
]

\tikzstyle{process-of-large}=[
  rectangle, rounded corners,
  draw=green!50!black!100, fill=green!40,
  minimum width=3cm, minimum height=4cm,
  thick,
  font=\sffamily
]

\tikzstyle{medium-process}=[
  rectangle, rounded corners,
  fill=yellow!20, draw=black!50,
  minimum width=4cm, minimum height=3.5cm,
  thick,
  font=\sffamily
]

\tikzstyle{unimplemented-process}=[
  rectangle, rounded corners,
  draw=black!50, fill=green!30!black!10,
  minimum width=2.5cm, minimum height=2cm,
  densely dashed,
  font=\sffamily
]

\tikzstyle{io-component}=[
  rectangle, chamfered rectangle corners=north west,
  draw=orange!50!black!100, fill=orange!40,
  minimum width=2.5cm, minimum height=1cm,
  thick,
  text centered, font=\sffamily
]

\tikzstyle{process-component}=[
  rectangle, rounded corners,
  draw=orange!50!black!100, fill=orange!40,
  minimum width=2.5cm, minimum height=0.75cm,
  thick,
  text centered, font=\sffamily
]

\tikzstyle{unimplemented-process-component}=[
  rectangle, rounded corners,
  draw=orange!50!black!100, fill=orange!30!black!30,
  minimum width=2.25cm, minimum height=0.75cm,
  densely dashed,
  text centered, font=\sffamily
]

\tikzstyle{decision}=[
  diamond,
  draw=green!50!black!100, fill=green!40,
  minimum width=0.5cm, minimum height=0.5cm,
  text centered, font=\sffamily
]

\tikzstyle{arrow} = [thick,->]

\begin{tikzpicture}
  [every node/.style={align=center}, >=latex, scale=1]

\node (recurrence) [io]
   {\Large\textbf{Recurrence} \\ [1mm] \Large\textbf{Relation} \\ [1mm]
   \large$\begin{pmatrix}\texttt{Find }\candf\texttt{ s.t.}\\[1mm]\fsol\leq\candf\end{pmatrix}$};

\node (operator) [io, below=0.5cm of recurrence]
   {{\huge $\Phi$} \\ [2mm]
   \large$\begin{pmatrix}\texttt{Find }\candf\texttt{ s.t.}\\[1mm]\Phi\candf\leq\candf\end{pmatrix}$};

\node (guesser) [large-process, right=0.5cm of operator, yshift=1.5cm] { };
\path [color=black, font=\sffamily] (guesser.north)+(0,-2.5em) node (guesser-txt)
      {\Large \textbf{Guesser}};

\node (training) [process-of-large, right=1cm of operator, minimum height=5.5cm,yshift=1.5cm] { };
\path [color=black, font=\sffamily] (training.north)+(0,-2em) node (training-txt)
      {\large \textbf{Training}};
\node (execution) [process-component, below=0.1cm of training-txt]
      {\large \textbf{Execution}};
\node (input-generation) [process-component, below=0.3cm of execution]
      {\large \textbf{Input}\\\large Generation\\\large Strategies};
\node (templates) [io-component, below=0.3cm of input-generation]
      {\large \textbf{Templates}};

\node (featselect) [process-of-large, right=0.5cm of training] { };
\path [color=black, font=\sffamily] (featselect.north)+(0,-2em) node (featselect-txt)
      {\large \textbf{Feature} \\[1mm] \large \textbf{Selection}};
\node (execution) [process-component, below=0.3cm of featselect-txt]
      {\large \textbf{QP}\\[2mm] \large w/ Lasso};

\node (finalguess) [process-of-large, right=0.5cm of featselect] { };
\path [color=black, font=\sffamily] (finalguess.north)+(0,-2em) node (finalguess-txt)
      {\large \textbf{Final} \\[1mm] \large \textbf{Guess}};
\node (execution) [process-component, below=0.3cm of finalguess-txt]
      {\large \textbf{QP}\\[2mm] \large no Lasso};

\node (rounding) [medium-process, below=1cm of guesser] { };
\path [color=black, font=\sffamily] (rounding.north)+(0,-2em) node (rounding-txt)
      {\large \textbf{Rounding}};
\node (continued-fractions) [process-component, below=0.3cm of rounding-txt]
      {\large \textbf{Continued}\\[2mm] \large \textbf{Fractions}};
\node (rounding-dots) [color=black, font=\sffamily, below=0.3 cm of continued-fractions]
      {\huge \textbf{$\cdots$}};

\node (cand-float) [io, left=0.5cm of rounding]
   {\Large$\candf=f_{\va}$\\[2mm]\large$\va\in(\texttt{float})^\numfeatures$};

\node (cand-rat) [io, right=0.5cm of rounding]
   {\Large$\candf=f_{\va}$\\[2mm]\Large$\va\in\qq^\numfeatures$};

\node (checker) [medium-process, right=2cm of guesser, minimum width = 4cm] { };
\path [color=black, font=\sffamily] (checker.north)+(0,-2em) node (checker-txt)
      {\Large\textbf{Checker}};
\node (cas) [process-component, below=0.5cm of checker-txt]
      {\Large\textbf{CAS}};

\node (repair) [unimplemented-process, below=2cm of checker]
   {\large Repair\\[1mm]\large or\\[1mm]\large Refine};

\node (verif-result) [output-large, right=1cm of checker] {};
\path [color=black, font=\sffamily] (verif-result.north)+(0,-2.5em) node (verif-result-txt)
      {\Large \textbf{Verification}\\[2mm]\Large \textbf{Info}};
\node (checked) [verif-output, below=0.4cm of verif-result-txt, fill=green!80]
      {\Large Checked};
\node (counterex) [verif-output, below=0.3cm of checked, fill=red!80]
      {\Large Counterexample};
\node (idontknow) [verif-output, below=0.3cm of counterex]
      {\Large I don't know};

\node (cand-result) [output] at (verif-result|-cand-rat)
      {\Large \textbf{Executable}\\[2mm]\Large \textbf{Output Bound}};

\path (cand-result)--(verif-result) node (plus-results)
      [color=black, font=\sffamily, midway]{\Huge\textbf{+}};

\path[arrow] (operator)   edge (guesser.west |- operator);
\path[arrow] (recurrence) edge (operator)
             (training)   edge (featselect)
             (featselect) edge (finalguess)
             (cand-float) edge (rounding)
             (rounding)   edge (cand-rat);
\draw[arrow] (guesser.east) -- ++(0.5cm,0cm) node (pathp1) {}
             -- ++(0,-4.5cm) node (pathp2) {}
             -- (operator |- pathp2) node (pathp3) {}
             -- (pathp3 |- cand-float) -- (cand-float);
\path[arrow] (checker.east) edge (checked.west)
             (checker.east) edge (counterex)
             (checker.east) edge (idontknow.west);
\draw[arrow] (cand-rat.east) -- ++(1.65cm,0cm) node (pathq1) {}
             -- (repair |- pathq1) node (pathq2) {}
             -- (cand-result);
\draw[arrow] (pathq1 |- cand-rat) -- (pathq1 |- checker) -- (checker);
\path[arrow, dashed] (cand-result.north west)  edge (repair)
                     (verif-result.south west) edge (repair)
                     (repair)       edge (checker);

\draw[very thick, {Latex[length=2.5mm,width=2.5mm]}-]
   (featselect.south)+(-4.8209mm,-2mm) arc (130:410:7.5mm);

\end{tikzpicture}
}
  \caption{Architecture of our optimisation-based recurrence solver.
    \label{fig:implem-arch}
  }
\end{adjustwidth}
\end{figure}
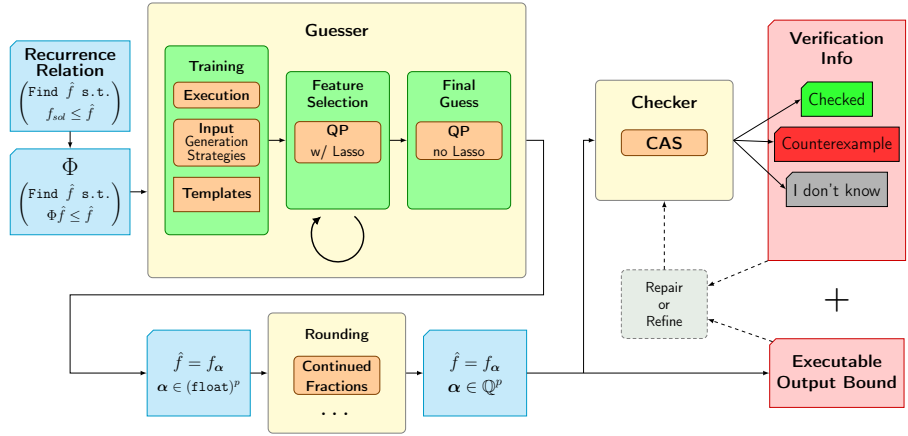

\begin{figure}
  \centering

\begin{tabular}{m{5cm} | m{9cm}}
\begin{lstlisting}[language=Python, basicstyle=\scriptsize\ttfamily]
def f(n,c):
  if n>0 and c>=100:
    return f(n-1, 0) + n + 300
  if n>0 and c<100:
    return f(n-1, c+1) + n
  return c
\end{lstlisting}
&
\begin{lstlisting}[language=Python, basicstyle=\scriptsize\ttfamily]
def f_ub(n, c):
  if ((n > 0) and (c >= 100)):
    return 1/2*n*n + 701/202*n + 30000/101*1
  elif ((n > 0) and (c < 100)):
    return 1/2*n*n + 701/202*n + 300/101*c
  else:
    return c

# ===========================================
# Result: {}.
# Candidate checked (no counter-examples).
# ===========================================

def f_lb(n, c):
  if ((n > 0) and (c >= 100)):
    return 1/2*n*n + 701/202*n + 100*1
  elif ((n > 0) and (c < 100)):
    return 1/2*n*n + 701/202*n + 300/101*c \
           + -19900/101*1
  else:
    return c

# ===========================================
# Result: {}.
# Candidate checked (no counter-examples).
# ===========================================
\end{lstlisting}
\end{tabular}
\caption{Input recurrence (left) corresponding to
  Equation~\ref{eq:running-ex-1}, Section~\ref{sec:running-example},
  and output (right), including closed-form bound functions, of our
  prototype using \texttt{lden} rounding strategy (based on continued
  fractions) and training sample $\inputset=[0,300]\times[0,300]$.
\label{fig:input-output-py}
  }
\end{figure}

\noindent
The full pipeline of our prototype is described in
Fig.~\ref{fig:implem-arch}. Inputs and outputs are given as Python
code, decorated with comments for the verification results, as
illustrated in Fig.~\ref{fig:input-output-py}.

Equation evaluation is sped up by memoisation decorators.
Operators are constructed by AST manipulations.
Inputs and outputs are executable and can be connected with other
analyses.

We use the \texttt{cvxopt}~\cite{cvxopt} library for
(\texttt{float}-based, iterative) QP solving, and the
\texttt{Mathematica} CAS solver~\cite{mathematica-v13-2} to decide
function inequality.
The verification step can be turned off if
\texttt{Mathematica} is not available on the user's machine, or can be
performed separately on candidates obtained by other means for further
experimentation.
We augment these off-the-shelf tools with simple
heuristics, e.g. we allow the CAS to discover extra simplification
opportunities for divide-and-conquer benchmarks by hinting towards
case reasoning on the value on inputs modulo a small product of
primes.

In the ``guess'' stage, several rounds of feature selection are
performed, which restrict the set of base functions to the most
relevant ones, using only a portion of the training sample to
accelerate the process. This diminishes the dimension of the problem,
which tempers numerical errors and speeds up the final guess, where we
allow more QP iterations to improve precision.

Additionally, to recover candidates represented with exact rational
coefficients from floating-point versions of them, we allow the user
to choose among several heuristics that we call \emph{rounding
strategies}, including a version of the
\texttt{fractions.limit\_denominator} function in \texttt{Python}'s
standard library, modified so as to obtain best lower and upper
rational bounds (of bounded denominator) of any floating-point number,
using continued fraction expansions~\cite{Kimberling1997,Lang1995}.
Note that, in principle, we could apply a more complete method by
using approaches such as \cite{exactLP07,Monniaux_CAV09}, which
recover rational solutions to an LP problem from an approximate
\texttt{float} solution, and applying them to
obtain a feasible solution of our QP problem close to the initially
suggested candidate, without including the quadratic objective
function.

\paragraph{\textbf{Benchmarks.}}

We evaluate our prototype and other tools on $21$ benchmarks organised
into five categories that we believe are representative of the
challenging features exhibited by recurrences arising from cost
analysis, capturing the essence of a much larger set of
benchmarks. For brevity, we provide a high-level overview of them in
this section and encourage the reader to interpret them through the
lens of the equation–program correspondence, as discussed in
Section~\ref{sec:running-example}, since each equation can be viewed
as the size/cost abstraction of multiple programs. Nevertheless, a
full description of these benchmarks, in equation format, is given in
Appendix~\ref{sec:benchmark-tables}, Table~\ref{table:benchmarks}.

For comparison with external tools, benchmarks have been manually
translated into a suitable input format using a strategy similar
to~\cite{mlrec-tplp2024-preprint} to avoid accuracy loss from
automated translation and to better understand actual tool
capabilities. The full translations can be found in our artifact.

Category \texttt{\catwithnoise} contains benchmarks with programs
whose cost function is difficult to express exactly, because of
underlying noise that can be interpreted as rare cost events coming
from interaction with hardware or data structures. This includes
variations of our running example and a ``fill and free'' loop
monitoring its memory usage.
Category \texttt{\catnonmonot} contains benchmarks with non-monotone
solutions, which can typically arise from the particular control-flow
of imperative programs, but is less common in declarative code.
Several are inspired from~\cite{montoya-phdthesis}, and we also encode
a program with a memory leak caused by an incorrect boundary condition.
Category \texttt{\catdivandconq} presents equations that exactly encode
some divide-and-conquer programs. For them too, simple and exact
closed-form solutions are not typically available, although
asymptotic behaviours can be obtained at the expense of safe bounds
and initial values. They present recursive calls with divisions and
floor/ceil that are particularly hard for classical CAS if not helped
by abstractions.
Category \texttt{\catsuperpoly} presents simple equations with fast-growing solutions, typically better supported by CAS than by cost
analysers.
The remaining examples are in category \texttt{\catmisc}, including a
simple one for sanity check, and harder benchmarks testing diverse
features, such as oscillating subdomain paths, arithmetic computations,
nested calls, $\max$ expressions in equation or ideal solution, and
amortised cost in non-deterministic programs.

\paragraph{\textbf{Experimental Setup and Hyperparameters.}}

Experiments are performed with the following hyperparameters.
Tuning them can affect performance and precision.
We assume here that $\fsol$ has domain $\nn^\dimdom$.

For input sample generation, $\inputset$ is by default the union of
two sets, the hypercube $[0,\Bb]^\dimdom$ with $\Bb$ chosen so that it
contains around $\Nb$ elements, and a random sample of $\Nrs$ elements
in $[0,\Brs]^\dimdom$, with $\Nb=5\cdot10^4$, $\Brs=2\cdot10^3$ and
$\Nrs=2\cdot10^4$. If this leads to overflows in evaluation (if the
solution is fast-growing), we retry with small inputs $\Bb=\Brs=15$.
We use the model space of piecewise affine combination of base
functions presented in Section~\ref{subsec:optim-approach}.
For simplicity (and unlike~\cite{mlrec-tplp2024-preprint}), we select
simple sets of base functions, which can be customised.
Experiments are performed twice, first with with all multivariate
polynomials of small bounded degree (defaults to $2$),
\lrnote{Slightly embarrassing datapoint. Is ok.}
then adding all products of them with $\log_2$ in one variable. If
overflows are encountered in input generation, we use instead
polynomials, small products of exponentials (at most $2$ variables,
base at most $3$), and factorials.
By default, memoisation is enabled, lasso parameter is
$\lambda=10$, and half of the training set is sampled for feature
selection, done in $2$ rounds. For each call to the QP solver, $100$
KKT steps are performed, and we tolerate less error in feasibility
(with \texttt{cvxopt} conventions, $10^{-15}$ tolerance, close to
\texttt{float} maximal precision)
than in optimality ($10^{-10}$
absolute and $10^{-9}$ relative tolerance). \texttt{Mathematica}
verification tasks timeout at \texttt{1min}. The default max
denominator in the rounding strategy by continued fractions is $10^5$.

We perform extra experiments to account for non-affine benchmarks,
disabling inductivity constraints and enforcing positivity of the
coefficients (which enforces monotonicity of the candidate for our
templates).
Finally, we use benchmark \texttt{\noiseddthreeonename} to show the
value of additional training data, improving QP convergence and
precision, but also to suggest that implementation of repair
strategies (e.g. via ranking functions) could avoid
this time-consuming step. With default values, we obtain reasonable
but imperfect candidates. We rerun the experiment with more training
data ($\Nrs=10^6$ and $\Brs=300$), obtaining the ideal, tightest
polynomial bounds for this benchmark, although it takes $20\times$
more time.

\lrnote{Table has to be manually placed to avoid large white
  spaces, at the very final stage}
\clearpage
\newpage
\begin{table}[h]
\centering
\begin{adjustwidth}{-6em}{-3em}
{\scriptsize
 \setlength{\extrarowheight}{2.5pt}%
\begin{tabular}{|c|c|c|c|c|c|c|}
\hline
\multicolumn{2}{|c|}{\textbf{Bench}}  
& \texttt{\prototypename}
& \texttt{\linregonlyname}
& \texttt{\ramlname}
& \texttt{\cofloconame}
& \texttt{\mathematicaname}
\\ \hline \hline
\multirow{9}{*}{\rotatebox[origin=c]{90}{\texttt{\catwithnoise}}}
& \texttt{\runningex}
& \makecell[c]{
  \phantom{.}\\[-0.8em]
  $[\ssim, \ssim]$\\
  Tight in all coeffs.,\\
  but $\rO(1)$.
  }
& \makecell[c]{
  $\snotabound$\\
  In $\frac{1}{2}n^2+\rO(n)$\\
  for $n\to\infty$
  }
& \makecell[c]{
  \phantom{.}\\[-0.8em]
  $[\ssim, \sbadub]$\\
  lb: $0.5n^2+0.5n$\\
  ub: $0.5n^2+300.5n+c$
  }
& \makecell[c]{
  \phantom{.}\\[-0.8em]
  $[\sbadlb, \sTheta]$\\
  lb: $n$\\
  ub: $2n^2+\rO(n)$
}
& $\snone$
\\ \cline{2-7}
& \texttt{\noiseddtwotwo}
& \makecell[c]{
  \phantom{.}\\[-0.8em]
  $[\ssim, \ssim]$\\
  Tight in all coeffs.,\\
  but $\rO(1)$.
  }
& \makecell[c]{
  $\snotabound$\\
  In $\sTheta(n^2)$\\
  for $n\to\infty$
  }
& \makecell[c]{
  $[\sTheta,\snone]$\\
  lb: $\sTheta(n^2)$
}
& \makecell[c]{
  \phantom{.}\\[-0.8em]
  $[\sbadlb, \sTheta]$\\
  lb: $n$\\
  ub: $302n^2+\rO(n)$
}
& $\snone$
\\ \cline{2-7}
& \texttt{\fillandfree}
&
\makecell[c]{
  \phantom{.}\\[-0.8em]
  $[\fsol-\sTheta(1), \fsol+\sTheta(1)]$\\
  $[0, 0]$     ($n=0$)\\
  $[-c, 99-c]$ ($n>0$)\\[1pt]
  }
& \makecell[c]{
  $\snotabound$\\
  ($\approx\texttt{cst}-\alpha\cdot c$)
  }
& \makecell[c]{
  $[\snone, \sbadub]$\\
  ub: $n$
  }
& \makecell[c]{
  $[\snotabound, \sbadub]$\\
  lb: $-n$\\
  ub: $n$
  }
& $\snone$
\\ \cline{2-7}
& \texttt{\noiseddthreeone}
& \makecell[c]{
  \phantom{.}\\[-0.8em]
  $[\ssim, \ssim]$\\
  Tight in all coeffs.,\\
  but $\rO(1)$.
  }
& \makecell[c]{
  $\snotabound$\\
  In $\frac{1}{2}n^2+\rO(n)$\\
  for $n\to\infty$
  }
& $[\snone, \snone]$
& $[\snone, \snone]$
& $\snone$
\\ \hline \hline
\multirow{4.7}{*}{\rotatebox[origin=c]{90}{\texttt{\catnonmonot}}}
& \texttt{\incrone}
& $[\sexactnum,\sexactnum]$
& \makecell[c]{
  $\snotabound$\\
  In $\sTheta(1)$
}
& \makecell[c]{
  \phantom{.}\\[-0.8em]
  $[\sTheta, \snone]$\\
  lb: $1$
  }
& $[\sexact, \sexact]$
& $\snone$
\\ \cline{2-7}
& \texttt{\noisystrtone}
& $[\sexactnum,\sexactnum]$
& $\sexact$
& \makecell[c]{
  \phantom{.}\\[-0.8em]
  $[\snone, \ssim]$\\
  ub: $n$
  }
& $[\sexact, \sexact]$
& \makecell[c]{
  $\ssim$/$\snotabound$\\
  ``$\exists \texttt{c},\, \fsol(n)=n+\texttt{c}$''
}
\\ \cline{2-7}
& \texttt{\memorythree}
& $[\sexactnum,\sexactnum]$
& $\sexact$
& $[\snone, \snone]$
& $[\sexact, \sexact]$
& $\snone$
\\ \cline{2-7}
& \texttt{\multiphaseone}
& $[\sexact,\sexact]$
& $\sexact$
& $[\snone, \snone]$
& $[\sexact, \sexact]$
& $\snone$
\\ \hline \hline
\multirow{5}{*}{\rotatebox[origin=c]{90}{\texttt{\catdivandconq}}}
& \texttt{\binsearch}
& \makecell[c]{
    $[\sTheta, \ssim]$\\
    lb: $\approx \frac{2}{3} \log_2(n)$
  }
& \makecell[c]{
    $\snotabound$\\
    $\approx \log_2(n)$
  }
& \makecell[c]{
    $[\snone, \sbadub]$\\
    ub: $\sTheta(n)$
  }
& $[\snone, \snone]$
& \makecell[c]{
    $\snone$ without help\\
    $[\ssim, \ssim]$ with help
  }
\\ \cline{2-7}
& \texttt{\qsortone}
& $[\ssimnum, \ssim]$
& \makecell[c]{
    $\snotabound$\\
    $\approx 1.2 n\log_2(n)$
  }
& \makecell[c]{
    $[\snone, \sbadub]$\\
    ub: $\sTheta(n^2)$
  }
& $[\snone, \snone]$
& \makecell[c]{
    $\snone$ without help\\
    $[\ssim, \ssim]$ with strong help
  }
\\ \cline{2-7}
& \texttt{\divandconqmultioneone}
& \makecell[c]{
    $[\sbadlb, \sbadub]$\\
    lb: $\approx 2n\log_2(n)$\\
    ub: $\approx 1.2 n^2$
  }
& \makecell[c]{
    $\snotabound$\\
    $\approx 20 n\log_2(n)$
  }
& \makecell[c]{
    $[\sbadlb, \sbadub]$\\
    lb: $\sTheta(n)$\\
    ub: $\sTheta(n^2)$
  }
& $[\snone, \snone]$
& $\snone$
\\ \hline \hline
\multirow{7.5}{*}{\rotatebox[origin=c]{90}{\texttt{\catsuperpoly}}}
& \texttt{\expone}
& $[\sexact, \sexact]$
& $\sexact$
& \makecell[c]{
  $[\sbadlb, \snone]$\\
  lb: $\sTheta(n^d)$\\
  (Poly lb in\\selected degree)
  }
& $[\snone, \snone]$
& $\sexact$
\\ \cline{2-7}
& \texttt{\exptwo}
& $[\sexact, \sexact]$
& $\sexact$
& \makecell[c]{
  $[\sbadlb, \snone]$\\
  lb: $\sTheta(n^d)$\\
  (Poly lb in\\selected degree)
  }
& $[\snone, \snone]$
& $\sexact$
\\ \cline{2-7}
& \texttt{\fact}
& $[\sexact, \sexact]$
& $\sexact$
& \makecell[c]{
  $[\sbadlb, \snone]$\\
  lb: $\sTheta(n^d)$\\
  (Poly lb in\\selected degree)
  }
& \makecell[c]{
  $\snone$\\
  $\snotabound$ for $n\geq 2$
  }
& $\sexact$
\\ \hline \hline
& \texttt{\basicone}
& $[\sexact, \sexact]$
& $\sexact$
& $[\sexact, \sexact]$
& $[\sexact, \sexact]$
& $\sexact$
\\ \cline{2-7}
& \texttt{\sumosc}
& $[\sexact, \sexact]$
& $\sexact$
& \makecell[c]{
  $[\snotabound, \sbadub]$\\
  lb: $\frac{1}{2}y^2+\frac{3}{2}y+1$\\
  \phantom{lb: } ($\snotabound$ in $x=0$)\\
  ub: $\frac{1}{2}y^2+\frac{3}{2}y+x+1$\\
  \phantom{ub: } ($\sbadub$ in $y=0$)
  }
& \makecell[c]{
  $[\snotabound, \sbadub]$\\
  lb: $\sTheta(1)$, $\snotabound$ in $(0,1)$\\
  ub: $2y^2+xy+\rO(y)$
  }
& $\snone$
\\ \cline{2-7}
& \texttt{\divben}
& \makecell[c]{
  $[\sbadlb, \sbadub]$\\
  lb: $\sTheta(1)$\\
  ub: $\max(x-y+1,0)$
  }
& \makecell[c]{
  $\snotabound$\\
  For $x\geq y\geq 1$,\\
  $\approx 0.015x-0.025y+\rO(1)$
  }

& $[\snone, \snone]$
& \makecell[c]{
  $[\snone, \sbadub]$\\
  ub: $\max(x-y+1,0)$
  }
& \makecell[c]{
  $\snone$\\
  $\snone$ without help\\
  ``x/y'' (no int) with help
  }
\\ \cline{2-7}
\multirow{3}{*}{\rotatebox[origin=c]{90}{\texttt{\catmisc}}}
& \texttt{\nested}
& $[\sexact, \sexact]$
& $\sexact$
& $[\sexact, \sexact]$
& $[\snone, \snone]$
& $\snone$
\\ \cline{2-7}
& \texttt{\openzip}
& \makecell[c]{
  $[\sTheta, \snotabound]$\\
  Exact for $x=0\vee y=0$\\
  For $x>0\wedge y>0$,\\
  lb: $\approx \frac{1}{2}x+\frac{1}{2}y$\\
  ub: $\approx 0.99x+0.99y+\rO(1)$
  }

& \makecell[c]{
  $\snotabound$\\
  Exact for $x=0\vee y=0$\\
  For $x>0\wedge y>0$,\\
  $\approx 0.6x+0.6y+\rO(1)$
  }
& \makecell[c]{
  $[\sbadlb, \sTheta]$\\
  lb: $y$\\
  ub: $x+y$\\
  }
& $[\sexact, \sexact]$
& $\snone$
\\ \cline{2-7}
& \texttt{\mergeben}
& $[\sexactnum, \sexactnum]$
& $\sexact$
& \makecell[c]{
  $[\snone, \ssim]$\\
  ub: $x+y$\\
  }
& \makecell[c]{
  $[\sbadlb, \sexact]$\\
  lb: $\min(x, y)$\\
  (lb(bc) instead\\of lb(wc))
  }
& $\snone$
\\ \cline{2-7}
& \texttt{\loopustarjan} %
& $[\sexact, \sexact]$
& $\sexact$
& $[\snone, \sexact]$
& \makecell[c]{
  $[\sTheta, \sexact]$\\
  lb: $i$\\
  (lb(bc) instead\\of lb(wc))
  }
& $\snone$
\\ \hline
\end{tabular}}
\end{adjustwidth}
\vspace*{1em}
\caption{Experimental evaluation and comparison.\label{table:results-full}}
\end{table}

\clearpage
\newpage

\paragraph{\textbf{Legend.}}

The results of our experimental evaluation and comparison are
presented in Table~\ref{table:results-full}.
For the sake of simplicity, and for space reasons, output accuracy is
reported with only a few symbols defined below.
This is non-trivial, and information is necessarily lost in the
process, as outputs are often incomparable.
Asymptotic order of scales ($\sTheta$) are too coarse-grained for the
applications we are interested in (e.g. granularity control), and are
difficulty to define well for multivariate outputs (e.g.
\texttt{termCOMP}~\cite{termcomp19} compresses reports on $\vx$ into
reports for a single input $\sum_i|x_i|$).
At the opposite end of the spectrum, numerical accuracy scores (e.g.
$\ell_2$ error on a sample) are not robust, and do not allow
evaluation of qualitative properties (e.g. tightness of a bound).
For these reasons, we use the following list of symbols, first defined
for single variable functions $f:\nn\to\rr$, then extended for
multivariate functions, illustrated with an example in the context of
evaluating an upper bound on $\fsol(n) = n^2 - 2n - 3$.
\begin{center}
\scalebox{0.95}{\footnotesize\begin{tabular}{|c|c|c|}
  \hline
  \textbf{Symbol} &
  \textbf{Meaning} & \textbf{Example}
  \\\hline
    $\sexact$
  & \makecell[c]{\scriptsize exact solution}
  & $n^2 - 2n - 3$
  \\\hline
    $\sexactnum$
  & \makecell[c]{\scriptsize exact solution, with small\\[-3pt]
                 \scriptsize numerical errors on coefficients}
  & $1.001 n^2 - 1.998n - 2.997$
  \\\hline
    $\ssim$
  & \makecell[c]{\scriptsize classical asymptotic equivalence\\[-3pt]
                 \scriptsize $f \sim g \iff \lim_{n\to+\infty} f(n)/g(n) = 1$}
  & $n^2$
  \\\hline
    $\ssimnum$
  & \makecell[c]{\scriptsize as above, with small numerical errors,\\[-3pt]
                 \scriptsize ratio tends to a number very close to 1}
  & $1.0001 n^2$
  \\\hline
    $\sTheta$
  & \makecell[c]{\scriptsize usual big theta, bounded ratio,\\[-3pt]
                 \scriptsize not very close to 1}
  & $4 n^2$
  \\\hline
  $\sbadlb$/$\sbadub$
  & \makecell[c]{\scriptsize non-trivial lower/upper bound,\\[-3pt]
                 \scriptsize but incorrect asymptotic order}
  & $n^3$
  \\\hline
  $\snone$
  & \makecell[c]{\scriptsize trivial bound or no output}
  & $+\infty$
  \\\hline
  $\snotabound$
  & \makecell[c]{\scriptsize not a bound (unsound or does\\[-3pt]
                 \scriptsize not claim to be a bound)}
  & $\frac{1}{2}n^2$
  \\\hline
\end{tabular}}
\end{center}
Numerical errors that cause unsoundness are reported with
$\snotabound$.
There are several options to compress multivariate functions.
We choose to use the best symbol that holds ``in all directions'',
i.e. $f\,\mathcal{R}\,g$ whenever
$(f\circ u)\,\mathcal{R}\,(g\circ u)$ holds for all sequences
$u:\nn\to\dd$ such that $||u_n||$ tends to infinity (i.e. at
least one component tends to infinity).
For example, $\max(x,y)\in\sTheta(x+y)$, but $(x,y)\mapsto x+y-1$ is
only $\sbadub$ for the \texttt{merge} benchmark (consider
$u_n = (n,0)$, where $\fsol\circ u = \vzero$).

For each tool reporting both lower and upper bounds, we report its
result as an interval of symbols.
Some additional information is often helpful to interpret the results,
and is then included below this interval in each cell.

For further details, we refer the reader to
Table~\ref{table:runnex_results} in
Appendix~\ref{sec:benchmark-tables} to interpret full outputs in the
case of our running example, and to our artifact otherwise.

\paragraph{\textbf{Evaluation and Discussion.}}

Overall, the experimental results are quite promising, showing that
our approach is accurate, discovers bounds not obtained by other
tools, and is reasonably efficient.
\lrnote{$\sim$Done, but needs cleanup, polishing, possibly compression}
More specifically, these results are compared in
Table~\ref{table:results-full}
with further analysers from different paradigms and one CAS. The
features of related tools can be found
in~\cite{mlrec-tplp2024-preprint}, which provides additional
motivation for the choice of tools presented here.
Compared to (\texttt{\linregonlyname}),
i.e.~\cite{mlrec-tplp2024-preprint} with hyperparameters similar to
our prototype, we infer bounds in cases where it does not obtain any
safe approximation (e.g. categories \texttt{\catwithnoise} and
\texttt{\catdivandconq}).
\texttt{\mathematicaname}'s results support our claim that CAS
are not appropriate for the problems we tackle.
Compared to \texttt{\ramlname}, we obtain better bounds in multiple
cases. While \texttt{\ramlname} consistently obtains reasonable bounds
for the features it is designed for, and handles some difficult
benchmarks (e.g. \texttt{\nestedname}), it is hindered by its
limitation to global polynomial templates. Furthermore, it does not
support integer comparisons, and list encodings are not enough to
make it succeed consistently on categories such as
\texttt{\catwithnoise} and \texttt{\catdivandconq}.
We also obtain overall better results than \texttt{\cofloconame} on our
benchmarks, in part because of the coarse overapproximations it performs
(e.g. in \texttt{\catwithnoise} or \texttt{\catmisc}) and its limited
support for non-linear recursion (\texttt{\catdivandconq},
\texttt{\catsuperpoly}).

Our prototype is also reasonably efficient. Runtimes depend on the
amount of training data, but we can report the time taken for the
candidates of Table~\ref{table:results-full}, with experiments run on
a small Linux laptop (\texttt{1.1GHz Intel Celeron N4500} CPU,
\texttt{4~GB}, \texttt{2933~MHz DDR4} memory). All of them (both
bounds combined, timeouts excluded) can be obtained and checked in
\mhnote{For final version we probably want to report times with fewer
  significant digits, since execution times are rarely repeatable with
  very high precision.}
\texttt{9.047s} to \texttt{47.475s}, around half of which is taken by
verification, at the exception of \texttt{\noiseddthreeonename} for
which ideal tight bounds are only obtained with extra training data
requiring \texttt{22min}, although reasonable candidates are obtained
in less than \texttt{1min}.

However, it must be noted that, unlike \texttt{\ramlname} and
\texttt{\cofloconame}, the optimisation-based approach implemented
here is susceptible to numerical errors and sensible to hyperparameter
choice, so that it can require extra analyses or domain knowledge for
fine-tuning.
Because of this, \texttt{\cofloconame} outperforms our prototype in
category \texttt{\catnonmonot}, where it obtains exact solutions in
cases where our prototype only infer (accurate) approximations.
For example, in the memory leak benchmark, we only find bounds
$[\frac{4589}{4599},\frac{4609}{4599}]$ for $s=0$ instead of $[1,1]$,
and in the \texttt{\openzipname} benchmark, a counter-example caused
by numerical errors is found for our candidate upper bound.
We may note that approaches mentioned in
Section~\ref{subsec:alternative-approaches}~and~\ref{sec:repair}, not
implemented in this prototype, could alleviate this limitation. A full
implementation and evaluation of these techniques is left for future
work.

Overall, we can conclude that our approach performs quite well.  Most
importantly, it demonstrates the existence of bounds that are
inductive in the sense of this paper, which cannot be
discovered by current
methods,
as well as the feasibility of finding them, showing the potential of
the order-theoretical viewpoint introduced in this paper.

\section{Related Work}
\label{sec:related-work}

\textbf{Exact Recurrence Solvers.}
Centuries of work have created a large body of
knowledge, %
with classical results including closed forms of \emph{C-recursive
sequences}~\cite{Petkovsek2013sketch},
or more recently for hypergeometric solutions of \emph{P-recursive
sequences}~\cite{Petkovsek92}.
Several algorithms have been implemented
in CAS, based on mathematical frameworks such as
\emph{Difference Algebra}~\cite{Karr81,Levin08}, \emph{Finite
Calculus}~\cite{ConcreteMathematics} or \emph{Generating
Functions}~\cite{Flajolet09} mixed with template-based methods.
However, as explained in the introduction, these techniques have a
different focus than ours.

\textbf{Cost Analysis via (Generalised) Recurrence Equations.}
Since the seminal work of Wegbreit~\cite{Wegbreit75}, implemented in
\textsc{Metric}, multiple authors have tackled the problem of cost
analysis of programs (either logic, functional, or imperative) by
automatically setting up and solving recurrence equations.
This includes %
early work such
as~\cite{Le88-short,Rosendahl89} and the
\ciao line of work (c.f.\ references in introduction), %
as well as tools such as \texttt{PUBS}~\cite{AlbertAGP11a-short} and
\texttt{Cofloco}~\cite{montoya-phdthesis},
which emphasise the
shortcomings of using too simple recurrence equations.
They introduce
the vocabulary of \emph{cost relations} to focus on properties such as
non-determinism, inequations, non-monotonic solutions and complex
control flow.
Additionally, recent work in \texttt{KoAT} also uses recurrence solving~\cite{LommenGiesl23}.%

\textbf{Recurrence Solving for Invariant Synthesis and Verification.}
In other areas of program analysis, important lines of work use
recurrences as key ingredients in generation of invariants, with
applications such as loop summarisation.
This is well illustrated by the work initiated
in~\cite{kovacs-phdthesis}, with a recommended overview available
in~\cite{kovacs23}.
This is also key to compositional recurrence
analysis~\cite{kincaid2018}, which has a stronger focus on
abstraction-based techniques (e.g., the \emph{wedge} abstract domain).
We can also note that \emph{size} analysis can sometimes be seen as a
form of numerical invariant synthesis, as illustrated
by~\cite{LommenGiesl23}.

\textbf{Other Approaches to Static Cost Analysis.}
Automatic static cost analysis of programs is an active field of
research, and approaches with different abstractions than recurrence
equations
have been proposed. We will simply mention here the
potential method implemented in
\texttt{RaML}~\cite{DBLP:journals/toplas/0002AH12-short}, which utilises
polynomial templates in the binomial basis, and refer the reader
to~\cite{mlrec-tplp2024-preprint} for more references.

\textbf{Dynamic Inference of Invariants/Recurrences.}
The instantiation of our approach implemented in
our prototype generalises~\cite{mlrec-tplp2024-preprint}, by using
inductivity constraints to obtain bounds, and is directly related to
dynamic invariant
analysis~\cite{ernst-phdthesis,daikon-tse,nguyen2012-shortest,nguyen2014-shortest,nguyen2022:invariants-tse}.
A key difference between these works and ours, beyond different
technical solutions and underlying theory, is the setting: we do not
sample program traces but instead abstract semantics of recurrence equations 
that are intermediate abstractions of programs, obtained via size
abstractions.

\textbf{Fixpoints and/or Order for Equations and Sequences.}
To the best of %
our knowledge, our work is the first to use
both a fixpoint and an order-theoretical viewpoint to infer bounds on
the solutions to recurrence equations on their whole domain, in a way
that enables support for a large class of templates. Nevertheless, we
would like to mention other work, that we believe is related.

\lrnote{PLG/LR: try to phrase this optimally for the final version, it
  is important for the novelty claim}
\lrnote{Tentative reformulation. I do not know the appropriate balance
  between strong claims and acknowledgements of origins of ideas.}
First, while the equations-as-operators viewpoint, for the most
general sense of equation, is established in abstract interpretation
for handling \emph{semantic equations}~\cite{Bourdoncle93}, it is much
less common in more explicit numerical settings, where the goal is to
solve concrete recurrence or differential equations.
In
our context
(cost analysis in particular, but more generally for numerical
equations considered globally), we believe that our framework is
unique. %

Concepts introduced in this paper are related to the \emph{flowpipes}
developed for differential
equations~\cite{GoubaultHSCC17-short,GoubaultCAV18}.
This work also exploits a fixpoint semantics of equations viewed as
operators, but this fixpoint is fundamentally \emph{topological},
rather than \emph{order-theoretical}, using the 
Picard-Lindelöf/Cauchy-Lipschitz theorem instead of Tarski's -- it is
difficult to naturally view the Picard operator as monotone in a
global sense.
Flowpipe-based approaches do recover under and overapproximations, building
on the ``classical interval Picard-Lindelöf approach'' developed by
Nedialkov~\cite{Nedialkov99}, but the inclusion proofs for intervals
are non-obvious, and can require restriction of the time horizon, so
that only \emph{local} bounds (with respect to time) are obtained, as
opposed to the \emph{global} bounds we discuss in this paper.

In a different context, probabilistic program analysis, techniques
introduced by Karp~\cite{Karp94} have been later understood in
light of pre/postfixpoints, which enabled the approach
of~\cite{qava-pldi21}. However,
such an approach is less general than
ours, and only obtains bounds (here on probabilities) at explicit
values, rather than fully parametric, functional bounds. This order
approach appears to not have been further pursued, as recent work on
probabilistic recurrence
equations~\cite{sun-cav23} reverts to more
classical tail-bound concentration techniques (e.g. Markov
inequality).

Finally, other work has considered the idea of abstract domains of
discrete functions, in particular~\cite{Feret05-short} for (one-dimensional)
algebraic-geometric sequences, applied to filter analysis. More
generally, several abstract interpretation works explore numerical
abstract domains for non-linear
invariants~\cite{Feret04,RodriguezKapurSas04}. However, they cannot be
applied directly in our context, as transfer functions for key
operators on functions are not included, and their design can require
overcoming serious theoretical difficulties.

\section{Conclusions and Future Work}
\label{sec:conclusion}

We have introduced a
novel approach to recurrence solving in the context of cost analysis:
equations are viewed as monotone operators, which allows leveraging
classical order-theoretical results and techniques, and obtaining safe
and tight lower/upper bounds on the solutions to
complex recurrences as pre/postfixpoints.
 
The novel order-theoretical framework we have developed under this
viewpoint has not only provided new insights on the properties of
recurrences, by unveiling underlying algebraic and geometric
structures, but has also laid the groundwork for more accurate and
robust methods in cost analysis: the properties and principles we have
proved allow us to develop techniques that can be combined for
designing new solvers that infer safe overapproximations.

Indeed, as a proof-of-concept, we have implemented a prototype
instantiation of the proposed approach
that uses constrained optimisation
techniques. Our experimental evaluation showed quite promising
results: it infers tight non-linear lower/upper bounds for complex
recurrences not supported by the state-of-the-art recurrence solvers
and cost analysers.
It is particularly effective for programs with complex control flow
and for equations obtained while incorporating fine-grained cost
models, which result in multivariate solutions with non-monotonic
properties, complex recursive calls and disjunctive multiphase
behaviours. Indeed, these equations present
complex features that make them hard to analyse directly, but also do
not admit simple closed-form exact solutions, so that template-based
methods have to be extended to support inequalities while
accommodating complex piecewise and non-linear templates.

\mhnote{MH: Not strictly necessary to talk about future research if
  shortening needed (but I think length now is really OK for SAS).}
Finally, our work opens new avenues for future research.
We have presented multiple directions for designing new algorithms to
explore spaces of bounds and integrate information available locally
to generate candidate bounds, check their correctness, evaluate their
accuracy, and improve or repair them.

Additionally, while our proof-of-concept prototype is focused on dynamic
invariant generation techniques, using sample inputs conveniently
available in discrete settings, we have proposed abstraction-based
strategies that do not require such data.
Therefore, an exciting direction for future work is to apply our
viewpoint on functional equations beyond domains of integers and
codomains of reals. This would open exploration of new application
areas, e.g. cyberphysical systems, biochemical networks, and
probabilistic settings.
To fully enter these new territories, we would need to consider
support for non-monotonic equations, e.g. by introducing abstract
codomains such as intervals, and find ways to handle continuous
domains and differential operators, e.g. via non-standard analysis.
More concretely, another promising and interesting line for future work
is to integrate and implement our algorithms inside existing cost
analysis pipelines, while developing new size abstractions, able to
interact with fine-grained cost models to generate complex equations,
closely modelling cost properties and making the most of our new
recurrence solving methods.

\begin{credits}
  \subsubsection{\ackname}
This work has been partially supported by MICINN projects
PID2019-108528RB-C21 \emph{ProCode}, TED2021-132464B-I00
\emph{PRODIGY}, and FJC2021-047102-I, and the Tezos foundation.

\subsubsection{\discintname}
The authors have no competing interests to declare that are relevant
to the content of this article.
\end{credits}

\lrnote{Most references can be shortened to use less lines.}
\addcontentsline{toc}{section}{References}
\bibliographystyle{splncs04}
\bibliography{\bibpath/clip,\bibpath/general}

\newpage

\appendix

\section{Tables of Experimental Data}
\lrnote{WARNING, this changed format -- may be a false alarm, but
  there is build bug on my side}
\label{sec:benchmark-tables}

{\footnotesize
 \renewcommand{\arraystretch}{2}%
\begin{longtable}[c]{| c | c | c | p{5.5cm} | p{4.8cm} |}
\caption{Benchmarks\label{table:benchmarks}}\\
\hline
\multicolumn{3}{|c|}{\textbf{Bench}} %
& \makecell[c]{\textbf{Equation}} & \makecell[c]{\textbf{Solution}} \\ \cline{1-5} \cline{1-5}
\endfirsthead
\caption[]{Benchmarks (continued)}\\
\cline{1-5}
\multicolumn{3}{|c|}{\textbf{Bench}} %
& \makecell[c]{\textbf{Equation}} & \makecell[c]{\textbf{Solution}} \\ \cline{1-5} \cline{1-5}
\endhead
\multirow{7}{*}{\rotatebox[origin=c]{90}{\texttt{\catwithnoise}}}
 & \texttt{\runningexname}
 & \texttt{\runningexkey}
 & \gape{\scalebox{0.7}{$f(n,c)=\begin{cases*}
      f(n-1, 0)   + n + 300 & if $n>0 \wedge c\geq 100$\\[-2pt]
      f(n-1, c+1) + n       & if $n>0 \wedge c <   100$\\[-2pt]
      c                     & if $n=0$\\
    \end{cases*}$}}
 & \gape{\scalebox{0.7}{$\begin{cases*}
      \frac{n(n+1)}{2} + n + 299 + 199\lfloor\frac{n-1}{101}\rfloor\\
           \qquad \text{ if } n>0 \wedge c\geq 100\\[-2pt]
      \frac{n(n+1)}{2} + n + c + 199\lfloor\frac{n+c}{101}\rfloor\\
           \qquad \text{ if } n>0 \wedge c <   100\\[-2pt]
      c    \qquad\!\!\! \text{ if } n=0
    \end{cases*}$}}
 \\ \cline{2-5}
 & \texttt{\noiseddtwotwoname}
 & \texttt{\noiseddtwotwokey}
 & \gape{\scalebox{0.7}{$f(n,c)=\begin{cases*}
      f(n-1, 0)   + n + 300n & if $n>0 \wedge c\geq 100$\\[-2pt]
      f(n-1, c+1) + n        & if $n>0 \wedge c <   100$\\[-2pt]
      c                      & if $n=0$\\
    \end{cases*}$}}
 & \gape{\makecell[l]{
     \scalebox{0.7}{Hard to express exactly, but}\\
     \scalebox{0.7}{for $c\geq 100$, and $n\to\infty$,}\\
     \scalebox{0.7}{$=\frac{n(n+1)}{2}+\frac{300}{101}\frac{n(n+101)}{2} + \rO(1)$}
     }}
 \\ \cline{2-5}
 & \texttt{\fillandfreename} 
 & \texttt{\fillandfreekey} 
 & \gape{\scalebox{0.7}{$f(n,c)=\begin{cases*}
      f(n-1, 0)   - c & if $n>0 \wedge c\geq 99$\\[-2pt]
      f(n-1, c+1) + 1 & if $n>0 \wedge c <   99$\\[-2pt]
      0               & if $n=0$\\
    \end{cases*}$}}
 & \gape{\scalebox{0.7}{$\begin{cases*}
      (n-1)\% 100 - c & if $n>0 \wedge c\geq 99$\\[-2pt]
      (n+c)\% 100 - c & if $n>0 \wedge c <   99$\\[-2pt]
      0               & if $n=0$\\
    \end{cases*}$}}
 \\ \cline{2-5}
 & \texttt{\noiseddthreeonename}
 & \texttt{\noiseddthreeonekey} 
 & \gape{\makecell[l]{$f(n,c_1,c_2)=$\\
   \scalebox{0.6}{$\begin{cases*}
      f(n-1,    0,    0) + n + 300 + 20  & if $n>0 \wedge c_1\geq 100 \wedge c_2\geq 77$\\[-2pt]
      f(n-1,    0, c_2+1) + n + 300      & if $n>0 \wedge c_1\geq 100 \wedge c_2<77$\\[-2pt]
      f(n-1, c_1+1,    0) + n + 20       & if $n>0 \wedge c_1<100 \wedge c_2\geq 77$\\[-2pt]
      f(n-1, c_1+1, c_2+1) + n           & if $n>0 \wedge c_1<100 \wedge c_2<77$\\[-2pt]
      0                                  & if $n=0$\\
    \end{cases*}$}}}
 & \gape{\makecell[l]{
     \scalebox{0.7}{Long to express exactly, but}\\
     \scalebox{0.7}{for $c_1\geq 100$, $c_2\geq 77$ and $n\to\infty$,}\\
     \scalebox{0.7}{$=\frac{n(n+1)}{2}+\big(\frac{300}{101}+\frac{20}{78}\big)n + \rO(1)$}
     }}
 \\ \cline{1-5}
\multirow{6}{*}{\rotatebox[origin=c]{90}{\texttt{\catnonmonot}}}
 & \texttt{\incronename}
 & \texttt{\incronekey}  
 & \gape{\scalebox{0.7}{$f(x)=\begin{cases*}
      1+f(x+1) & if $x < 10$    \\[-2pt]
      1        & if $x \geq 10$ \\
    \end{cases*}$}}
 & $\max(1, 11-x)$
 \\ \cline{2-5}
 & \texttt{\noisystrtonename}
 & \texttt{\noisystrtonekey}  
 & \gape{\scalebox{0.7}{$f(x)=\begin{cases*}
      0        & if $x = 0    \lor  x = 20$    \\[-2pt]
      1+f(x-1) & if $x \neq 0 \land x \neq 20$ \\
    \end{cases*}$}}
 & \gape{\scalebox{0.7}{$\begin{cases*}
      x    & if $x < 20$    \\[-2pt]
      x-20 & if $x \geq 20$ \\
    \end{cases*}$}}
 \\ \cline{2-5}
 & \texttt{\memorythreename}
 & \texttt{\memorythreekey}  
 & \gape{\scalebox{0.7}{$f(i,n,s)=\begin{cases*}
      n+f(0,n,1)   & if $s = 0$                   \\[-2pt]
      1+f(i+1,n,s) & if $s > 0 \wedge i\leq n$    \\[-2pt]
      -2n          & if $s > 0 \wedge i > n$ \\
    \end{cases*}$}}
 &  \gape{\makecell[l]{
     \scalebox{0.7}{Encodes simple loop with memory leak. Sol:}\\%
     \scalebox{0.7}{$\begin{cases*}
      1      & if $s = 0$                   \\[-2pt]
      -i-n+1 & if $s > 0 \wedge i\leq n$    \\[-2pt]
      -2n    & if $s > 0 \wedge i > n$ \\
    \end{cases*}$}}}
 \\ \cline{2-5}
 & \texttt{\multiphaseonename}
 & \texttt{\multiphaseonekey}  
 & \gape{\scalebox{0.7}{$f(i,n,r)=\begin{cases*}
      0            & if $i \geq n$          \\[-2pt]
      1+f(0,n,r-1) & if $i < n \land r > 0$ \\[-2pt]
      1+f(i+1,n,r) & if $i < n \land r = 0$ \\
    \end{cases*}$}}
 & \gape{\scalebox{0.7}{$\begin{cases*}
      0   & if $i \geq n$          \\[-2pt]
      n+r & if $i < n \land r > 0$ \\[-2pt]
      n-i & if $i < n \land r = 0$ \\
    \end{cases*}$}}
 \\ \cline{1-5}
\multirow{4}{*}{\rotatebox[origin=c]{90}{\texttt{\catdivandconq}}}
 & \texttt{\binsearchname}
 & \texttt{\binsearchkey} 
 & \gape{\scalebox{0.7}{$f(n) =
    \begin{cases*}
      f\Big(\Big\lfloor\frac{n}{2}\Big\rfloor\Big) + 1 & if $n > 0$ \\
      0                                                & if $n = 0$ \\
    \end{cases*}$}}
 & \gape{\scalebox{0.7}{$\begin{cases*}
      0                                 & if $n = 0$ \\[-2pt]
      1+\big\lfloor\log_2(n)\big\rfloor & if $n > 0$
    \end{cases*}$}}
 \\ \cline{2-5}
 & \texttt{\qsortonename}
 & \texttt{\qsortonekey}   
 & \gape{\scalebox{0.7}{$f(n) =
    \begin{cases*}
      f\Big(\Big\lfloor\frac{n-1}{2}\Big\rfloor\Big) + f\Big(\Big\lceil\frac{n-1}{2}\Big\rceil\Big) + n & if $n > 1$ \\
      1                                                                                                 & if $n \leq 1$ \\
    \end{cases*}$}}
 & \gape{\makecell[l]{
     \scalebox{0.7}{Hard to express exactly, but}\\
     \scalebox{0.7}{$\sim n\log_2(n)$}
     }}
 \\ \cline{2-5}
 & \texttt{\divandconqmultioneonename}
 & \texttt{\divandconqmultioneonekey} 
 & \gape{\scalebox{0.7}{$f(n) =
    \begin{cases*}
      2 f\Big(\Big\lfloor\frac{n}{3}\Big\rfloor\Big) + 5 f\Big(\Big\lfloor\frac{n}{6}\Big\rfloor\Big)+ n & if $n > 0$ \\
      1                                                & if $n = 0$ \\
    \end{cases*}$}}
 & \gape{\makecell[l]{
     \scalebox{0.7}{Hard to express exactly, but}\\
     \scalebox{0.7}{$n^{1.2764...}$}
     }}
 \\ \cline{1-5}
\multirow{3.3}{*}{\rotatebox[origin=c]{90}{\texttt{\catsuperpoly}}}
 & \texttt{\exponename}
 & \texttt{\exponekey}  
 & \gape{\scalebox{0.7}{$f(n) =
    \begin{cases*}
      2f(n-1) & if $n > 0$ \\[-2pt]
      3       & if $n = 0$ \\
    \end{cases*}$}}
 & $3\times 2^n$
 \\ \cline{2-5}
 & \texttt{\exptwoname}
 & \texttt{\exptwokey} 
 & \gape{\scalebox{0.7}{$f(n) =
    \begin{cases*}
      2f(n-1)+1 & if $n > 0$ \\[-2pt]
      3         & if $n = 0$ \\
    \end{cases*}$}}
 & $4\times 2^n - 1$
 \\ \cline{2-5}
 & \texttt{\factname}
 & \texttt{\factkey} 
 & \gape{\scalebox{0.7}{$f(n) =
    \begin{cases*}
      nf(n-1) & if $n > 0$ \\[-2pt]
      1       & if $n = 0$ \\
    \end{cases*}$}}
 & $n!$
 \\ \cline{1-5}\pagebreak
\multirow{9}{*}{\rotatebox[origin=c]{90}{\texttt{\catmisc}}}
 & \texttt{\basiconename}
 & \texttt{\basiconekey}  
 & \gape{\scalebox{0.7}{$f(n) =
    \begin{cases*}
      f(n-1)+1 & if $n > 0$ \\[-2pt]
      0        & if $n = 0$ \\
    \end{cases*}$}}
 & $n$
 \\ \cline{2-5}
 & \texttt{\sumoscname}
 & \texttt{\sumosckey}  
 & \gape{\scalebox{0.7}{$f(x,y) =
    \begin{cases*}
      f(x-1,y)   + 1 & if $x > 0 \land y > 0$ \\[-2pt]
      f(x+1,y-1) + y & if $x = 0 \land y > 0$ \\[-2pt]
      1              & if $y = 0$ \\
    \end{cases*}$}}
 & \gape{\scalebox{0.7}{$\begin{cases*}
      1                                 & if $y = 0$ \\[-2pt]
      \frac{1}{2}y^2 + \frac{3}{2}y + x & if $y > 0$ \\[2pt]
    \end{cases*}$}}
 \\ \cline{2-5}
 & \texttt{\divname}
 & \texttt{\divkey} 
 & \gape{\scalebox{0.7}{$f(x,y) =
    \begin{cases*}
      f(x-y,y) + 1 & if $x \geq y \land y > 0$ \\[-2pt]
      0            & if $x < y    \land y > 0$   \\
    \end{cases*}$}}
 & $\Big\lfloor{\frac{x}{y}}\Big\rfloor$
 \\ \cline{2-5}
 & \texttt{\nestedname}
 & \texttt{\nestedkey} 
 & \gape{\scalebox{0.7}{$f(n) =
    \begin{cases*}
      f(f(n-1)) + 1 & if $n > 0$ \\[-2pt]
      0             & if $n = 0$ \\
    \end{cases*}$}}
 & $n$
 \\ \cline{2-5}
 & \texttt{\openzipname}
 & \texttt{\openzipkey}  
 & \gape{\scalebox{0.7}{$f(x,y) =
    \begin{cases*}
      f(x-1,y-1) + 1  & if $x > 0 \land y > 0$ \\[-2pt]
      f(x,y-1)   + 1  & if $x = 0 \land y > 0$ \\[-2pt]
      f(x-1,y)   + 1  & if $x > 0 \land y = 0$ \\[-2pt]
      0               & if $x = 0 \land y = 0$ \\
    \end{cases*}$}}
 & $\max(x,y)$
 \\ \cline{2-5}
 & \texttt{\mergename}
 & \texttt{\mergekey} 
 & \gape{\makecell[l]{$f(x,y)=$\\
   \scalebox{0.7}{$\begin{cases*}
      \max(f(x-1,y), f(x,y-1)) + 1 & if $x > 0 \land y > 0$ \\[-2pt]
      0                            & if $x = 0 \lor  y = 0$ \\
    \end{cases*}$}}}
 & \gape{\scalebox{0.7}{$\begin{cases*}
      x+y-1 & if $x>0 \land y>0$ \\[-2pt]
      0     & if $x=0 \lor  y=0$ \\
    \end{cases*}$}}
 \\ \cline{2-5}
 & \gape{\makecell[l]{\texttt{\loopustarjanname}
 }} 
 & \gape{\makecell[l]{\texttt{\loopustarjankey}}}
 & \gape{\makecell[l]{$f(i,j)=$\\
   \scalebox{0.7}{$\begin{cases*}
      0                                & if $i=0$ \\[-2pt]
      1+\max(f(i-1,j+1), 1+f(i-1,j+2)) & if $i>0$
    \end{cases*}$}}}
 & \gape{\makecell[l]{
     Inlined single-function encoding\\
     of tarjan benchmark from~\cite{Sinn17-loopus,mlrec-tplp2024-preprint}\\
     $2i$ (worst-case)}}
 \\ \cline{1-5}
\end{longtable}}
\begin{table}
\centering
\footnotesize
 \setlength{\extrarowheight}{2.5pt}%
\begin{tabular}{| c | c | c |}
\hline
\texttt{\prototypename}
& \texttt{\ramlname}
& \texttt{\cofloconame}
\\ \hline
\gape{\makecell[l]{
    Upper bound:\\
    \scalebox{0.7}{$\begin{cases*}
      \frac{1}{2}n^2+\frac{701}{202}n+\frac{30000}{101}\\
           \qquad \text{ if } n>0 \wedge c\geq 100\\[-2pt]
      \frac{1}{2}n^2+\frac{701}{202}n+\frac{300}{101}c\\
           \qquad \text{ if } n>0 \wedge c <   100\\[-2pt]
      c \qquad\!\!\! \text{ if } n=0\\
    \end{cases*}$}\\
    Lower bound:\\
    \scalebox{0.7}{$\begin{cases*}
      \frac{1}{2}n^2+\frac{701}{202}n+100\\
           \qquad \text{ if } n>0 \wedge c\geq 100\\[-2pt]
      \frac{1}{2}n^2+\frac{701}{202}n+\frac{300}{101}c-\frac{19900}{101}\\
           \qquad \text{ if } n>0 \wedge c <   100\\[-2pt]
      c \qquad\!\!\! \text{ if } n=0\\
    \end{cases*}$}\\
    }}
&
\gape{\makecell[l]{
    Upper bound:\\
    $\frac{1}{2}n^2+\frac{601}{2}n+c$\\
    \phantom{.}\\
    Lower bound:\\
    $\frac{1}{2}n^2+\frac{1}{2}n$\\
    }}
&
\gape{\makecell[l]{
    Upper bound:\\
    \scalebox{0.7}{$\begin{cases*}
      2n^2+300n+100 & if $n\geq 1$\\[-2pt]
      c             & if $n=0$\\
    \end{cases*}$}\\
    \phantom{.}\\
    Lower bound:\\
    \scalebox{0.7}{$\begin{cases*}
      n & if $n\geq 1$\\[-2pt]
      c                     & if $n=0$\\
    \end{cases*}$}\\
    }}
\\\hline
  \texttt{\linregonlyname}
& \texttt{\srname}~\cite{mlrec-tplp2024-preprint}
& \texttt{\mathematicaname}
\\\hline
\gape{\scalebox{0.7}{$\begin{cases*}
      \frac{1}{2}n^2 + \frac{17747}{5166}n + \frac{1874420}{8911}\\
      \qquad \text{ if } n>0 \wedge c\geq 100\\[-2pt]
      \frac{1}{2}n^2 + \frac{12169}{3503}n + \frac{7429}{2570}c -\frac{112277}{1182}\\
      \qquad \text{ if } n>0 \wedge c <   100\\[-2pt]
      c \qquad\!\!\! \text{ if } n=0\\
    \end{cases*}$}}
&
\gape{\scalebox{0.7}{$\begin{cases*}
      \texttt{0.49929*n**2 + 3.4857*n + 222.34}\\
      \qquad \text{ if } n>0 \wedge c\geq 100\\[-2pt]
      \texttt{0.49960*n**2 + 3.6287*n + 2*c    - 66.422}\\
      \qquad \text{ if } n>0 \wedge c <   100\\[-2pt]
      \texttt{c} \qquad\!\!\! \text{ if } n=0\\
    \end{cases*}$}}
&
\texttt{None}
\\\hline
\end{tabular}
\vspace{1em}
\caption{Full results obtained by each tool for the running example.
  Results for the symbolic regression presented in~\cite{mlrec-tplp2024-preprint}
  is included for completeness: it does not manage to discover the
  exact solution either, even if floor and division nodes are allowed.
  \label{table:runnex_results}
}
\end{table}
 \clearpage
\newpage

\section{Domains of $\boundaries$-bounds}
\label{sec:A-bounds}

In this appendix, we introduce an abstract domain framework,
which proposes to obtain abstractions, i.e. safe approximations, of
general functions in $\dd\to\rr$ by ``simple'' functions from a
selected set $\boundaries\subseteq\dd\to\rri$ that we call
\emph{boundary functions}.
This concept can be motivated from at least three points of view.
\begin{enumerate}
  \item Simplification of analysis output, to report it to the user in
    an interpretable format. This is explored in technical
    report~\cite{Bagnara-q344} as part of the \texttt{PURRS}
    project~\cite{BagnaraPZZ05}, but has not been pursued to the best of
    the authors' knowledge.
  \item Simplification of analysis output, to be usable by
    further analyses. In our context, this is particularly relevant to
    decide inequality of functions, e.g. to reduce functions to
    classes supported by CAS. This would provide a unified framework
    to works such as~\cite{Albert2015-cost-func-comp,resource-verification-tplp18-shortest}.
  \item Finally, this can be considered more literally, and used
    directly to abstract lattices of (templates of) candidates to
    improve the search of pre/postfixpoints, e.g. by abstract
    iteration as mentioned in
    Section~\ref{subsubsec:abstract-iteration}.
\end{enumerate}

\noindent To achieve these goals, we may try to use elements of
$\boundaries$ to \emph{under and overapproximate} general functions,
in an intuition similar to \emph{flowpipes} used for differential
equations~\cite{GoubaultHSCC17-short,GoubaultCAV18}.
Intuitively, we would thus obtain abstract elements in
$(\boundaries\times\boundaries,\geq\!\times\!\leq)\simeq\boundaries^{op}\times\boundaries$,
which would abstract abstract ``intervals whose endpoints are
function'' (which appear concretely in cost analysis as described
Section~\ref{sec:context}), with ``concretisation''
{\footnotesize
\begin{equation}\label{eq:concretisation-singleboundary}
\begin{aligned}
  \boundaries\times\boundaries &\to (\dd\to\intervals(\rr))\\
  (f_{lb}^\sharp,\,f_{ub}^\sharp) &\mapsto
     \Big(\vn\mapsto\big[f_{lb}^\sharp(\vn),\,f_{ub}^\sharp(\vn)\big]\Big).
\end{aligned}
\end{equation}}

Unfortunately, in general, this cannot be turned into a Galois
connection, and $\boundaries^{op}\times\boundaries$ is not a complete
lattice. This simply comes from the fact that, in general, for
$b_1,b_2\in\boundaries$, there is no best approximation in
$\boundaries$ of $\max(b_1,b_2)$ and $\min(b_1,b_2)$, as illustrated
in Fig.~\ref{fig:A-bounds-nogalois} for $\boundaries$ the set of
affine functions in $\nn\to\rr$.
We \emph{do} however recover a Galois connection if we apply a power
lifting and instead consider abstractions by \emph{sets} of
(conjunctions of) $\boundaries$-bounds: the concretisation $\gammaB$
below admits a left adjoint $\alphaB$, which extracts the set of
\emph{all} $\boundaries$-bounds of a function.

\begin{proposition}[$\boundaries$-bounds abstraction]
\label{prop:A-bounds-abstr}
   The following is a Galois connection.
   Abstract elements can be interpreted as \emph{conjunctions} of
   $\boundaries$-bounds.
{\footnotesize
\begin{align*}
  (\dd\to\intervals(\rr),\overset{.}{\sqsubseteq}_\intervals)
  &\galois{\alphaB}{\gammaB}
   (\pp(\boundaries),\supseteq)\times(\pp(\boundaries),\supseteq)
  \\
  \left(\vn\mapsto\left[\max_{f_{lb}^\sharp\in F_{lb}^\sharp}f_{lb}^\sharp(\vn),\;
       \min_{f_{ub}^\sharp\in F_{ub}^\sharp}f_{ub}^\sharp(\vn)\right]\right)
  &\longmapsfrom
  (F_{lb}^\sharp,\,F_{ub}^\sharp)
  \\
  (f_{lb},\,f_{ub})
  &\longmapsto
  \left(
    \begin{array}{l}
      \big\{f_{lb}^\sharp\in\boundaries\,\big|\,
              \forall\vn\in\dd,\,f_{lb}^\sharp(\vn)\leq f_{lb}(\vn)\big\},\\[2pt]
      \big\{f_{ub}^\sharp\in\boundaries\,\big|\,
               \forall\vn\in\dd,\,f_{ub}^\sharp(\vn)\geq f_{ub}(\vn)\big\}
    \end{array}
  \right)
\end{align*}}
\end{proposition}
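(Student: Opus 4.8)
The plan is to verify the defining adjunction equivalence directly, after pinning down the two orders precisely, and then read off all the standard Galois-connection consequences. First I would make the orders explicit. Writing a concrete element of $\dd\to\intervals(\rr)$ as a pair of endpoint functions $(f_{lb},f_{ub})$, with $[f_{lb}(\vn),f_{ub}(\vn)]$ the interval at $\vn$, the pointwise interval order $\overset{.}{\sqsubseteq}_\intervals$ is pointwise \emph{containment}: $(f_{lb},f_{ub})\overset{.}{\sqsubseteq}_\intervals(g_{lb},g_{ub})$ iff $g_{lb}\leq f_{lb}$ and $f_{ub}\leq g_{ub}$ pointwise (wider intervals sit higher). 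On the abstract side the order is componentwise $\supseteq$, so larger sets of boundary functions are lower (more constraints, more precise). Since suprema and infima always exist in $\rri$, $\gammaB$ is well defined, with $\max_\emptyset=-\infty$ and $\min_\emptyset=+\infty$ so that the empty pair concretises to the top element $\vn\mapsto[-\infty,+\infty]$; I would allow generalised (possibly improper) interval endpoints here, as no properness constraint is needed for the argument.

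The core step is a short chain of unfoldings. Writing $\alphaB(f_{lb},f_{ub})=(L,U)$ with $L=\{g\in\boundaries\mid g\leq f_{lb}\}$ and $U=\{h\in\boundaries\mid h\geq f_{ub}\}$, the condition $\alphaB(f_{lb},f_{ub})\sqsubseteq(F_{lb}^\sharp,F_{ub}^\sharp)$ unfolds (using that the abstract order is $\supseteq$) to $F_{lb}^\sharp\subseteq L$ and $F_{ub}^\sharp\subseteq U$, i.e.\ every $g\in F_{lb}^\sharp$ satisfies $g\leq f_{lb}$ and every $h\in F_{ub}^\sharp$ satisfies $h\geq f_{ub}$ (pointwise). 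On the other side, $(f_{lb},f_{ub})\overset{.}{\sqsubseteq}_\intervals\gammaB(F_{lb}^\sharp,F_{ub}^\sharp)$ unfolds, by containment, to $\sup_{g\in F_{lb}^\sharp}g(\vn)\leq f_{lb}(\vn)$ and $f_{ub}(\vn)\leq\inf_{h\in F_{ub}^\sharp}h(\vn)$ for all $\vn$. The only nonformal ingredient is the sup/inf characterisation: for a family $F\subseteq\rri^\dd$ and $c\in\rri^\dd$, one has $\sup_{g\in F}g(\vn)\leq c(\vn)$ for all $\vn$ if and only if $g\leq c$ pointwise for every $g\in F$ (dually for $\inf$). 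Applying this twice turns the right-hand condition into the left-hand condition verbatim, which establishes the adjunction $\alphaB(c)\sqsubseteq a\iff c\overset{.}{\sqsubseteq}_\intervals\gammaB(a)$.

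Having proved the equivalence in both directions, I would conclude that $(\alphaB,\gammaB)$ is a Galois connection; monotonicity of both maps, together with the usual soundness/optimality facts ($\gammaB\circ\alphaB$ extensive, $\alphaB\circ\gammaB$ reductive), then follow automatically. I would remark that the whole construction factors as a product of two dual adjunctions, one governing lower bounds and one upper bounds, the latter obtained from the former by reversing the order on $\rr$; and that the essential point is the power lifting to $\pp(\boundaries)$, which repairs the failure of $\boundaries^{op}\times\boundaries$ to be a complete lattice noted just before the statement. Indeed, an alternative route is to observe that $\gammaB$ preserves arbitrary meets (which, under $\supseteq$, are unions), so a left adjoint exists by the standard complete-lattice criterion, and the displayed $\alphaB$ is exactly that adjoint. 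The main obstacle is not depth but bookkeeping: keeping every order direction straight, since both $\supseteq$ and the interval containment order run opposite to the naive intuition, and correctly handling the $\rri$-valued suprema and infima over possibly empty or infinite families. Once the orders are fixed, the adjunction is essentially definitional.
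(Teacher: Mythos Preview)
The paper states this proposition without proof, treating it as a routine verification; your direct unfolding of the adjunction condition is correct and is exactly the standard argument one would expect here. Your bookkeeping of the two reversed orders (interval containment on the concrete side, $\supseteq$ on the abstract side) is accurate, and the reduction to the pointwise sup/inf characterisation is the only substantive step, which you handle correctly, including the edge cases with empty families and $\rri$-valued endpoints.
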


\begin{figure}
  \begin{center}
  \begin{tabular}{p{5cm} p{5cm}}
  {\begin{tikzpicture}[scale=0.5]
    \begin{scope}[transparency group]
        \begin{scope}[blend mode=multiply]
  \draw[very thin,color=gray] (-0.1,-0.1) grid (4.4,4.4);
  \draw[->] (-0.2,0) -- (4.5,0) node[right] {\scriptsize $n$};
  \draw[->] (0,-0.2) -- (0,4.5) node[above] {\scriptsize $f(n)$};
  \foreach \x in {0,...,4} {
        \node [anchor=north] at (\x,-0.5-1ex) {\tiny $\x$};
    }
    \foreach \y in {0,...,4} {
        \node [anchor=east] at (-0.5-0.5em,\y) {\tiny $\y$};
    }
    \draw[-, very thick, blue] (0, 0) -- (4.4,4.4);
    \draw[-, very thick, blue] (0, 1) -- (4.4,1);
    \fill[fill=blue!20] (0,0)--(1,1)--(4.4,1)--(4.4,4.4)--(0,4.4);
        \end{scope}
    \end{scope}
  \end{tikzpicture}}
  &
  {\begin{tikzpicture}[scale=0.5]
    \begin{scope}[transparency group]
        \begin{scope}[blend mode=multiply]
  \draw[very thin,color=gray] (-0.1,-0.1) grid (4.4,4.4);
  \draw[->] (-0.2,0) -- (4.5,0) node[right] {\scriptsize $n$};
  \draw[->] (0,-0.2) -- (0,4.5) node[above] {\scriptsize $f(n)$};
  \foreach \x in {0,...,4} {
        \node [anchor=north] at (\x,-0.5-1ex) {\tiny $\x$};
    }
    \foreach \y in {0,...,4} {
        \node [anchor=east] at (-0.5-0.5em,\y) {\tiny $\y$};
    }
    \draw[-, very thick, black] (0,1) -- (1,1) -- (4.4,4.4);
    \draw[-, very thick, blue] (0,1) -- (3.4,4.4);
    \fill[fill=blue!20] (0,1) -- (3.4,4.4) -- (0,4.4);
        \end{scope}
    \end{scope}
  \end{tikzpicture}}
  \end{tabular}
  \end{center}
  \caption{\footnotesize(Left) $\min(n, 1)$ has no
    best affine upper bound, but is fully represented by a conjunction
    of two bounds.
    (Right) $\max(n,1)$ cannot be represented exactly as a
    conjunction of affine upper bounds.
  \label{fig:A-bounds-nogalois}}
\end{figure}
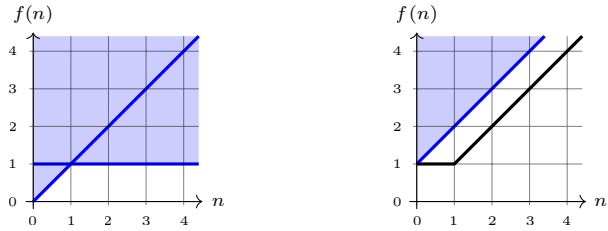

With the concretisation above in mind, we call
$(\pp(\boundaries),\supseteq)\times(\pp(\boundaries),\supseteq)$ the
\emph{domain of $\boundaries$-bounds}. We talk about the \emph{domain
of $\boundaries$-lbs} and \emph{domain of $\boundaries$-ubs} when
extracting the first or second component of the tuple respectively.
In a way, the discussion above shows that the \emph{ideal} solution to
the problem of finding the best $\boundaries$-bounds is to simply
consider \emph{all} $\boundaries$-bounds, or more efficiently an
antichain generating them. In other words, while we cannot optimally
abstract an interval whose endpoints are general functions by
intervals whose endpoints are in $\boundaries$, it can be done for
intervals whose endpoints are \emph{piecewise} in $\boundaries$, or
more precisely maxima and minima of elements in $\boundaries$.

Indeed, it is sufficient to focus our attention on elements in the
image of the closure $\gammaB\circ\alphaB$. All of these elements are
upsets and downsets, and can in many cases be represented by their
extremal elements. In favourable cases, extremal elements are even in
finite number, which allows to finitely represent elements of this
abstract domain.

\begin{remark}
   Upsets and downs are sets closed by taking upper and lower bounds
   respectively.
   We write $\uparrow\;:S\mapsto\cup_{s\in S}\{f\in\boundaries\,|s\leq f\}$
   the $(\pp-)$upper closure operator, and similarly for $\downarrow$.
   For an upset $\uparrow\mathrm{U}\subseteq\pp(\boundaries)$ (or
   symmetrically for downsets), we say that $\mathrm{U}$ is a set of
   \emph{extremal} constraints when it minimal among all $\mathrm{U}'$
   such that $\uparrow\mathrm{U}=\;\uparrow\mathrm{U}'$, which is the
   case if and only if $\mathrm{U}$ is an antichain.
   Extremal constraints are represented in dark blue in
   Fig.~\ref{fig:A-bounds-nogalois}.
   The Galois connection of Proposition~\ref{prop:A-bounds-abstr} may
   equivalently be rephrased as a connection between
   $(\dd\to\intervals(\rr),\overset{.}{\sqsubseteq}_\intervals)$ and
   $(\pp_\downarrow(\boundaries),\supseteq)\times(\pp_\uparrow(\boundaries),\supseteq)$,
   where $\pp_\downarrow(\boundaries)$ and $\pp_\uparrow(\boundaries)$
   respectively refer to the sets of downsets and upsets.
\end{remark}

With this finite representation in mind, we can conveniently compute
\emph{sound} (but non-optimal) transfer functions for operators on
functions, by focusing computing the transfer $\boundaries\to\pp_\uparrow(\boundaries)$
for each extremal element, and passing to the union.
We can introduce normalisation operations to keep small the number of
generators in a representation of each upsets.
To make this practical, we can force representations to stay bounded,
by discarding some generators if necessary (this is a sound
overapproximation, similar to heuristics used for operating on sets of
polyhedra in~\cite{milanese24}).
Choosing to work with only one bound at a time then appears as the
coarsest instance of this (non-optimal) abstraction, using singletons
(i.e. principal upsets $\uparrow\,\{f\}$).

A very simple instantiation of these ideas is to work with
$\boundaries$ being a set of affine bounds. In this case, the domain
is subsumed by polyhedra, and is an abstraction of it, as illustrated
in Fig.~\ref{fig:polyhedra-to-fun}.
However, the approach can be extended beyond affine relationship, e.g.
to piecewise polynomials or arithmetico-geometric relationships.
This leads to interesting questions for computing joins, meets, and
transfer functions in this context (connecting, among others,
with~\cite{Urban-phd}). We note that, in our intuition and experience,
these computations appear easier to perform for such domains of
(functional) bounds than for classical (relational) numerical domains.
We interpret this observation as the consequence of a generic Galois
connection between such domains. We illustrate this connection in the
example of polyhedra and affine upper bounds, whose abstraction
factorises through the following, which may be viewed as an
abstraction from \emph{arbitrary} relations to only relations defined
by (functional) upper bounds.
\begin{proposition}[Relational to functional constraint abstractions -- ub]
  \label{prop:rel2fun-cnstr}
  Let $X$ by any set, $(Y,\leq)$ be a lattice, and $X\to Y$ be equipped with
  the pointwise order.
  The following is a Galois connection.
  {\footnotesize
  \begin{align*}
  (\pp(X\times Y),\subseteq)
  &\galois{\alphafun}{\gammafun}
  (\pp_\uparrow(X\to Y),\supseteq)
  \\
  S &\longmapsto
   \Big\{f\,\Big|\,\forall x \in X, f(x) \geq \sup_{(x,y)\in S}y \Big\}
  \\
  \Big\{(x,y)\,\Big|\,\forall y \in A,\, y\leq f(x)\Big\}
  &\longmapsfrom A
\end{align*}}
\end{proposition}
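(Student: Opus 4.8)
The plan is to recognise this statement as an instance of the standard Galois connection (Birkhoff polarity) induced by a single binary relation, which makes the verification almost mechanical. First I would fix the relation $R\subseteq(X\times Y)\times(X\to Y)$ defined by $((x,y),f)\in R \iff y\leq f(x)$, and read the displayed $\gammafun$ as $\gammafun(A)=\{(x,y)\mid\forall f\in A,\,y\leq f(x)\}$ (the bound variable in ``$\forall y\in A$'' is a typo for $f$). The abstraction $\alphafun$ I would first rewrite without the supremum, as $\alphafun(S)=\{f\mid\forall(x,y)\in S,\,y\leq f(x)\}$; this is the genuinely operative description, coinciding with the $\sup$-form exactly when the relevant suprema exist, since $f(x)\geq\sup\{y\mid(x,y)\in S\}$ holds iff $f(x)\geq y$ for every $y$ with $(x,y)\in S$. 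With these descriptions, $\alphafun(S)=\{f\mid\forall(x,y)\in S,\,((x,y),f)\in R\}$ and $\gammafun(A)=\{(x,y)\mid\forall f\in A,\,((x,y),f)\in R\}$ are precisely the two derivation operators of $R$.

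The adjunction then reduces to swapping two universal quantifiers. Concretely, for $S\in\pp(X\times Y)$ and $A\in\pp_\uparrow(X\to Y)$ I would establish the chain
\[
A\subseteq\alphafun(S)\iff\big(\forall f\in A\;\forall(x,y)\in S:\,y\leq f(x)\big)\iff\big(\forall(x,y)\in S\;\forall f\in A:\,y\leq f(x)\big)\iff S\subseteq\gammafun(A),
\]
where the middle equivalence is just the commutation of the two $\forall$. Since the order on the abstract side is $\supseteq$, the defining inequality $\alphafun(S)\sqsubseteq A$ of the Galois connection reads precisely $\alphafun(S)\supseteq A$, i.e. $A\subseteq\alphafun(S)$, so the chain above is exactly $\alphafun(S)\sqsubseteq A\iff S\sqsubseteq\gammafun(A)$. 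I would state this $\supseteq$-on-the-abstract-domain convention explicitly, as it is what turns the antitone Birkhoff polarity into a monotone Galois connection and is the easiest place to slip a sign.

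It then remains to check that the maps are well-typed. For $\alphafun$ I would verify that $\alphafun(S)$ is always an \emph{upset}, so that $\alphafun$ lands in $\pp_\uparrow(X\to Y)$: if $f\in\alphafun(S)$ and $f\leq g$ pointwise, then $g(x)\geq f(x)\geq y$ for every $(x,y)\in S$, hence $g\in\alphafun(S)$. The target of $\gammafun$ is trivially all of $\pp(X\times Y)$. Monotonicity of both maps (for $\subseteq$ on the left, $\supseteq$ on the right) is automatic from the adjunction, but also immediate directly: enlarging $S$ adds constraints and shrinks $\alphafun(S)$, and enlarging $A$ adds constraints and shrinks $\gammafun(A)$.

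The only genuine subtlety, and the point I would be most careful about, is the supremum in the displayed form of $\alphafun$. In a general lattice $Y$ the fibre $\{y\mid(x,y)\in S\}$ need not have a supremum, and for an empty fibre $\sup\emptyset$ is the bottom of $Y$, which also need not exist; so the $\sup$-formula is literally meaningful only when $Y$ is a complete lattice (as in the intended instance $Y=\rri$). I would therefore either take $(Y,\leq)$ complete or adopt the quantified reformulation of $\alphafun$ as the definition. Under either reading the argument is unaffected, since the proof of the adjunction never invokes the supremum itself, only the pointwise comparisons $y\leq f(x)$.
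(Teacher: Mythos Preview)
Your proof is correct. The paper states this proposition without proof, so there is nothing to compare against at the level of argument; your reduction to a Birkhoff polarity via the relation $R=\{((x,y),f)\mid y\leq f(x)\}$ and the quantifier-swap verification of the adjunction is the standard and appropriate route. You also correctly caught the typo in the displayed $\gammafun$ (the bound variable should be $f\in A$, not $y\in A$) and the gap in the hypotheses (the $\sup$-formulation of $\alphafun$ needs $Y$ complete, which the paper does not assume but clearly intends, given the ambient setting $Y=\rri$). Your quantified reformulation of $\alphafun$ is the clean fix and makes the argument go through for an arbitrary poset $Y$.
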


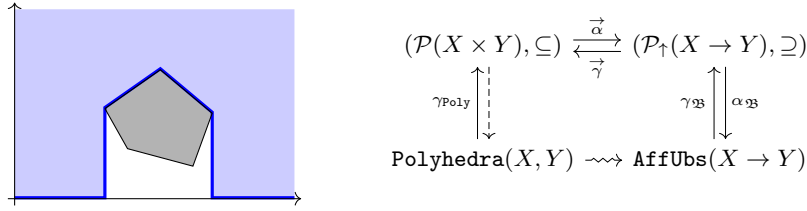
\begin{figure}[t]
  \vspace{-1em}
  \centering
  \begin{minipage}[c]{0.4\linewidth}
    {\begin{tikzpicture}[scale=0.3]
    \begin{scope}[transparency group]
        \begin{scope}[blend mode=multiply]
  \coordinate (zero) at (0,0);
  \coordinate (R) at (12.4,0);
  \coordinate (T) at (0,8.4);
  \coordinate (p1) at (4,4);

  \draw[->] ($(zero)+(-0.3,0)$) -- ($(R)+(0.3,0)$) node[right] {};
  \draw[->] ($(zero)+(0,-0.3)$) -- ($(T)+(0,0.3)$) node[above] {};
  \path[fill=gray!60,draw=black] (p1)
       -- ++(35:3) coordinate(p2)
       -- ++(-40:3) coordinate(p3)
       -- ++(-110:2.5) coordinate(p4)
       -- ++(165:3) coordinate(p5)
       -- (p1);
  \path[fill=blue!20]
    (zero)--(zero-|p1)--(p1)--(p2)--(p3)--(zero-|p3)--(R)--($(R)+(T)$)--(T)--(zero);
  \coordinate (ybias) at (0,1.2pt); %
  \draw[-, very thick, blue]
    ($(zero)+(ybias)$)--($(zero-|p1)+(ybias)$)--($(p1)+(ybias)$)--($(p2)+(ybias)$)--($(p3)+(ybias)$)--($(zero-|p3)+(ybias)$)--($(R)+(ybias)$);
        \end{scope}
    \end{scope}
  \end{tikzpicture}}
  \end{minipage}
  \begin{minipage}[c]{0.55\linewidth}
    {%
      \begin{tikzcd}[row sep=3em,column sep=1.5em]
        (\pp(X\times Y),\subseteq)
        \arrow[d, dashrightarrow, shift left]
        \arrow[r, "\alphafun", shift left]
        &
        (\pp_\uparrow(X\to Y),\supseteq)
        \arrow[d, "\alphaB", shift left]
        \arrow[l, "\gammafun", shift left]
        \\
        \texttt{Polyhedra}(X,Y)
        \arrow[u, "\gamma_{\texttt{Poly}}", shift left]
        \arrow[r, rightsquigarrow]
        &
        \texttt{AffUbs}(X\to Y)
        \arrow[u, "\gammaB", shift left]
      \end{tikzcd}
    }
  \end{minipage}
  \caption{(Left) An illustration of the precision loss when representing a
    polyhedron with online (affine) upper bounds. (Right) Schematic
    representation of the factorisation of this abstraction
    ($\rightsquigarrow$) through that ($\alphafun$) of
    Proposition~\ref{prop:rel2fun-cnstr}.
    \label{fig:polyhedra-to-fun}}
  \vspace{-2em}
\end{figure}

Finally, we can mention that, to realise the first two points
discussed at the the beginning of this section, we can consider
computing explicitly sound abstractions
$(\dd\to\rr)\to\pp_\uparrow(\boundaries)$ themselves, rather than
simply transfer functions, at least for input functions given in a
selected expression syntax.
This is well studied problems for particular $\boundaries$ (i.e.
linearisation of expressions, best abstractions by polynomials, etc.),
but is more difficulty to handle uniformly in general.

We could consider to \emph{compute such abstraction by rewriting},
using rewriting rules of the form used
in~\cite{Albert2015-cost-func-comp,resource-verification-tplp18-shortest}
or more generally in CAS.
We can note that more precise rewriting rules can be designed if we
allow them to require hypotheses to be triggered, e.g.
(anti)monotonicity of a subexpression. For this, an interesting
direction could be to design simple abstract domains operating on
numerical expressions, e.g. to infer (piecewise) (anti)monotonicity.

\end{document}